  \newtheorem{notation}[theorem]{Notation}
\newcommand{\pw}{\operatorname{pw}}
\newcommand{\sm}{\setminus}
\newcommand{\cO}{{\cal O}}
\newcommand{\tQ}{\tilde{Q}}
\title{\textsc{Treewidth} and \textsc{Pathwidth} parameterized by vertex cover}
\author{
Mathieu Chapelle\inst{1} \and
Mathieu Liedloff\inst{2} \and 
Ioan Todinca\inst{2} \and 
Yngve Villanger\inst{3}
}
\institute{
IGM-LabInfo, UniversitŽ Paris-Est Marne-la-Vall\'ee, 5 Bd Descartes - Champs sur Marne
77454 Marne la Vall\'ee cedex 2,
France, \texttt{mathieu.chapelle@univ-mlv.fr}
\and 
LIFO, Universit{\'e} d'Orl{\'e}ans, BP 6759, F-45067 Orl{\'e}ans Cedex 2, France,\\ \texttt{(mathieu.liedloff $\mid$ ioan.todinca)@univ-orleans.fr}
\and
Department of Informatics, University of Bergen, N-5020 Bergen, Norway,\\ \texttt{yngve.villanger@uib.no}
}
\date{\today}
\begin{document}
\pagestyle{plain}

\maketitle

\begin{abstract}
After the number of vertices, 
{\it Vertex Cover} is the largest of the classical graph parameters and 
has more and more frequently been used as a separate parameter in 
parameterized problems, including problems that are not directly related to the {\it Vertex Cover}.
Here we consider the {\sc treewidth} and {\sc pathwidth} problems parameterized by 
$k$, the size of a minimum vertex cover of the input graph. We show that the 
{\sc pathwidth} and {\sc treewidth} can be computed in $O^*(3^k)$ time. 
%In addition we show that {\sc pathwidth} can be computed in $O^*(2^{k'})$ where 
%$k'$ is the size of the minimum vertex cover of the complement graph. 
This complements %a previous $O^*(2^{k'})$ time algorithm for {\sc treewidth} along with 
recent polynomial kernel results for {\sc treewidth} and {\sc pathwidth} 
parameterized by the {\it Vertex Cover}. 
\end{abstract}

\section{Introduction}

Parameterized algorithms are typically used in the setting where the provided 
problem is {\sc NP}-hard and we want to bound the exponential part of the running time 
to a function of some specific parameter. 
This parameter can be any property related to the 
input, the output, or the problem itself. A classical parameter is $n$, the size of the input 
or the number of vertices in the input graph. Algorithms of this type are usually refered to as 
{\it moderately exponential time} algorithms \cite{FominK10}, and in many cases it is non trivial to improve the exponential dependence on $n$ to something better than the 
naive brute force bound. 

The number of vertices is not the only natural graph parameter; there are also parameters like 
{\it treewidth}, 
{\it feedback vertex set}, and 
{\it vertex cover}. For every graph, there is an increasing order on these parameters:  
{\it treewidth} is the smallest, and then 
{\it feedback vertex set}, 
{\it vertex cover} and eventually $n$ come in this order. %is the largest. 
We refer to Bodlaender et. al.~\cite{BodlaenderJK11} for more parameters and the relation between them. 
Many moderately exponential time algorithms have an exponential dependence on $n$ that 
is of the form $c^n$ for some constant $c<2$. 
%Something that means that these kind of algorithms might even be 
%preferable to use for inputs of suitable size. 
When the exponential part of the running time is bounded by one of the other graph 
parameters, we typically see a much faster growing function than we do for parameter $n$. 
Thus, we have reached a situation where tradeoffs can be made between the size the parameter we choose and the exponential dependence on this parameter.

We  use a modified big-Oh notation that suppresses all other (polynomially bounded) terms.
Thus for functions $f$ and $g$ we write $f(n,k)=O^*(g(n,k))$ if
$f(n,k)=O(g(n,k)\cdot n^{\cO(1)})$.
Consider the problems of computing the {\sc treewidth} or the  {\sc pathwidth} of a 
 given graph $G$. For parameter $n$ both these values can be computed in $O^*(2^n)$ 
by a  dynamic programming approach proposed by Held and Karp~\cite{HeldK62}. Currently the best moderately exponential time 
algorithms for these problems have $O(1.735^n)$ \cite{FominV10} and $O(1.89^n)$ \cite{KitsunaiKKTT12} running times respectively. 
On the other hand if we go to the smaller parameters {\it treewidth} and 
{\it pathwidth} the best known running times are of the from $O^*(2^{O({k^3})})$~\cite{Bodlaender96}.  
%\ioan{Does anybody know the constant of the best algo for tw parameterized by tw? I heard of $2^{27k^3}$ and I believe that the constant is even worse than the cube. I don't find any reference.}
Thus, it is preferable to use the $O^*(2^{O({k^3})})$ algorithm parameterized by {\it treewidth} if the {\it treewidth} is $O(n^{1/3})$, and the algorithms parameterized by $n$ otherwise. 
In this paper we are considering {\it vertex cover} as a parameter for 
the {\sc treewidth} or the {\sc pathwidth} problems. Our objective is then to find the 
most efficient algorithm for these two problems where the exponential part of the 
running time is only depending on the size of the {\it vertex cover}. 

Using the size of the {\it vertex cover} as a parameter when analyzing algorithms and 
solving problems is not a new idea. Some examples from the literature  are 
an  $O^*(2^k)$ algorithm for {\sc cutwidth} parameterized by {\it vertex cover} \cite{CyganLPPS11a},
an $O^*(2^k)$ algorithm for {\sc chordal graph} sandwich parameterized by the {\it vertex cover} of an edge set \cite{HeggernesMNV10}, and 
different variants of graph layout problems parameterized by {\it vertex cover} \cite{FellowsLMRS08}. 

Another direction in the area of parameterized complexity is kernelization or instance 
compression. Recently it was shown \cite{Drucker12} that we can not expect that the {\sc treewidth} and {\sc pathwidth} problems have a polynomial kernel unless $NP \subseteq coNP/poly$ when parameterized by {\it treewidth} or {\it path\-width}, but on the other hand they do have a kernel of size 
$O(k^3)$ when parameterized by {\it vertex cover}~\cite{BodlaenderJK11,BodlaenderJK12}.
Existence of a polynomial size kernel does not necessarily imply the existence of an algorithm that has a slow growing exponential function in the size of the parameter. 
Indeed if we first kernelize then use the best moderately exponential time algorithm of~\cite{FominV10} on the kernel, we still obtain an $O^*(2^{O({k^3})})$ algorithm for
\textsc{treewidth}
parameterized by {\it vertex cover}. % which is  
%independently if we are choosing the algorithm using {\it treewidth} \cite{Bodlaender96} as the parameter 
%or the one using the number of vertices as the parameter \cite{FominV10} we still end up with a $O^*(2^{O({k^3})})$ time algorithm, 
Hence dependence in the parameter is still similar to the algorithm parameterized by \textit{treewidth}~\cite{Bodlaender96}. 

\paragraph{Our results.}
We provide an $O^*(3^{k})$ time algorithm for {\sc pathwidth} and 
{\sc treewidth} when parameterized by $vc$ the size of the {\it vertex cover}. 
It means that this algorithm will be preferable for graphs where the 
{\it treewidth} is $\Omega(vc^{1/3})$ and the vertex cover is at most $0.5n$ and $0.58n$ 
for the {\sc treewidth} and {\sc pathwidth} problems respectively. 
Another consequence is that the {\sc treewidth} and {\sc pathwidth} of a bipartite graph can be computed in $O^*(3^{n/2})$ or $O^*(1.733^{n})$ time, which is better than the running time provided by the corresponding moderate exponential time algorithms ($O^*(1.735^n)$ \cite{FominV10} and $O^*(1.89^n)$ \cite{KitsunaiKKTT12} respectively). 
We point out that for {\sc treewidth}, we first provide an $O^*(4^{k})$ algorithm based on dynamic programming. The algorithm is then modified to obtain a running time of 
$O^*(3^{k})$, and for this purpose we use the subset convolution technique introduced in~\cite{BHKK07}.

In addition to this we also show (in Appendix~\ref{se:pwcomp}) that the {\sc pathwidth} can be computed in 
$O^*(2^{k'})$ time where $k'$ is the {\it vertex cover} size of the complement of the graph. 
This matches the result of~\cite{BodlaenderFKKT06} for {\sc treewidth} parameterized 
by the {\it vertex cover} size of the complement of the graph. 

%The results are summarized in the table below, where $vc(G)$ (resp. of $vc(\overline{G})$) denotes the minimum size of a vertex cover of $G$ (resp. of $\overline{G}$).
%
%\begin{center}
%\begin{tabular}{r|c|c}
%& $k=\vc(G)$ & $k'=\vc(\overline{G})$\\\hline
%%\textsc{Cutwidth}& $2^k$\cite{CyganLPPS11} & $2^{k'}$\\\hline
%\textsc{Pathwidth}& $3^k$ & $2^{k'}$\\\hline %\footnote{Should be a little better}\\\hline
%\textsc{Treewidth}& $3^k$ & $2^{k'}$\cite{BodlaenderFKKT06}
%\end{tabular}
%\end{center}

\section{Preliminaries}
%%%%%%%%%%%%%%%%%%%%%%%%%%%%%%%%%%%%%%%%%%%%%%%%%%%%%%%%%%%%%%%%%%%%%%%%%%%%%%%%%%%%%%%%%%%%%%%%%%%%%%%%%%%%%%%%

All graphs considered in this article are simple and undirected. For a graph $G=(V,E)$
we denote by $n = |V|$ the number of vertices and by $m = |E|$ the number of edges.
The neighborhood of a vertex $v$ is defined as $N(v) = \{u \in V:\{u,v\} \in E\}$,
and the closed neighborhood is defined as $N[v] = N(v) \cup \{v\}$. 
%The cardinality $|N(v)|$ is called the degree of $v$. 
For a vertex set $W$, we define its neighborhood as $N(W) = \bigcup_{v \in W} N(v) \setminus W$,
and its closed neighborhood as $N[W] = N(W)\cup W$. 
A vertex set $C \subseteq V$ in a graph $G=(V,E)$ is called a 
{\it vertex cover} if for every edge $uv \in E(G)$ we have that 
vertex $u$ or $v$ is in $C$. By extension, the minimum size of a vertex cover of $G$ is usually called the {\it vertex cover} of $G$. %\ioan{ugly but true} 
Vertex set $X$ is called a clique of $G$ if for each pair $u,v \in X$ 
we have that $uv \in E$.

% 
% 
% \subsection{General definitions}

\begin{proposition}[\cite{ChenKX10}]\label{pr:VC}
The minimum vertex cover problem can be solved in time $O^*(1.28^k)$, where $k$ is the vertex cover of the input graph. 
\end{proposition}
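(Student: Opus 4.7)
The plan is to design a bounded search tree algorithm whose recursion tree has size $O^*(1.28^k)$, where $k$ is the parameter bounding the vertex cover size. Each node of the tree corresponds to an instance $(G,k)$, and we either reduce $(G,k)$ to a smaller instance in polynomial time, or branch into constantly many subinstances whose parameter drops give a branching vector of characteristic root at most $1.28$. The running time then follows from the standard analysis of branching recurrences.

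First I would install a collection of reduction rules. Isolated vertices are removed. If $v$ has degree $1$, then its unique neighbor is placed into the cover and $k$ decreases by $1$. If $v$ has degree $2$ with neighbors $u,w$, then either $uw \in E$, in which case both $u$ and $w$ are placed into the cover (any minimum cover must contain at least two vertices of $\{u,v,w\}$), or else we apply the folding operation that merges $\{u,v,w\}$ into a single vertex, decreasing $k$ by $1$. An LP/crown-decomposition based rule is used to further assume the LP optimum on the residual graph equals $k$, so every remaining vertex is in some minimum cover.

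After exhaustive reduction, every vertex has degree at least $3$. The core branching rule picks a vertex $v$ of maximum degree $d$ and considers the two alternatives: include $v$ in the cover (parameter drops by $1$), or exclude it, forcing all $d$ neighbors of $v$ into the cover (parameter drops by $d$). This gives branching vector $(1,d)$, whose characteristic root already falls below $1.28$ once $d \ge 6$. For very low degrees, the bound is too weak if applied naively: the plain $(1,3)$ rule only yields roughly $1.4656^k$.

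The main obstacle will therefore be the regime where the maximum degree is $3$, $4$, or $5$, and in particular the cubic case. Here I would introduce a refined case analysis based on the structure of $N(v)$ and $N^2(v)$: whether $N(v)$ is independent, contains an edge, shares common neighbors, forms part of a triangle, and so on. For each configuration one designs a tailored branching rule whose vector has characteristic root at most $1.28$, typically by exploiting that in the "include $v$" branch the reduction rules immediately trigger again on the remaining degree-$\le 2$ structure, thereby pushing the effective parameter drop above the naive count. Verifying that the collection of such rules covers all residual configurations, and that each yields a branching vector whose characteristic root is bounded by $1.28$, is the technically hardest part of the proof, and is essentially the content of the analysis of Chen, Kanj and Xia~\cite{ChenKX10}.
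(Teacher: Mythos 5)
The paper offers no proof of this proposition: it is imported verbatim from \cite{ChenKX10} and used as a black box, so there is nothing internal to compare your argument against. Judged on its own terms, your write-up correctly identifies the framework of the cited result (degree-0/1/2 reduction rules, vertex folding, an LP/crown rule to reach a reduced instance, then branching on a maximum-degree vertex with vector $(1,d)$), and correctly locates the difficulty in the degree-$3$ to degree-$5$ regime. But the proof stops exactly where the real work begins. As written, the only branching you actually justify is the generic $(1,d)$ rule, which in the cubic case gives a recurrence with root about $1.4656$, far above $1.28$; everything that closes this gap is deferred with the phrase that it ``is essentially the content of the analysis of Chen, Kanj and Xia.'' That is a citation, not a proof, and it leaves the quantitatively essential part --- the exhaustive configuration analysis around low-degree vertices --- entirely unestablished.

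There is also a technical inaccuracy in how you describe what that missing analysis would look like. You assert that for each low-degree configuration one designs a local rule whose branching vector has characteristic root at most $1.28$. The $O(1.2738^k)$ bound of \cite{ChenKX10} is not obtained that way: some individual branching steps in their algorithm are worse than $1.28$, and the bound is recovered only through an amortized analysis across consecutive branching steps (their accounting of ``good pairs'' and tuples), together with the interleaved reductions. A purely local, per-configuration branching-vector analysis of the kind you sketch historically achieves roughly $O(1.2852^k)$, which does not meet the claimed $O^*(1.28^k)$. So even the \emph{shape} of the missing argument needs to be different from what you describe.
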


% \subsection{\textsc{Pathwidth} and \textsc{Treewidth}}

%\noindent{\bf \textsc{Pathwidth} and \textsc{Treewidth}}.
We now define \emph{tree} and \emph{path decompositions}.
%Consider a graph $G=(V,E)$. 
A \emph{tree decomposition} of a graph $G=(V,E)$ is a pair $(T, {\cal X})$ where $T=(I,F)$ is a tree and ${\cal X}=\{X_i \mid i \in I\}$ is a family of subsets of $V$, called \emph{bags}, where 
\begin{itemize}
\item $V = \bigcup_{i \in I} X_i$, 
\item for each edge $uv \in E$ there exists an $i \in I$ such that $u,v \in X_i$, and 
\item for each vertex $v \in V$ the nodes $\{i \in I \mid v \in X_i\}$ induce a (connected) subtree of $T$.
\end{itemize}
The \emph{width} of the tree decomposition $(T, {\cal X})$ is $\max_{i \in I} |X_i|-1$ (the maximum size of a bag, minus one) and the \emph{treewidth} of $G$ is the minimum width over all tree decompositions of $G$. 

A \emph{path decomposition} of $G$ is a tree decomposition $(T, {\cal X})$ such that the tree $T$ is actually a path. The \emph{pathwidth} of $G$ is the minimum width over all path decomposition of $G$. 

The following result is a straightforward consequence of Helly's property for a family of subtrees of a tree.
\begin{proposition}\label{pr:hellyc}
Let $(T, {\cal X})$ be a tree decomposition of graph $G=(V,E)$. Let $H=(V,F)$ be the graph such that $xy \in F$ if and only if there exists a bag of the decomposition 
containing both $x$ and $y$. A set $W \subseteq V$ of vertices induces a clique in $H$ if and only if there is a bag $X_i \in {\cal X}$ such that $W \subseteq X_i$.
\end{proposition}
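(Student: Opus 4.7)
The backward direction is immediate: if $W \subseteq X_i$ for some bag, then any two distinct vertices $x,y \in W$ share the bag $X_i$, so $xy \in F$, and hence $W$ is a clique in $H$. Thus the only substantial content is the forward direction, and the plan is to derive it directly from Helly's property for subtrees of a tree.

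For each vertex $v \in V$, define $T_v$ to be the subgraph of $T$ induced by $\{i \in I : v \in X_i\}$. By the third axiom of a tree decomposition, $T_v$ is a (non-empty) subtree of $T$. Now suppose $W \subseteq V$ induces a clique in $H$. For any two vertices $u, v \in W$, the edge $uv \in F$ means, by definition of $H$, that some bag $X_i$ contains both $u$ and $v$; equivalently $i \in T_u \cap T_v$, so the subtrees $T_u$ and $T_v$ have non-empty intersection.

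At this point I would invoke the Helly property: a finite family of pairwise intersecting subtrees of a tree has a common vertex, i.e., $\bigcap_{v \in W} T_v \neq \emptyset$. Picking any node $i$ in this intersection yields a bag $X_i$ that contains every vertex of $W$, which is exactly the desired conclusion. If a self-contained justification is wanted rather than citing Helly's property as folklore, one can prove it by a short induction on $|W|$: root $T$ arbitrarily, let $i_v$ be the unique node of $T_v$ closest to the root, and observe that the $T_v$'s all contain the node $i_{v^*}$ where $v^*$ maximizes the depth of $i_v$; this follows because every other $T_{v'}$ meets $T_{v^*}$, and the intersection of two subtrees that both reach above $i_{v^*}$ must include $i_{v^*}$ itself.

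I do not foresee a real obstacle here, since Helly's property for subtrees of a tree is classical; the only mild subtlety is checking that the $T_v$'s are subtrees (guaranteed by the tree decomposition axioms) and that pairwise intersection translates into the clique condition in $H$ in a way that is symmetric in the two directions (it does, by the definition of $F$).
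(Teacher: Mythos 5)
Your proof is correct and takes exactly the route the paper intends: the paper omits the proof, stating only that the proposition is "a straightforward consequence of Helly's property for a family of subtrees of a tree" and citing a textbook, and your argument (pairwise-intersecting subtrees $T_v$ for $v\in W$, then Helly via the deepest highest-node) is the standard instantiation of that remark. No gaps.
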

It is well-known that the graph $H$ constructed above is a chordal graph (or an interval graph if we replace tree decomposition by path decomposition), but we will not use this here. 
See e.g.~\cite{Gol80} for more details on these graphs and a proof of the previous proposition.

Let $i$ be a node of an arbitrarily rooted tree decomposition $(T,{\cal X})$. Let $T_i$ be the subtree of $T$ rooted in $i$. We denote by $V_i$ the union of bags of the subtree $T_i$. We let $L_i = V_i \setminus X_i$ ($L$ like ``lower'') and $R_i = V \setminus V_i$ ($R$ like ``rest''). Clearly, $(L_i,X_i,R_i)$ is a partition of $V$. 

%Another useful property of tree decompositions is the following:
\begin{proposition}[\cite{Bodlaender97}]\label{pr:sep}
Let $(T, {\cal X})$ be a tree decomposition of graph $G=(V,E)$. The bag $X_i$ separates, in graph $G$, any two vertices $a \in L_i$ and $b \in R_i$, i.e. $a$ and $b$ are in different components of $G[V \setminus X_i]$.
\end{proposition}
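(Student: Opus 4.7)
The plan is to argue by contradiction. Suppose some path $P = a = w_0, w_1, \dots, w_\ell = b$ exists in $G[V \setminus X_i]$ from $a \in L_i$ to $b \in R_i$. Since $w_0 \in L_i \subseteq V_i$ and $w_\ell \in R_i = V \setminus V_i$, and since neither endpoint lies in $X_i$, there must be a consecutive pair $w_p, w_{p+1}$ on $P$ with $w_p \in L_i$ and $w_{p+1} \in R_i$ (as we walk along $P$, once we leave $V_i$ we are in $R_i$, and the transition happens at an edge). Call this edge $uv$ with $u \in L_i$ and $v \in R_i$.

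Next I would invoke the second axiom of tree decompositions: there exists a node $j \in I$ such that $\{u,v\} \subseteq X_j$. Now I split into two cases according to whether $j$ lies in the subtree $T_i$ or not. If $j \in T_i$, then $v \in X_j \subseteq V_i$, contradicting $v \in R_i = V \setminus V_i$. If $j \notin T_i$, then on the one hand $u \in L_i = V_i \setminus X_i$ means $u$ belongs to some bag $X_{i'}$ with $i' \in T_i$; on the other hand $u \in X_j$ with $j \notin T_i$. Then I apply the third axiom: the set $\{k \in I \mid u \in X_k\}$ induces a connected subtree of $T$. Any path in $T$ from $i'$ (inside $T_i$) to $j$ (outside $T_i$) must pass through the root $i$ of the subtree $T_i$, so $i$ belongs to this connected set, i.e.\ $u \in X_i$. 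This contradicts $u \in L_i$.

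There is no real obstacle here beyond bookkeeping; the only subtle point is making precise that any path in $T$ from a node inside $T_i$ to a node outside $T_i$ must traverse $i$, which is immediate from the definition of $T_i$ as the subtree rooted at $i$. Both cases yield contradictions, so no such path $P$ exists and $X_i$ separates $a$ from $b$ in $G$.
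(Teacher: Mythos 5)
Your proof is correct. The paper does not prove this proposition at all --- it is imported from Bodlaender's survey \cite{Bodlaender97} as a known fact --- so there is nothing to compare against; your argument is the standard one: locate the first edge $uv$ of the alleged path crossing from $L_i$ to $R_i$, use the edge-covering axiom to find a bag $X_j \ni u,v$, and derive a contradiction in either case ($j$ inside or outside $T_i$) via the connectivity axiom and the fact that $i$ separates $T_i\setminus\{i\}$ from the rest of $T$. The one point worth stating explicitly is that every vertex of the path lies in $L_i \cup R_i$ because $(L_i, X_i, R_i)$ partitions $V$ and the path avoids $X_i$, which is what guarantees the crossing edge exists; you use this implicitly and correctly.
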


For our purpose, it is very convenient to use \emph{nice} tree and path decompositions (see e.g.~\cite{Bodlaender97}). In a nice tree decomposition $(T, {\cal X})$, the tree  is rooted, and has only four types of nodes~:
\begin{enumerate}
\item Leaf nodes $i$, in which case $|X_i| = 1$.
\item Introduce nodes $i$, having a unique child $j$ s.t. $X_i = X_j \cup \{u\}$ for some $u \in V \setminus X_j$.
\item Forget nodes $i$, having a unique child $j$ s.t. $X_i = X_j \setminus \{u\}$ for some $u \in  X_j$.
\item Join nodes $i$, having exactly two children $j$ and $k$, s.t. $X_i = X_j = X_k$.
\end{enumerate}
Moreover, we can assume that the root node corresponds to a bag of size 1.

Let us associate an \emph{operation} $\tau_i$ to each node of a nice tree decomposition. If we are in the second case of the definition (introduce node $i$), we associate operation $\tau_i = introduce(u)$, where $u$ is the vertex introduced in bag $X_i$. If we are in the third case (forget node $i$), we associate operation $\tau_i = forget(u)$, where $u$ is the forgotten vertex.
In the fourth case (join node), we associate operation $\tau_i = join(X_i; L_j, L_k)$. For a leaf node $i$ with $X_i=\{u\}$, we also associate operation $\tau_i = introduce(u)$. 
Nice path decompositions are defined in a similar way, but of course they do not have join nodes. 

It is well known~\cite{Bodlaender97} that any tree or path  decomposition can be refined into a nice one in linear time, without increasing the width. 

\begin{proposition}[\cite{Bodlaender97}]\label{pr:nice}
Let $(T,{\cal X})$ be a tree decomposition of $G$. There exists a nice tree decomposition $(T',{\cal X'})$, such that 
\begin{itemize}
\item each bag of ${\cal X'}$ is a subset of a bag in ${\cal X}$
\item for each node $i$ of $T$, there is a node $i'$ of $T'$ such that the corresponding partitions $(L_i,X_i,R_i)$ (induced by $i$ in $(T,{\cal X})$) and  $(L'_{i'},X'_{i'},R'_{i'})$ (induced by $i'$ in $(T',{\cal X'})$) are equal.
\end{itemize}
\end{proposition}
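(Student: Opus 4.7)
The plan is a direct construction by modifying $(T,{\cal X})$ into a nice tree decomposition through three standard local transformations, each of which only creates new bags that are subsets of already existing bags of ${\cal X}$. First, I root $T$ at an arbitrary node. Then, for each edge of $T$ between a parent $p$ and a child $c$, I subdivide it by inserting a chain of forget and introduce nodes that realises the transition between $X_c$ and $X_p$. Next, I replace each node with more than two children by a binary tree of ``join-style'' copies carrying the same bag. Finally, I expand each leaf of $T$ whose bag has more than one vertex into a chain of introduce nodes starting from a singleton, and symmetrically prepend a chain of forget nodes above the root of $T$ so that the final root of $T'$ carries a singleton bag.

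The critical step is the chain construction along an edge $(p,c)$. Writing $A = X_c \setminus X_p$ and $B = X_p \setminus X_c$, I first forget the vertices of $A$ one at a time, producing intermediate bags of the form $X_c \setminus A'$ with $A' \subseteq A$, all subsets of $X_c$, and reaching the bag $X_c \cap X_p$; then I introduce the vertices of $B$ one at a time, producing intermediate bags of the form $(X_c \cap X_p) \cup B'$ with $B' \subseteq B$, all subsets of $X_p$. Hence every new bag is contained in either $X_p$ or $X_c$, both of which belong to ${\cal X}$. If $A$ and $B$ are both empty the chain is empty and the two endpoints in $T'$ coincide; this only happens when $X_p = X_c$, and when $c$ is the only child of $p$ one checks that the partitions $(L_p,X_p,R_p)$ and $(L_c,X_c,R_c)$ are already equal, so a single node in $T'$ can legitimately represent both.

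It is then routine to verify that $(T',{\cal X}')$ is a valid tree decomposition: vertex and edge coverage are preserved because every original bag still appears in $T'$, and the running-intersection property is maintained along each chain because any vertex inserted on the chain already belonged to one of the two adjacent original bags, so the subtree of $T'$-bags containing it is the union of its original subtree with an initial segment of the chain. The width does not increase since every new bag is a subset of an original bag. For the partition property, given an original node $i$ I take $i'$ to be the top of the subtree of $T'$ built from $T_i$, which by construction has bag $X_i$; since all new bags appearing in this subtree use only vertices of $V_i = \bigcup_{j \in T_i} X_j$, I obtain $V'_{i'} = V_i$, hence $L'_{i'} = L_i$ and $R'_{i'} = R_i$.

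The main obstacle is preserving the ``subset of an original bag'' condition during the chain insertion: the more natural transition from $X_c$ to $X_p$ that first introduces then forgets passes through the bag $X_c \cup X_p$, which is in general not contained in any bag of ${\cal X}$. Doing all the forgets before all the introduces sidesteps this difficulty and is the essential idea of the construction.
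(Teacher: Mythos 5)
The paper does not prove this proposition at all --- it is quoted from Bodlaender's work as a known fact --- so there is nothing to compare against; your construction (forget the vertices of $X_c\setminus X_p$ before introducing those of $X_p\setminus X_c$, so every intermediate bag stays inside $X_c$ or $X_p$, then binarize high-degree nodes and pad leaves and root) is the standard refinement argument and is correct, including the verification that $V'_{i'}=V_i$ gives equality of the partitions. The only loose end is bookkeeping: at the junctions between a chain and its endpoint gadget (and at the leaves of the binary join trees) you get consecutive nodes with identical bags and a single child, which fit none of the four node types and must be contracted/identified --- you handle this only for the case $X_p=X_c$, but the general fix is routine and does not affect either claimed property.
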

\subsubsection*{Traces and valid partitions}
Let  $C$ be a vertex cover of minimum size of our input graph $G$, and let $S = V\setminus C$ be the remaining independent set. We denote $k = |C|$. Our objective is to describe,
in a first step, an $O^*(4^k)$ algorithm for treewidth and an $O^*(3^k)$ algorithm for pathwidth. Very informally, if we fix an nice tree or path decomposition of $G[C]$, then there is an optimal way of adding the vertices of $S$ to this tree or path decomposition. Trying all nice decompositions of $G[C]$ by brute force would be too costly. Therefore we introduce the notion of \emph{traces} and \emph{valid partitions} of $C$.
%The basic idea is to try all partitions of $C$ into three sets, but some care is needed to handle the vertices of $S$. 

\begin{definition}\label{de:trace}
Consider a node $i$ of a tree decomposition $(T,{\cal X})$ of $G$. The \emph{trace} of node $i$ on $C$ is the three-partition $(L_i^C, X_i^C, R_i^C)$ of $C$ such that $L_i^C=L_i \cap C$, $X_i^C=X_i \cap C$ and $R_i^C = R_i \cap C$.

A partition $(L^C,X^C,R^C)$ of $C$ is called a \emph{valid triple} or \emph{valid partition} if it is the trace of some node of a tree decomposition. We say that a tree decomposition \emph{respects} the valid partition  $(L^C,X^C,R^C)$ if some node of the tree decomposition produces this trace on $C$. 
\end{definition}

The following lemma gives an easy characterization of valid partitions of $C$. It also proves that a partition is the trace of a node of some tree decomposition, this also holds for some path decomposition. Therefore we do not need to distinguish between partitions that would be valid for tree decompositions or valid for path decompositions. 
\begin{lemma}\label{le:valid}
A three-partition $(L^C,X^C,R^C)$ is the trace of some tree decomposition (or path decomposition) if and only if $X^C$ separates $L^C$ from $R^C$ in the graph $G[C]$. 
\end{lemma}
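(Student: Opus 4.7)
I would prove both implications, using a three-bag path decomposition (which is automatically a tree decomposition) for the converse. For the forward direction, suppose $(L^C, X^C, R^C)$ is the trace of a node $i$ of some tree decomposition. Proposition~\ref{pr:sep} says that $X_i$ separates $L_i$ from $R_i$ in $G$; any path in $G[C]$ from $L^C$ to $R^C$ uses only vertices of $C$ and, viewed inside $G$, connects $L_i$ to $R_i$, so it must meet $X_i$, and the meeting vertex lies in $X_i \cap C = X^C$. Hence $X^C$ separates $L^C$ from $R^C$ in $G[C]$.

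For the converse, assume $X^C$ separates $L^C$ from $R^C$ in $G[C]$. I would partition the independent set $S = V \setminus C$ into four classes according to how $N(v)$ meets $L^C$ and $R^C$: $S_L$ (meets $L^C$ but not $R^C$), $S_R$ (meets $R^C$ but not $L^C$), $S_X$ (contained in $X^C$), and $S_{LR}$ (meets both $L^C$ and $R^C$). Then I would take
\begin{align*}
B_1 &= L^C \cup X^C \cup S_L \cup S_{LR},\\
B_2 &= X^C \cup S_X \cup S_{LR},\\
B_3 &= X^C \cup R^C \cup S_R \cup S_{LR},
\end{align*}
and root the resulting three-bag path at $B_3$.

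Checking that $B_1, B_2, B_3$ is a path decomposition is routine: the bags cover $V$; by the separation hypothesis each edge of $G[C]$ lies in $L^C \cup X^C$ or in $X^C \cup R^C$, hence in $B_1$ or $B_3$; each edge from $v \in S$ to $C$ is covered by the bag(s) containing $v$; and every vertex occupies an interval of bags. The middle node then satisfies $X_2 = B_2$, $L_2 = B_1 \setminus B_2 = L^C \cup S_L$ and $R_2 = V \setminus (B_1 \cup B_2) = R^C \cup S_R$, yielding exactly the trace $(L^C, X^C, R^C)$ on $C$. The main subtlety, in my view, is the class $S_{LR}$: such a vertex must appear in both $B_1$ and $B_3$ in order to cover its edges to $L^C$ and to $R^C$, and by the interval property of path decompositions this forces it into $B_2$ as well, which is why $B_2$ must also contain $S_{LR}$.
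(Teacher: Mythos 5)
Your proof is correct and follows essentially the same route as the paper: the forward direction via Proposition~\ref{pr:sep}, and the converse via an explicit three-bag path decomposition whose middle node has the desired trace. The only difference is cosmetic: the paper simply places all of $S$ into each of the three bags (writing them as $L^C\cup S$, $X^C\cup S$, $R^C\cup S$, with $X^C$ implicitly needed in the outer bags to cover the $L^C$--$X^C$ and $X^C$--$R^C$ edges), whereas you distribute $S$ more carefully into the classes $S_L,S_R,S_X,S_{LR}$; both constructions give a valid decomposition with the same trace.
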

\begin{proof}
``$\Rightarrow$:'' Consider a node $i$ of a tree decomposition $(T,{\cal X})$ of $G$ such that $(L^C,X^C,R^C)$ is the trace of node $i$ on $C$. By Proposition~\ref{pr:sep}, bag $X_i$ separates $L_i$ from $R_i $ in $G$. Therefore $X_i \cap C = X^C$ separates $L_i \cap C = L^C$ from $R_i \cap C = R^C$ in $G[C]$.

`$\Leftarrow$:'' Conversely, since $X^C$ separates $L^C$ from $R^C$ in  $G[C]$ and $S = V \setminus C$ is an independent set of $G$, note that the three bags $L^C \cup S, X^C \cup S$ and $R^C \cup S$ form a path decomposition of $G$. The trace of the middle bag is $(L^C,X^C,R^C)$.
%This path decomposition can be refined into a nice one by Lemma~\ref{le:nice} (the root is added on the side of $R^C \cup S$). 
%After refinement, the trace of the node which initially corresponded to the middle bag is  $(L^C,X^C,R^C)$.
\qed
\end{proof}

By Proposition~\ref{pr:nice}, for any valid partition $(L^C,X^C,R^C)$, there exists a nice tree or path decomposition respecting it.
Our algorithms will proceed by dynamic programming over valid three-partitions $(L^C,X^C,R^C)$ of this type, for a given vertex cover $C$. There is a natural partial ordering on such three-partitions.

\begin{definition}\label{de:prec}
A valid three-partition $(L_j^C,X_j^C,R_j^C)$ \emph{precedes} the three-partition $(L_i^C,X_i^C,R_i^C)$ if they are different and they are the respective traces of two nodes $j$ and $i$ of a same nice tree decomposition $(T,{\cal X})$, where $i$ is the father of $j$ in $T$.
\end{definition}

Observe that if $(L_j^C,X_j^C,R_j^C)$ precedes $(L_i^C,X_i^C,R_i^C)$ we have that $L_j^C \subsetneq L_i^C$ (if $i$ is a $join$ or $forget$ node) or $L_j^C =L_i^C$ and $X_j^C \subsetneq X_i^C$ (if $i$ is an $introduce$ node). In particular, we can order the three-partitions according to a linear extension of the precedence relation.
Our algorithms will proceed by dynamic programming over three-partitions of $C$, according to this order. 

It is convenient for us to have a unique maximal three-partition w.r.t. the precedence order. Therefore, starting from graph $G$, we create a new graph $G'$ by adding a universal vertex $univ$ (i.e. adjacent to all other vertices of $G$). Clearly, $C \cup \{univ\}$ is a vertex cover of $G'$, of size $k+1$. Note that the treewidth (resp. pathwidth) of $G'$ equals the treewidth (resp. pathwidth) of $G$, plus one. Moreover, $G$ has an optimal nice tree (resp. path) decomposition whose root bag only contains vertex $univ$. Therefore, it is sufficient to compute the treewidth (pathwidth) for graph $G'$. From now on we assume that the input graph is $G'$, i.e. it contains a special universal vertex $univ$, and we only use nice tree (path) decompositions whose root bag is $\{univ\}$. If $C$ denotes the vertex cover of the input graph, then the trace of the root is always $(C \sm \{univ\}, \{univ\},\emptyset)$.

\section{\textsc{Treewidth} parameterized by vertex cover}\label{se:tw4k}
%%%%%%%%%%%%%%%%%%%%%%%%%%%%%%%%%

%Let us extend these results to the case of treewidth and tree decomposition. We merely have to deal with $join$ nodes. 
%Since we manipulate the trace of nodes on vertex set $C$, let us precise the $join$ operation related to such nodes. Let  $(T, {\cal X})$ be a nice tree decomposition of $G$, let $i$ be a $join$ node and let $j,k$ be its two children. Denote by $(L_i^C,X_i^C,R_i^C)$ (resp. $(L_j^C,X_j^C,R_j^C)$, $(L_k^C,X_k^C,R_k^C)$) the trace of $i$ (resp. $j$, $k$) on $C$. Recall that $X_i^C = X_j^C = X_k^C$. The operation corresponding to 
%node $i$ will be labelled $join(X_i^C; L_j^C, L_k^C)$. Observe that $(L_j^C,L_k^C)$ forms a two-partition of $L_i^C$. We do not distinguish between  $join(X_i^C; L_j^C, L_k^C)$ and $join(X_i^C; L_k^C, L_j^C)$ because they are symmetrical.
%
%Very fortunately, given a nice tree decomposition, the set of nodes leaving the same trace on $C$ will again form a directed subpath of the decomposition tree, except for some particular cases.
Recall that the nice tree decompositions are rooted, thus we can speak of \emph{lower} and \emph{upper} nodes of the decomposition tree.

\begin{lemma}\label{le:path}
Let $(L^C,X^C,R^C)$ be a three-partition of $C$. Let $(T,{\cal X})$ be a nice
tree-decomposition and consider the set of nodes of $T$ whose trace on $C$
is  $(L^C,X^C,R^C)$. 
%Assume this set is not empty. 
%\begin{itemize}
%item 
%
If $L^C \neq \emptyset$, then these nodes of $T$ induce a directed subpath in $T$, 
from a lower node $imin$ to an upper node $imax$. 
%\item If $L^C \neq \emptyset$, these nodes induce a (connected) subtree of $T$, and we denote by $imax$ the node of this tree closest from the root.
%\end{itemize}
\end{lemma}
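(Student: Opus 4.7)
The plan is to deduce both parts of the statement --- that the set of nodes with a given trace forms a chain in the ancestor order, and that this chain is ``convex'' --- from a simple monotonicity analysis of the trace along parent/child edges. First I would carry out a case distinction on the four node types of a nice decomposition and verify that, going from a child $j$ up to its parent $i$, the set $L^C$ can only grow while $R^C$ can only shrink. For leaves, introduce and forget nodes this is immediate from the definitions. The only subtle case is a join node, where I would use the connectivity property of tree decompositions (equivalently, Proposition~\ref{pr:sep}) to observe that $V_j \cap V_k \subseteq X_i$, so that $L_j \cap L_k = \emptyset$ and therefore $L_i = L_j \sqcup L_k$ as a \emph{disjoint} union (and dually $R_i = R_j \cap R_k$).

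Next I would show that any two nodes $a$ and $b$ sharing the trace $(L^C, X^C, R^C)$ are ancestor-comparable. Let $c$ be their lowest common ancestor. Since nice decompositions are binary, if $c \notin \{a,b\}$ then $c$ is necessarily a join node with children $j, k$ such that $a \in T_j$ and $b \in T_k$. Monotonicity up the tree gives $L^C = L_a^C \subseteq L_j^C$ and $L^C = L_b^C \subseteq L_k^C$, but the disjointness $L_j^C \cap L_k^C = \emptyset$ from the previous step then forces $L^C = \emptyset$, contradicting the hypothesis. Hence $c \in \{a,b\}$, so the nodes with the given trace form a chain with a deepest element $imin$ and a shallowest element $imax$.

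To close, I would prove convexity: any node $c$ on the directed path from $imin$ up to $imax$ has trace $(L^C, X^C, R^C)$. Along this path $L^C$ is non-decreasing and $R^C$ is non-increasing; since the two endpoints agree, both sequences must be constant, giving $L_c^C = L^C$ and $R_c^C = R^C$. Because $(L_c^C, X_c^C, R_c^C)$ partitions $C$, the middle component is then forced to equal $X^C$, and $c$ belongs to the prescribed set.

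The main obstacle, modest though it is, lies in the first step: the disjoint-union identity $L_i = L_j \sqcup L_k$ at a join node is precisely where the hypothesis $L^C \neq \emptyset$ earns its keep. Without it, two sibling subtrees could independently realize the same trace with an empty $L$-part (for instance the trivial trace carried by many leaves), and the corresponding set of nodes would no longer be a single directed subpath. Everything else is routine propagation of monotonicity.
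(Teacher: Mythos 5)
Your proof is correct. It derives the same two conclusions (ancestor-comparability and convexity) from the same underlying fact---the connectedness of the set of bags containing a fixed vertex, applied at the lowest common ancestor---but you factor the argument through an explicit monotonicity lemma ($L_j \subseteq L_i$ and $R_i \subseteq R_j$ along each child-to-parent edge, with $L_i = L_j \sqcup L_k$ as a disjoint union at join nodes), whereas the paper tracks a single witness vertex $x \in L^C$ and shows it would be forced into the LCA's bag and hence into $X^C$. For the comparability step the two arguments are essentially equivalent: your disjointness $L_j \cap L_k = \emptyset$ is proved by exactly the paper's observation that a vertex common to both subtrees must lie in the separating bag. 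Where your organization genuinely pays off is the convexity step: the sandwich $L^C = L_{imin}\cap C \subseteq L_c \cap C \subseteq L_{imax}\cap C = L^C$ (and dually for $R$), with the middle component $X_c \cap C$ then forced by the partition property, replaces the paper's case analysis on where a stray vertex of $X_i$, $L_i$ or $R_i$ could appear. Both proofs invoke $L^C \neq \emptyset$ at the same point, and your closing remark correctly identifies why the statement fails without it.
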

\begin{proof}
%Let us consider the case $L^C \neq \emptyset$. 
Consider two nodes $i$ and $j$ leaving this same trace $(L^C,X^C,R^C)$ on $C$. We claim that one of them is ancestor of the other in the tree. By contradiction, assume there is a lowest common ancestor $k$ of $i$ and $j$, different from $i,j$. Let $x \in L^C$ (note that here we use the condition $L^C \neq \emptyset$). Observe that $x$ appears in bags of both subtrees $T_i$ and $T_j$ of $T$, hence by definition of a tree decomposition it must belong to bag $X_k$. Since $x$ is in $X_k$ and in the subtree $T_i$, we must also have $x \in X_i$. But $X_i \cap C = X^C$, implying that $x$ is in both $X^C$ and $L^C$ --- contradicting the fact that the latter sets do not intersect. It follows that one of $i,j$ must be ancestor of the other. 

Let $imin$ (resp. $imax$) be the lowest (resp. highest) node whose trace on $C$ is $(L^C,X^C,R^C)$. 
It remains to prove that any node on the path from $imin$ to $imax$ in $T$ leaves the same trace. Let $i$ be a node on this path. Recall that $L_i$ denotes the set of vertices of $G$ that appear only in bags strictly below $i$, and $R_i$ denotes the vertices that do not appear in bags below $i$. Since $i$ is between $imin$ and $imax$, cleary $X^C \subseteq X_i$. If $X_i$ contains some vertex $x \in C \sm X^C$, then either $x \in L^C$ and thus $x$ must also appear in bag $X_{imin}$, or $x \in R^C$ and it must appear in bag $X_{imax}$. In both cases, this contradicts the trace of $imin$ and $imax$ on $C$. We thus have $L^C \subseteq L_i \cap C$. If $L_i \cap C$ also contains some node $x \in R^C$, as before we have that $x$ must be in bag $X_{imax}$ --- a contradiction. Eventually, observe that $R^C \subseteq R_i \cap C$, and that if
$R_i \cap C$ contained some vertex $x \in L^C$, this vertex must appear in bag $imin$ --- a contradiction.
%By the same arguments as in the case of path decompositions (see Lemma~\ref{le:psubpath}) we have that all nodes on the path from $imin$ to $imax$ in $T$ leave the same trace $(L^C,X^C,R^C)$.
\qed
\end{proof}

In order to ``glue'' a valid three-partition $(L^C,X^C,R^C)$ with the previous and next ones, into a nice tree decomposition of $G[C]$, we need to control the operation right below and right above the subpath of nodes leaving this trace. Therefore we introduce the following notion of \emph{valid quintuples}. 

\begin{definition}\label{de:tvalidq}
Let $(L^C, X^C,R^C)$ be a valid partition of $C$, with $L^C \neq \emptyset$. Let $\tau_+$ and $\tau_-$ be operations of type $introduce$, $forget$ or $join$. We say that $(\tau_-,L^C,X^C,R^C,\tau_+)$ is
a \emph{valid quintuple} if there is a nice tree decomposition  $(T,{\cal X})$ of $G$ respecting $(L^C, X^C,R^C)$, with  $imin$ and $imax$ being the lower and upper node corresponding to this trace,
such that $\tau_- = \tau_{imin}$ and $\tau_+ = \tau_{imax+1}$, where $imax+1$ is the father of $imax$. In the particular case when $imax$ is the root we assume for convenience that $\tau_{imax+1}$ is the $forget$ operation on the unique vertex of the root bag.

We say that this nice tree decomposition $(T,{\cal X})$  \emph{respects} the quintuple $(\tau_-,L^C,X^C,R^C,\tau_+)$.
\end{definition}

The following result characterizes all valid quintuples and will be used by our algorithm to enumerate all of them. Its proof is moved to the Appendix, Subsection~\ref{ss:le:tvalidq}.
\begin{lemma}\label{le:tvalidq}
A quintuple $Q= (\tau_-,L^C,X^C,R^C,\tau_+)$ is valid if and only if $(L^C, X^C,R^C)$ is a valid partition of $C$, $L^C$ is not empty  and the following two conditions hold: 
\begin{itemize}
\item $\tau_-$ is of type 
\begin{itemize}
\item $introduce(u)$ for some $u \in X^C$ and $N(u) \cap L^C = \emptyset$, or
\item $forget(u)$ for some $u \in L^C$, or
\item $join(X^C;L1^C,L2^C)$ where $(L1^C,L2^C)$ forms a non-trivial two-partition of $L^C$ and the three-partitions $(L1^C,X^C,R^C \cup L2^C)$ and $(L2^C,X^C,R^C \cup L1^C)$ are valid.
\end{itemize}
\item $\tau_+$ is of type
\begin{itemize}
\item $introduce(v)$ with $v \in R^C$, or
\item $forget(v)$ with $v \in X^C$ and $N(v) \cap R^C = \emptyset$, or
%\item $forget(x)$, with $x \in S$ when $X^C = R^C = \emptyset$, or
\item $join(X^C; L^C,LR^C)$ where $LR^C$ is a non-empty subset of $R^C$, and three-partitions $(L^C \cup LR^C, X^C, R^C \setminus LR^C)$ and $(LR^C, X^C, R^C\setminus LR^C \cup L^C)$ are valid.
\end{itemize}
\end{itemize}
\end{lemma}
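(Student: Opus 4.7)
The plan is to prove the two implications separately by case analysis on the types of the operations $\tau_-$ and $\tau_+$.

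For the forward direction, I would fix a nice tree decomposition $(T,{\cal X})$ of $G$ respecting $Q$ and let $imin$, $imax$ be as given by Lemma~\ref{le:path}. Each of $\tau_{imin}$ and $\tau_{imax+1}$ is of type $introduce$, $forget$, or $join$ (a leaf is treated as $introduce$ of its unique vertex, matching the convention in the excerpt). For $\tau_-=\tau_{imin}$ I would inspect the $C$-trace of its child (or children). Two observations drive the argument. First, by minimality of $imin$ no child may share the $C$-trace of $imin$, so any vertex $u$ introduced or forgotten must lie in $C$ and land in the expected part of the partition: $u\in X^C$ in the $introduce$ case, $u\in L^C$ in the $forget$ case. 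Second, for an $introduce(u)$ with $u\in X^C$, any $w\in N(u)\cap L^C$ would demand a common bag for $u$ and $w$, yet $w$ appears only in bags strictly below $imin$ while $u$ appears in no such bag; hence $N(u)\cap L^C=\emptyset$. In the $join$ case the two children's $C$-traces partition $L^C$ into $L1^C\cup L2^C$, both nonempty by minimality, and each resulting three-partition is itself valid by Lemma~\ref{le:valid}. The analysis of $\tau_+$ at $imax+1$ is symmetric, now using the maximality of $imax$; in the $forget(v)$ case the validity of the resulting trace $(L^C\cup\{v\},X^C\setminus\{v\},R^C)$ translates through Lemma~\ref{le:valid} into $N(v)\cap R^C=\emptyset$.

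For the backward direction, I would assemble a nice tree decomposition of $G$ respecting $Q$ from three pieces: a lower part below $imin$, a middle subpath on which every bag leaves trace $(L^C,X^C,R^C)$, and an upper part above $imax$. The middle subpath is produced by successive $introduce$/$forget$ operations on vertices of $S$, which do not alter the $C$-trace. For the lower part I would case-analyze $\tau_-$: in the $introduce(u)$ case the child of $imin$ must have trace $(L^C, X^C\setminus\{u\}, R^C\cup\{u\})$, which is valid because $N(u)\cap L^C=\emptyset$ prevents the removal of $u$ from $X^C$ from opening a new $L^C$-to-$(R^C\cup\{u\})$ path in $G[C]$; a nice decomposition for this trace exists by Lemma~\ref{le:valid} combined with Proposition~\ref{pr:nice} and is grafted below $imin$. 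The $forget(u)$ case is symmetric with child trace $(L^C\setminus\{u\}, X^C\cup\{u\}, R^C)$, and the $join$ case uses the two valid sub-traces handed to us by the hypothesis to build two sub-decompositions attached as children of $imin$. The upper part above $imax$ is constructed symmetrically from the conditions on $\tau_+$. The canonical three-bag decomposition $(L^C\cup S, X^C\cup S, R^C\cup S)$ from the proof of Lemma~\ref{le:valid} serves as a convenient building block throughout, since placing all of $S$ in every bag automatically handles the subtree-connectivity axiom for the independent set $S$.

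The hardest cases will be the $join$ operations on both sides. On the forward side one has to verify that the $C$-traces of the two children $j_1, j_2$ restrict exactly to $(L1^C, X^C, R^C\cup L2^C)$ and $(L2^C, X^C, R^C\cup L1^C)$, and that each is a valid partition; this combines Proposition~\ref{pr:sep} applied separately to each subtree with Lemma~\ref{le:valid}. On the backward side, gluing two sub-decompositions into a single nice decomposition without breaking the subtree-connectivity axiom for $S$-vertices or doubly covering edges is the main subtlety; it is resolved by keeping $S$ in every bag of the construction (as in the canonical decomposition of Lemma~\ref{le:valid}) and only nicifying at the end through Proposition~\ref{pr:nice}.
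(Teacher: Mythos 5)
Your proposal is correct and follows essentially the same route as the paper: the forward direction reads off the constraints from the traces of the neighbours of $imin$ and $imax$ (using Lemma~\ref{le:path}, Proposition~\ref{pr:sep} and Lemma~\ref{le:valid}), and the backward direction builds a small tree decomposition around a central bag $X^C$, places all of $S$ in every bag to trivialize the connectivity axiom, and then nicifies via Proposition~\ref{pr:nice}. The paper's construction is just a more concrete instance of your "grafting" step (it directly adds the bags $X^C_-$, $L^C_-\cup X^C_-$ below and $X^C_+$, $X^C_+\cup R^C_+$ above, with the analogous bags for join), so the two arguments coincide in substance.
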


%\begin{definition}[and notation]\ioan{awful\dots} 
%Let $(L^C, X^C,R^C)$ be a three-partition of $C$, with $L^C \neq \emptyset$. 
%
% let $\tau_-$ be an operation of type $introduce(u)$ for some $u \in X^C$ or $forget(u)$ for some $u \in L^C$ or 
% $join(X^C ; L1^C,L2^C)$ where $(L1^C,L2^C)$ forms a non-trivial two-partition of $L^C$.
%Let $\tau_+$ be an operation of type $introduce(v)$ with $v \in R^C$ or $forget(v)$ with $v \in X^C$, or $join(X^C; L^C,LR^C)$ where $LR^C$ is a non-empty subset of $R^C$. We say that $(\tau_-,L^C,X^C,R^C,\tau_+)$ is a \emph{valid quintuple}. 
%
%Let $(T,{\cal X})$ be a nice tree-decomposition and let 
%$imin$ (resp. $imax$) be the lower (resp. upper) node of $T$ whose trace on $C$ is exactly
%$(L^C,X^C,R^C)$. We say that $(T,{\cal X})$ \emph{respects} the valid quintuple $(\tau_-,L^C,X^C,R^C,\tau_+)$ if $\tau_- = \tau_{imin}$ and $\tau_+ = \tau_{imax+1}$\ioan{The case when $imax$ is the root is not specified.}.

\medskip
To be able to start our dynamic programming, we introduce a new category of valid quintuples that we call \emph{degenerate}, corresponding to valid partitions of type $(\emptyset,X^C,R^C)$. Roughly, they will correspond to the leaves of our optimal tree decomposition. %(For technical reasons, we cannot start with leaves whose bags are singletons.) 
We point out that for degenerate quintuples, parameter $\tau_-$ is irrelevant.
%\ioan{Before I used an $\_$ instead of $\tau_-$.}

\begin{definition}
Let $(\emptyset,X^C,R^C)$ be a valid partition of $C$ and let $\tau_+$ be an operation of type $forget(u)$, with $u \in X^C$ such that $N_G(u) \subseteq X^C$. We say that $(\tau_-,\emptyset,X^C,R^C,\tau_+)$ is a \emph{degenerate valid quintuple} and a tree decomposition \emph{respects} this quintuple if it has a node $imax$ whose trace on $C$ is $(\emptyset,X^C,R^C)$, and whose father corresponds to operation $forget(u)$.
\end{definition}

Let us fix a valid quintuple $Q = (\tau_-,L^C,X^C,R^C,\tau_+)$. We want to construct a tree decomposition $(T,{\cal X})$ respecting $Q$, of minimum width. We must understand how to place the vertices of $S$ in the bags of $(T,{\cal X})$. For this purpose we define some  special subsets of $S$ w.r.t. $Q$, and the next lemmata describe how these subsets are forced to be in some bags on the subpath of $T$ from $imin$ to  $imax$ (cf. Lemma~\ref{le:path}).

\begin{notation}\label{no:tsets}
Let $Q = (\tau_-,L^C,X^C,R^C,\tau_+)$ be a valid quintuple. 
\begin{itemize}
%\item Let $L^S = \{x \in S \mid N(x) \subseteq L_i\}$.
%\item Let $R^S = \{x \in S \mid N(x) \subseteq R_i\}$.
\item We denote $XTR^S(Q) = \{x \in S \mid N(x) \cap L^C \neq \emptyset \text{~and~} N(x) \cap R^C \neq \emptyset\}$.
\item 
\begin{itemize}
\item If $\tau_-$ is of type $introduce(u)$, then we denote $XL^S(Q) = \{x \in S \mid N(x) \subseteq L^C \cup X^C \text{~and~} u \in N(x) \text{~and~} N(x) \cap L^C \neq \emptyset\}$.
\item If $\tau_-$ is of type $join(X^C;L1^C,L2^C)$, then $XL^S(Q) = \{x \in S \mid N(x) \cap L1^C \neq \emptyset  \text{~and~} N(x) \cap L2^C \neq \emptyset   \text{~and~}  N(x) \cap R^C = \emptyset\}$. In particular, the last condition ensures that $XL^S(Q)$ does not intersect $XTR^S(Q)$.
\item If $\tau_-$ is a $forget$ operation or if the quintuple is degenerate, then we let $XL^S(Q) = \emptyset$.
\end{itemize}
\item Suppose that $\tau_+$ is of type $forget(v)$. Then we let $XR^S(Q) = \{x \in S \mid  N(x) \subseteq R^C \cup X^C \text{~and~} v \in N(x) \text{~and~} N(x) \cap R^C \neq \emptyset\}$.
If $\tau_+$ is a $introduce$ or $join$ operation, then $XR^S = \emptyset$.
\end{itemize}
\end{notation}

\begin{lemma}\label{le:timinimax}
Let $Q = (\tau_-,L^C,X^C,R^C,\tau_+)$ be a valid quintuple and let $(T, {\cal X})$ be a nice tree decomposition respecting $Q$. Denote by $[imin,imax]$ the directed subpath of nodes whose trace on $C$ is $(L^C,X^C,R^C)$ (in the case where $L^C = \emptyset$, we take $imin=imax$). Then
\begin{itemize}
\item For any $i$ in the subpath $[imin,imax]$, $X_i$ contains $X^C \cup XTR^S(Q)$.
\item $X_{imin}$ contains $X^C \cup XTR^S(Q) \cup XL^S(Q)$.
\item $X_{imax}$ contains $X^C \cup XTR^S(Q) \cup XR^S(Q)$.
\end{itemize}
\end{lemma}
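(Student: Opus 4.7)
The plan is to use two basic tools about tree decompositions: the separator property of bags (Proposition~\ref{pr:sep}) and the running intersection property (that the bags containing any fixed vertex induce a subtree of $T$). Both the membership of $X^C$ and of the extra $S$-sets will be forced by combining these two facts with the explicit definition of the traces $L^C$, $X^C$, $R^C$.

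First I would dispatch the containment of $X^C$ and of $XTR^S(Q)$ in every bag $X_i$ on the subpath $[imin,imax]$. The inclusion $X^C\subseteq X_i$ is immediate from the trace definition. For $XTR^S(Q)$, fix $x\in S$ with neighbors $u\in L^C\subseteq L_i$ and $w\in R^C\subseteq R_i$; Proposition~\ref{pr:sep} says that $X_i$ separates $L_i$ from $R_i$ in $G$, so the length-two path $u\!-\!x\!-\!w$ cannot live entirely in one side, forcing $x\in X_i$.

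Next I would prove $XL^S(Q)\subseteq X_{imin}$ by splitting according to $\tau_-$, using Lemma~\ref{le:tvalidq} for the structure of each case. The cases $\tau_-=forget$ and degenerate are trivial since then $XL^S(Q)=\emptyset$. If $\tau_-=introduce(u)$, let $j$ be the child of $imin$; because $u$ is introduced at $imin$, $u\notin V_j$, so any bag containing the edge $xu$ lies at $imin$ or above, while any bag containing the edge $xu'$ for $u'\in N(x)\cap L^C\subseteq L_{imin}=V_j\sm X_{imin}$ lies strictly inside $T_j$. The running intersection property then forces $x$ into every node on the path between those two bags, in particular into $X_{imin}$. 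If $\tau_-=join(X^C;L1^C,L2^C)$ with children $j$ and $k$, the two neighbors of $x$ in $L1^C\subseteq L_j$ and $L2^C\subseteq L_k$ force $x$ to appear in some bag in $T_j$ and in some bag in $T_k$; the only way a single subtree of $T$ can reach both is through $imin$, so $x\in X_{imin}$.

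The argument for $XR^S(Q)\subseteq X_{imax}$ is symmetric. Only the case $\tau_+=forget(v)$ is nontrivial; here $v$ is forgotten at the father $imax+1$, so $v$ does not appear in any ancestor of $imax$ and the edge $xv$ must be realized at a bag inside $T_{imax}$, giving $x\in V_{imax}$. Meanwhile a neighbor $w\in N(x)\cap R^C\subseteq R_{imax}$ only appears in bags outside $T_{imax}$, forcing $x$ to also appear outside $T_{imax}$; once again the subtree property places $x$ on the path between these two bags, hence in $X_{imax}$. I expect the main obstacle to be purely notational: keeping the bookkeeping straight across the several cases of $\tau_-$ and $\tau_+$ while using only the fact that an introduced vertex is absent from the subtree below and a forgotten vertex is absent from the subtree above. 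All other ingredients are standard applications of the two basic decomposition properties mentioned at the start.
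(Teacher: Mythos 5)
Your proof is correct and follows essentially the same route as the paper's: the paper also derives $X^C\cup XTR^S(Q)\subseteq X_i$ from the fact that such an $x$ must occur both strictly below $imin$ and not below $imax$, and handles $XL^S$ and $XR^S$ by the same case analysis on $\tau_-$ and $\tau_+$ using the connectivity of the set of bags containing $x$. The only cosmetic difference is that for $XTR^S$ you invoke the separator property of each bag $X_i$ directly rather than the single running-intersection argument, which is an equally valid formulation of the same idea.
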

\begin{proof}
Let $x \in XTR^S(Q)$. By definition of $XTR^S$, vertex $x$ has a neighbor $a \in L^C$ and a neighbor $b \in R^C$. Since $a \in L^C$, it only appears in the bags of $T$ strictly below $imin$. Since $x$ is adjacent to $a$, it must also appear on one of these bags. Since $b \in R^C$, vertex $b$ appears in no bag below $imax$ (included). Therefore, $x$ must appear in some bag which is not below $imax$. Consequently, $x$ appears in every bag of the $[imin,imax]$ subpath.

%We point out that the arguments of the proof of Lemma~\ref{le:iminimax} (the similar statement for pathwidth) also apply here. The only fact that is not covered in Lemma~\ref{le:iminimax} is 
Assume that $XL^S(Q)$ is not empty. If $\tau_- = \tau_{imin} = introduce(u)$, then every vertex $x \in XL^S(Q)$ must appear in some bag strictly below $imin$ (because it has a neighbor in $L^C$) and in some bag containing $u$ (because it sees $u$). This latter bag cannot be strictly below $imin$. Thus $x \in X_{imin}$ and $XL^S(Q)$ is contained in $X_{imin}$. 
When $imin$ is a $join$ node, $L^C \neq \emptyset$ and we must show that $X_{imin}$ contains $XL^S(Q)$. But then each vertex $x \in XL^S(Q)$ has a neighbor which only appears in the left subtree of $imin$, strictly below $imin$, and one in the right subtree of $imin$, strictly below $imin$. Thus $x$ must appear in the bag of $imin$.

If $XR^S(Q)$ is not empty, then $\tau_+ = forget(v)$ and $v$ is a neighbor of each $x \in XR^S(Q)$. Hence $x$ must appear in a bag below $imax$ (included). But $x$ also has neighbors in $R^C$, thus it must appear in some bag which is not below $imax$. Consequently, $XR^S(Q)$ is contained in $X_{imax}$. 
\qed
\end{proof}

%Unlike in the pathwidth case, for vertices of $S$ whose neighborhood is in $X^C$, the situation is quite simple. 
We consider now vertices of $S$ whose neighborhood is a subset of $X^C$.

\begin{notation}\label{no:tXF}
Let $Q = (\tau_-,L^C,X^C,R^C,\tau_+)$ be a valid quintuple. We denote by $XF^S(Q)$ the set  of vertices $x\in S$ such that $N(x) \subseteq X^C$.
 
Let $\epsilon(Q)$ be set to $1$ if there is some $x \in XF^S(Q)$ such that $N(x)=X^C$, set to $0$ otherwise.
\end{notation}

\begin{lemma}\label{le:tepsilon}
Let $(T,{\cal X})$ be a tree decomposition respecting a quintuple $Q = (\tau_-,L^C,X^C,R^C,\tau_+)$. Then $(T,{\cal X})$ has a bag of
size at least $|X^C| + \epsilon(Q)$.
\end{lemma}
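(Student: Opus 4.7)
The inequality to prove splits naturally along the value of $\epsilon(Q)$. For $\epsilon(Q)=0$ it is essentially a restatement of Lemma~\ref{le:timinimax}, while for $\epsilon(Q)=1$ the extra ``$+1$'' will be extracted via the Helly-type Proposition~\ref{pr:hellyc}.

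First I would handle the easy half: $\epsilon(Q)=0$. Let $[imin,imax]$ be the directed subpath of nodes of $T$ whose trace on $C$ equals $(L^C,X^C,R^C)$ (taking $imin=imax$ in the degenerate case $L^C=\emptyset$). Lemma~\ref{le:timinimax} gives $X^C\subseteq X_i$ for every $i$ in this subpath, so $|X_{imin}|\geq |X^C|=|X^C|+\epsilon(Q)$ and we are done.

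Now suppose $\epsilon(Q)=1$. By Notation~\ref{no:tXF} there exists $x\in XF^S(Q)$ such that $N_G(x)=X^C$. I would argue that $X^C\cup\{x\}$ induces a clique in the auxiliary graph $H$ of Proposition~\ref{pr:hellyc}. For the part inside $X^C$: by Lemma~\ref{le:timinimax} the set $X^C$ is contained in the single bag $X_{imin}$, so $X^C$ is itself a clique in $H$. For the edges incident to $x$: every $y\in X^C=N_G(x)$ satisfies $xy\in E(G)$, hence by the definition of a tree decomposition some bag contains both $x$ and $y$, which precisely means $xy$ is an edge of $H$. Thus $X^C\cup\{x\}$ is a clique in $H$ of size $|X^C|+1$, and Proposition~\ref{pr:hellyc} furnishes a bag of $(T,\mathcal{X})$ containing all of $X^C\cup\{x\}$; this bag has size at least $|X^C|+1=|X^C|+\epsilon(Q)$.

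The only subtlety I expect is the very last step: the bag produced by Proposition~\ref{pr:hellyc} need not lie on the subpath $[imin,imax]$, but the statement of the lemma only asks for some bag of $(T,\mathcal{X})$ of the required size, so this is harmless. I would mention explicitly that in the degenerate case ($L^C=\emptyset$) the same argument goes through because $X^C\subseteq X_{imax}$ still holds trivially from the definition of the trace, so $X^C$ remains a clique of $H$ and the Helly argument applies unchanged.
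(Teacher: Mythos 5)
Your proof is correct and follows the same route as the paper: the $\epsilon(Q)=0$ case is immediate from the trace condition, and the $\epsilon(Q)=1$ case is handled by observing that $X^C\cup\{x\}$ is a clique in the auxiliary graph $H$ and invoking the Helly property of Proposition~\ref{pr:hellyc}. You simply spell out the clique verification that the paper's one-line proof leaves implicit.
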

\begin{proof}
If $\epsilon(Q) = 0$ the claim is trivial. If $\epsilon(Q) = 1$, let $x \in S$ such that $N(x) = X^C$. By Helly's property (see Proposition~\ref{pr:hellyc}), there must be a bag
of $(T,{\cal X})$ containing $x$ and $X^C$.
\qed
\end{proof}

\begin{notation}\label{no:loctw}
Let $Q = (\tau_-,L^C,X^C,R^C,\tau_+)$ be a valid quintuple. We define the \emph{local treewidth} of $Q$ as
$$loctw(Q) =|X^C| +\max\{|XTR^S|  + |XL^S|, |XTR^S| + |XR^S|, \epsilon(Q)\} - 1.$$
\end{notation}

The $-1$ used above plays the same role as in the definition of treewidth. By Lemmata~\ref{le:timinimax} and~\ref{le:tepsilon} we deduce.
\begin{corollary}\label{co:loctw}
Any nice tree decomposition of $G$ respecting a valid quintuple $Q$ is of width at least $loctw(Q)$.
\end{corollary}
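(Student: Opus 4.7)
The plan is to observe that the corollary follows essentially by combining the lower bounds on specific bag sizes given by Lemma~\ref{le:timinimax} and Lemma~\ref{le:tepsilon}, once we verify that the relevant sets in Notation~\ref{no:tsets} are pairwise disjoint.

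First I would argue disjointness. Clearly $X^C \subseteq C$ while $XTR^S, XL^S, XR^S \subseteq S$, so $X^C$ is disjoint from the $S$-sets. Moreover, $XTR^S \cap XL^S = \emptyset$: in the $introduce(u)$ case the definition forces $N(x) \subseteq L^C \cup X^C$, so $N(x) \cap R^C = \emptyset$ and $x \notin XTR^S$; in the $join$ case the condition $N(x) \cap R^C = \emptyset$ appears explicitly; and when $\tau_-$ is $forget$ or $Q$ is degenerate the set $XL^S$ is empty. Similarly $XTR^S \cap XR^S = \emptyset$, since $XR^S$ is nonempty only when $\tau_+ = forget(v)$, in which case $N(x) \subseteq R^C \cup X^C$ forces $N(x) \cap L^C = \emptyset$.

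Next I would use the three inclusions from Lemma~\ref{le:timinimax}, namely that $X_{imin} \supseteq X^C \cup XTR^S(Q) \cup XL^S(Q)$ and $X_{imax} \supseteq X^C \cup XTR^S(Q) \cup XR^S(Q)$, together with the disjointness established above, to conclude
\[
|X_{imin}| \geq |X^C| + |XTR^S(Q)| + |XL^S(Q)|, \qquad |X_{imax}| \geq |X^C| + |XTR^S(Q)| + |XR^S(Q)|.
\]
Lemma~\ref{le:tepsilon} then yields some bag of size at least $|X^C| + \epsilon(Q)$.

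Finally, since the width of $(T,{\cal X})$ equals the maximum bag size minus one, the width is at least
\[
|X^C| + \max\bigl\{|XTR^S(Q)| + |XL^S(Q)|,\ |XTR^S(Q)| + |XR^S(Q)|,\ \epsilon(Q)\bigr\} - 1,
\]
which is exactly $loctw(Q)$. The main (very mild) obstacle is verifying the disjointness by case analysis on $\tau_-$ and $\tau_+$; everything else is a direct substitution of the two previous lemmas.
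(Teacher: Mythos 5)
Your proposal is correct and follows exactly the route the paper intends: the paper derives the corollary directly from Lemmata~\ref{le:timinimax} and~\ref{le:tepsilon}, and you have merely made explicit the (easy but necessary) disjointness checks among $X^C$, $XTR^S$, $XL^S$ and $XR^S$ that turn the set inclusions into the cardinality bound $loctw(Q)$. Nothing is missing.
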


We now define the partial treewidth of a valid quintuple. Intuitively, the partial treewidth of a quintuple $Q$ is the minimum value $t$ such that there is a nice tree decomposition of $G[C]$, respecting $Q$, with all valid quintuples below $Q$ having local treewidth at most $t$. We shall prove in Lemma~\ref{le:geptw} and Theorem~\ref{th:tw} that actually the partial treewidth of $Q$ is at most $t$ if and only if there exists a nice tree decomposition of the whole graph $G$, respecting $Q$, such that all  bags below $imax$ have size at most $t+1$.

\begin{notation}\label{no:ptw}
Given a valid quintuple $Q= (\tau_{-},L^C,X^C,R^C,\tau_{+})$, we define the \emph{partial treewidth} of $Q$, denoted $ptw(Q)$, as follows.

\begin{itemize}
\item If $Q = (\tau_-,\emptyset, X^C, R^C, forget(u))$ is a degenerate valid quintuple then
$$ptw(Q) = loctw(Q).$$
%else
\item If $\tau_{-} = introduce(u)$,
%\begin{eqnarray*}
 $$ptw(Q) = \max\left\{loctw(\tau_{-},L^C,X^C,R^C,\tau_{+}), \min_{\text{valid quintuple~} Q_-} ptw(Q_-)\right\}$$
%\end{eqnarray*}

where the minimum is taken over all valid quintuples $Q_-$ of type $(\tau_{--},L^C,X^C \sm \{u\},R^C \cup \{u\}), introduce(u))$.
\item If $\tau_{-} = forget(u)$,
%\begin{eqnarray*}
 $$ptw(Q) = \max\left\{loctw(\tau_{-},L^C,X^C,R^C,\tau_{+}), \min_{\text{valid quintuple~} Q_-} ptw(Q_-)\right\}$$
%\end{eqnarray*}

where the minimum is taken over all valid quintuples $Q_-$ of type $(\tau_{--},L^C\setminus\{u\},X^C \cup \{u\},R^C),$ $forget(u))$.
\item If $\tau_{-} = join(X^C;L1^C,L2^C)$,
\begin{eqnarray*}
 ptw(Q) = &\max&(loctw(\tau_{-},L^C,X^C,R^C,\tau_{+}), \\ 
        && \min_{\text{valid quintuple~} Q1_-} ptw(Q1_-),\\
        && \min_{\text{valid quintuple~} Q2_-} ptw(Q2_-))
\end{eqnarray*}
where the minima are taken over all valid quintuples $Q1_-$ of type $(\tau1_{--},L1^C,X^C,R^C \cup L2^C),join(X^C;L1^C,L2^C))$ and
all quintuples $Q2_-$ of type $(\tau2_{--},L2^C,X^C,R^C \cup L1^C),$ $join(X^C;L1^C,L2^C))$.
\end{itemize}
\end{notation}

\begin{lemma}\label{le:geptw}
Any nice tree decomposition of $G$ respecting a valid quintuple $Q$ is of width at least $ptw(Q)$.
\end{lemma}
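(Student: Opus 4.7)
The plan is to proceed by strong induction over the precedence order of Definition~\ref{de:prec} on valid (and degenerate valid) quintuples. For the base case, if $Q$ is a degenerate valid quintuple then $ptw(Q) = loctw(Q)$ by definition, and Corollary~\ref{co:loctw} already yields that any tree decomposition respecting $Q$ has width at least $loctw(Q)$.

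For the inductive step, fix a non-degenerate valid quintuple $Q = (\tau_-, L^C, X^C, R^C, \tau_+)$ and a nice tree decomposition $(T, {\cal X})$ respecting $Q$. Corollary~\ref{co:loctw} already supplies the lower bound $loctw(Q)$, so I only need to show that the width is at least each of the minimum-over-predecessors terms in the appropriate case of Notation~\ref{no:ptw}. Let $[imin, imax]$ be the sub-path of nodes of $T$ with trace $(L^C, X^C, R^C)$ on $C$, as given by Lemma~\ref{le:path}. If $\tau_{imin}$ is $introduce(u)$ or $forget(u)$, I look at the unique child $j$ of $imin$ in $T$, whose trace on $C$ is exactly $(L^C, X^C \sm \{u\}, R^C \cup \{u\})$ or $(L^C \sm \{u\}, X^C \cup \{u\}, R^C)$ respectively. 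Letting $imin'$ be the lowest node of $T$ sharing $j$'s trace and setting $\tau_{--} = \tau_{imin'}$, the same tree decomposition $(T, {\cal X})$ respects a valid (possibly degenerate) predecessor quintuple $Q_-$ built from $\tau_{--}$, the trace of $j$, and $\tau_{imin}$ as its upper operation; this $Q_-$ is among those quantified in the corresponding branch of Notation~\ref{no:ptw}, and the induction hypothesis gives width $\geq ptw(Q_-) \geq \min_{Q'_-} ptw(Q'_-)$. If $\tau_{imin} = join(X^C; L1^C, L2^C)$, then $imin$ has two children whose traces on $C$ are $(L1^C, X^C, R^C \cup L2^C)$ and $(L2^C, X^C, R^C \cup L1^C)$; applying the same construction to each child produces valid predecessor quintuples $Q1_-$ and $Q2_-$ of exactly the forms appearing in the join branch of Notation~\ref{no:ptw}. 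Since $L1^C, L2^C \subsetneq L^C$ both $Q1_-$ and $Q2_-$ precede $Q$, and induction gives width $\geq \max(ptw(Q1_-), ptw(Q2_-))$, dominating both minima.

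The main obstacle is the bookkeeping step of verifying that the candidate predecessor quintuple $Q_-$ extracted from $(T, {\cal X})$ is genuinely valid and genuinely respected by $(T, {\cal X})$. The only delicate situation is the forget case when $L^C = \{u\}$, where $Q_-$ becomes a degenerate quintuple whose declared upper operation is $forget(u)$; the required side condition $N_G(u) \subseteq X^C \cup \{u\}$ follows because $u \in L^C$ appears only in bags strictly below $imin$, and every neighbor of $u$ in $C$ must share a bag with $u$, hence lies in $L^C \cup X^C = \{u\} \cup X^C$. In all other cases, validity of $Q_-$ and the fact that $(T, {\cal X})$ respects it are immediate from the niceness of $(T, {\cal X})$, from Proposition~\ref{pr:sep}, and from the definition of the precedence relation.
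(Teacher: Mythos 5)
Your proof is correct and follows essentially the same route as the paper's: induction along the precedence order, with degenerate quintuples as the base case handled by Corollary~\ref{co:loctw}, and the inductive step obtained by extracting from the given decomposition the predecessor quintuple(s) realized at the child(ren) of $imin$ and invoking the induction hypothesis together with Corollary~\ref{co:loctw}. Your treatment of the $L^C=\{u\}$ forget case matches the paper's remark that the predecessor quintuple is then degenerate, so nothing is missing.
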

\begin{proof}
%The proof is very similar to the pathwidth case. 
We order the three-partitions $(L^C,X^C,R^C)$ of $C$ according to the precedence relation (Definition~\ref{de:prec}).
We prove the lemma for every valid quintuple $Q=(\tau_-,L^C,X^C,R^C,\tau_+)$, by induction (according to this order) on $(L^C,X^C,R^C)$. 

For quintuples such that $L^C = \emptyset$, the property follows directly from Corollary~\ref{co:loctw} and the base case of Notation~\ref{no:ptw}. 
%As we shall see, these base cases are sufficient. 

Now take $Q=(\tau_-,L^C,X^C,R^C,\tau_+)$ with $L^C \neq \emptyset$. Let $imin$ the lowest node of the tree decomposition respecting $Q$, whose trace is 
$(L^C,X^C,R^C)$. If $imin$ is a join node, it has two sons with traces $(L1^C,X^C,R^C \cup L2^C)$ and $(L2^C,X^C,R^C \cup L1^C)$ and the proof follows from the $join$ case of Notation~\ref{no:ptw} and the induction hypothesis on the valid quintuples preceding $Q$. Note that both $L1^C$ and $L2^C$ are non empty, otherwise $imin$ would not be the lowest node with trace 
$(L^C,X^C,R^C)$. Similarily, if $imin$ is an $introduce(u)$ node, then we apply Corollary~\ref{co:loctw} and the $introduce$ case of Notation~\ref{no:ptw}
to the quintuple preceding $Q$ in the tree decomposition. The same holds if $imin$ is of type $forget(u)$ (using the $forget$ case of Notation~\ref{no:ptw}). 
We point out that, if $\tau_- = forget(u)$ and $L^C=\{u\}$, the quintuple $Q_-$ of Notation~\ref{no:ptw} corresponds to the base case of our induction.
\qed
\end{proof}

The next theorem is the main combinatorial tool for our \textsc{Treewidth} algorithm.

\begin{theorem}\label{th:tw}
The treewidth of $G$ is
	$$tw(G) = \min_{Q_{\text{last}}} ptw(Q_{\text{last}})$$
over all valid quintuples $Q_{\text{last}}$ of the form $(\tau_-,C \setminus \{univ\}, \{univ\}, \emptyset, forget(univ))$.
\end{theorem}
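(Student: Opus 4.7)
The proof splits into two inequalities. The $\geq$ direction follows almost immediately from Lemma~\ref{le:geptw}: take any optimal nice tree decomposition $(T,\mathcal{X})$ of $G$ whose root bag is $\{univ\}$, which exists by Proposition~\ref{pr:nice} together with the construction of $G'$. The root corresponds to a $forget(univ)$ operation, so its unique child has trace $(C\setminus\{univ\},\{univ\},\emptyset)$ on $C$; hence $(T,\mathcal{X})$ respects some valid quintuple $Q_{\text{last}}$ of the stated form, and Lemma~\ref{le:geptw} gives $tw(G)\geq ptw(Q_{\text{last}})\geq \min_{Q_{\text{last}}} ptw(Q_{\text{last}})$.

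For the $\leq$ direction, fix a quintuple $Q^*_{\text{last}}$ achieving the minimum and set $t=ptw(Q^*_{\text{last}})$. The plan is to prove by induction along the precedence order of Definition~\ref{de:prec} the following stronger statement: for every valid quintuple $Q$, there is a nice tree decomposition of $G$ respecting $Q$ in which every bag at or below the node $imax$ associated to $Q$ has size at most $ptw(Q)+1$. Applied to $Q^*_{\text{last}}$ and then capped with the root bag $\{univ\}$ produced by the $\tau_+ = forget(univ)$ operation, this yields a nice tree decomposition of $G$ of width at most $t$.

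The inductive step for $Q=(\tau_-,L^C,X^C,R^C,\tau_+)$ proceeds as follows. First, invoke the induction hypothesis on the child quintuple(s) realizing the minimum in Notation~\ref{no:ptw} (one child when $\tau_-$ is $introduce$ or $forget$, two when $\tau_-$ is $join$); by the recursive formula these decompositions have width at most $ptw(Q)$. Next, build a \emph{segment} of bags from $imin$ to $imax$ whose base content is $X^C\cup XTR^S(Q)$ in every node, with $XL^S(Q)$ added at $imin$, $XR^S(Q)$ added at $imax$, and the vertices of $XF^S(Q)$ introduced and forgotten one at a time in intermediate nodes (the vertex with $N(x)=X^C$, if any, being accounted for by the $\epsilon(Q)$ term). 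By Notations~\ref{no:tsets},~\ref{no:tXF} and~\ref{no:loctw}, every bag in this segment has size at most $loctw(Q)+1\leq ptw(Q)+1$. Finally, glue the segment onto the child decomposition(s) by making $imin$ an introduce, forget, or join node as dictated by $\tau_-$. Degenerate quintuples form the base case and are handled analogously, with $L^C=\emptyset$ forcing $imin=imax$ and $XL^S(Q)=XTR^S(Q)=\emptyset$.

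The main obstacle is verifying that across the many glued segments every vertex $x\in S$ ends up in a connected subtree of $T$ and that some bag contains $N(x)\cup\{x\}$. Here the sets $XTR^S$, $XL^S$, $XR^S$, $XF^S$ carry out exactly the right bookkeeping: depending on how $N(x)\subseteq C$ intersects $L^C$, $X^C$, $R^C$ at each quintuple along the root-to-leaf path, $x$ either passes through an entire segment (as $XTR^S$), is forced to $imin$ or $imax$ at a boundary quintuple (as $XL^S$ or $XR^S$), or is introduced and forgotten locally within one segment (as $XF^S$). A case analysis across the three possibilities for $\tau_-$ (and symmetrically for $\tau_+$) shows that these forced placements glue consistently at the $introduce$, $forget$, and $join$ junctions produced by Lemma~\ref{le:tvalidq}, completing the construction and hence the proof.
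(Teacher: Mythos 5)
Your overall strategy is the same as the paper's: the lower bound via Lemma~\ref{le:geptw}, and the upper bound by turning the tree of quintuples realizing $ptw(Q_{\text{last}})$ into a tree decomposition of $G$ whose bags are populated by $X^C$, $XTR^S$, $XL^S$, $XR^S$ and $XF^S$. However, there is a concrete flaw in how you place the vertices of $XF^S(Q)$. You introduce and forget them ``one at a time in intermediate nodes'' of a segment whose \emph{base content is $X^C\cup XTR^S(Q)$ in every node}, so such an intermediate bag has size $|X^C|+|XTR^S(Q)|+1$. But for treewidth, $loctw(Q)=|X^C|+\max\{|XTR^S|+|XL^S|,\,|XTR^S|+|XR^S|,\,\epsilon(Q)\}-1$ --- the $\epsilon(Q)$ term is \emph{not} added to $|XTR^S|$. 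Take $XTR^S(Q)\neq\emptyset$, $XL^S(Q)=XR^S(Q)=\emptyset$, and some $x\in XF^S(Q)$ with $N(x)\subsetneq X^C$ (so $\epsilon(Q)=0$): then $loctw(Q)+1=|X^C|+|XTR^S|$, while your intermediate bag has size $|X^C|+|XTR^S|+1$. Your construction therefore does not certify width $ptw(Q_{\text{last}})$; it can lose $1$. (Note that the analogous in-line placement \emph{is} correct for pathwidth, where $locpw$ adds $|XTR^S|$ to the $\epsilon$ term --- the discrepancy is precisely the point of the treewidth definition.) The paper avoids this by exploiting branching: for each not-yet-placed $x\in XF^S(Q_i)$ it attaches a \emph{pending} leaf node to the middle of the segment whose bag is exactly $N[x]\subseteq X^C\cup\{x\}$, of size at most $|X^C|+\epsilon(Q)\le loctw(Q)+1$, rather than threading $x$ through bags that also carry $XTR^S$.

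Two smaller points. First, your final paragraph asserts, rather than proves, that the forced placements of each $x\in S$ across consecutive segments form a connected subtree covering all edges at $x$; the paper spends the bulk of its argument (Appendix, Subsection~\ref{ss:th:tw}) on exactly this, including showing that membership in some $XF^S(Q_i)$ and membership in some $XTR^S/XL^S/XR^S$ are mutually exclusive (clique versus non-clique neighborhood in the auxiliary graph $HC$), which is what guarantees that a vertex handled by a pending bag appears nowhere else. Second, when you fix the $XF^S$ issue you must also ensure each such $x$ is placed only once even if $x\in XF^S(Q_i)$ for several nodes $i$ of the quintuple tree; the paper's ``which has not yet been added to some bag'' clause is doing real work there.
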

\begin{proof}
First note that $tw(G) \geq \min_{Q_{\text{last}}} ptw(Q_{\text{last}})$. Indeed, an optimal tree decomposition will contain a root whose bag corresponds to a single vertex $univ$, and this root will leave a trace on $C$ of type $(C \setminus \{univ\}, \{univ\}, \emptyset)$. The inequality follows from Lemma~\ref{le:geptw}.

Conversely, let $Q_{\text{last}}=(\tau_-,C \setminus \{univ\}, \{univ\}, \emptyset, forget(univ))$ be the valid quintuple of minimum $ptw$, among all quintuples of this type; denote by $t$ this minimum value. The computation of $ptw(Q_{\text{last}})$ naturally provides a tree $T^C$ of quintuples, the root being $Q_{last}$, and such that for the node corresponding to quintuple $Q$ its sons are the preceding quintuples realizing the minimum value for $ptw(Q)$ in Notation~\ref{no:ptw}. The leaves of this tree correspond to the base case of Notation~\ref{no:ptw}, hence to degenerate valid quintuples. 
By definition of $ptw$, all these selected quintuples have $loctw$ at most $t$. We construct a tree decomposition of $G$ with bags of size at most $t+1$.

 Let $Q_i =( \tau_{-i}, L^C_i, X^C_i, R^C_i, \tau_{+i})$ be the quintuple associated to node $i$ in $T^C$. Let $(T^C, {\cal X}^C)$ be the tree-decomposition of $G[C]$ obtained by associating to each node $i$ of $T^C$ the bag $X^C$. Each node $i$, except for the leaves, corresponds to an $introduce$, $forget$ or $join$ operation $\tau_{-i}$.
  
Let $T$ be the tree obtained from $T^C$ by replacing each node $i$ with a path of three nodes, denoted $imin, imid$ and $imax$ (from the bottom towards the top). 
%Let $Q_i =( \tau_{-i}, L^C_i, X^C_i, R^C_i, \tau_{+i})$ be the quintuple associated to node $i$ in $T^C$. 
Initially, we associate to the three nodes $imin, imid, imax$ the same bag $X^C_i$. Now, for each $i$,
\begin{enumerate}
\item add $XTR^S(Q_i)$ to all bags in the subpath $[imin,imax]$ of $T$;
\item add $XL^S(Q_i)$ to bag number $imin$;
\item add $XR^S(Q_i)$ to bag number $imax$;
\item For each vertex $x \in XF^S(Q_i)$, which has not yet been added to some bag of $T$, create a new node of $T$ adjacent only to $imid$ and associate to this node the bag $N[x]$. These nodes are called \emph{pending nodes}.
\end{enumerate}

We claim that in this way we have obtained a tree decomposition $(T, {\cal X})$ of $G$. Clearly all bags created at step $i$ are of size at most $loctw(Q_i)+1$, hence at most $t+1$. It remains to prove that these bags satisfy the conditions of a tree decomposition.

Recall that $(T^C, {\cal X}^C)$ is a tree decomposition of $G[C]$. By construction of $(T, {\cal X})$,  for each vertex $y \in C$, the bags of $(T, {\cal X})$ containing it will form a subtree of $T$. Also, for each edge $yz$ of $G[C]$, some bag of $(T, {\cal X})$ shall contain both $y$ and $z$. It remains to verify the same type of conditions for vertices of $S$ and edges incident to them. 
This part of the proof is skipped due to space restrictions, see details in the Appendix, Subsection~\ref{ss:th:tw}.
\qed
\end{proof}

\begin{theorem}\label{th:twvc}
The \textsc{Treewidth} problem can be solved in $O^*(4^k)$ time, where $k$ is the size of the minimum vertex cover of the input graph. 
\end{theorem}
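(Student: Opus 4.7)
The plan is to algorithmically realize Theorem~\ref{th:tw} by bottom-up dynamic programming over valid quintuples. I would first compute a minimum vertex cover $C$ of $G$ in $O^*(1.28^k)$ time using Proposition~\ref{pr:VC}, and augment $G$ with the universal vertex as described just before Section~\ref{se:tw4k}. Then, iterating over valid partitions $(L^C, X^C, R^C)$ of $C$ in an order refining the precedence relation of Definition~\ref{de:prec}, I would fill in a DP table storing, for each partition and each relevant $\tau_+$, the quantity $\min_{\tau_-} ptw(\tau_-, L^C, X^C, R^C, \tau_+)$. With such a table on hand, each recurrence of Notation~\ref{no:ptw} reduces to a constant number of lookups plus a polynomial $loctw$ computation.

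The naive table indexed by full quintuples has $\Theta^*(5^k)$ entries because both $\tau_-$ and $\tau_+$ may be \emph{join} operations, contributing $2^{|L^C|}$ and $2^{|R^C|}$ subset choices respectively. The key observation to obtain the $4^k$ bound is that the value $ptw(Q)$ depends on $\tau_+$ only through its \emph{type}: introduce, forget (with a specific forgotten vertex $v \in X^C$), or join. Indeed, by Notation~\ref{no:tsets}, $XR^S(Q) = \emptyset$ whenever $\tau_+$ is an introduce or a join, so $loctw(Q)$ from Notation~\ref{no:loctw} depends neither on the vertex introduced by $\tau_+$ nor on the subset $LR^C$ chosen in a join, and the preceding quintuples $Q_-$ in Notation~\ref{no:ptw} are determined by $\tau_-$ only. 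Consequently, the table collapses into three families $P_{\mathrm{intro}}(L^C, X^C, R^C)$, $P_{\mathrm{forget}}(L^C, X^C, R^C, v)$ for $v \in X^C$ with $N(v) \cap R^C = \emptyset$, and $P_{\mathrm{join}}(L^C, X^C, R^C)$, totaling $O^*(3^k)$ entries.

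Computing a single entry amounts to enumerating valid $\tau_-$: at most $|X^C| + |L^C| = O(k)$ introduce/forget options (one table lookup each), and at most $2^{|L^C|}$ non-trivial two-partitions $(L1^C, L2^C)$ of $L^C$ in the join case, each requiring a polynomial computation of $XL^S$, $loctw$, and two $P_{\mathrm{join}}$-lookups on $(L1^C, X^C, R^C \cup L2^C)$ and $(L2^C, X^C, R^C \cup L1^C)$. Summing via the multinomial identity
\[
\sum_{a+b+c=k}\binom{k}{a,b,c}\,2^{a} \;=\; (2+1+1)^k \;=\; 4^k,
\]
the total work is $O^*(4^k)$, and by Theorem~\ref{th:tw} the treewidth of the augmented graph is then recovered as $P_{\mathrm{forget}}(C \setminus \{univ\}, \{univ\}, \emptyset, univ)$.

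The main obstacle I expect is carefully verifying the collapse of the $\tau_+$ dimension: one must check that $ptw$ is truly invariant under the choice of introduced vertex in an \emph{introduce} $\tau_+$ and under the choice of $LR^C$ in a \emph{join} $\tau_+$, and that the validity conditions from Lemma~\ref{le:tvalidq} do not force a finer indexing. Both points follow directly from inspecting the definitions of $XR^S$, $loctw$, and the recurrence of Notation~\ref{no:ptw}, after which the remaining analysis is the routine multinomial calculation above.
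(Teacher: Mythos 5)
Your proposal is correct and follows essentially the same route as the paper: the same dynamic program over valid quintuples ordered by the precedence relation, with the $4^k$ bound coming in both cases from the $join$-$\tau_-$ quintuples, i.e.\ from four-partitions $(L1^C,L2^C,X^C,R^C)$ of $C$ counted via $\sum_{a+b+c=k}\binom{k}{a,b,c}2^a=4^k$. The only difference is bookkeeping — you collapse the $\tau_+$ dimension (including the introduced vertex, which the paper does not bother to do) so the table has $O^*(3^k)$ entries with $O^*(2^{|L^C|})$ work per entry, whereas the paper stores $O^*(4^k)$ simplified quintuples with polynomial work each and maintains the minima over $\tau_{--}$ online; both organizations are valid and yield the same total.
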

\begin{proof}
Given an arbitrary graph $G$, we first compute a minimum vertex cover in $O^*(1.28^k)$ (Proposition~\ref{pr:VC}). Then $G$ is transformed into a graph $G'$ by adding a universal vertex $univ$. Let $C$ be the vertex cover of $G'$ obtained by adding $univ$ to the minimum vertex cover of $G$ (hence $|C| = k+1$). The treewidth of $G'$ is computed as follows.
\begin{enumerate}
\item Compute all valid partitions $(L^C,X^C,R^C)$, by enumerating all three-partitions of $C$ and keeping only the valid ones (Lemma~\ref{le:valid}). This can be done in time $O^*(3^k)$. The number of valid partitions is at most $3^{k+1}$.
\item Compute all valid quintuples using Lemma~\ref{le:tvalidq}. For quintuples 
$Q=(\tau_{-},L^C,X^C,R^C,\tau_{+})$ where $\tau_+$ is a $join$ node, the parameters of
this $join$ are not relevant for $loctw(Q)$ and $ptw(Q)$. Therefore, we do not need to memorize the parameters of the $join$. With this simplification, we only need to store $O^*(4^k)$ valid (simplified) quintuples, and their computation can be performed in time $O^*(4^k)$. The $4^k$ comes from quintuples of the type $(join(X^C;L1^C,L2^C),L^C,$ $X^C,R^C,\tau_+)$, since 
$(L1^C, L2^C, X^C,R^C)$ is a partition of $C$ into four parts. The quintuples are then sorted by the precedence relation on the corresponding valid three-partitions. This  can be done within the same running time, the triples $(L^C,X^C,R^C)$ being sorted by increasing size of $L^C$, and in case of tie-breaks by increasing size of $X^C$ (see Definition~\ref{de:prec} and following remarks).
\item For each valid quintuple $Q=(\tau_{-},L^C,X^C,R^C,\tau_{+})$, according to the ordering above, compute by dynamic programming $loctw(Q)$ (Notation~\ref{no:loctw}) and then $ptw(Q)$ (Notation~\ref{no:ptw}). In order to process efficiently the quintuples $Q_-$  of Notation~\ref{no:ptw}, let us  observe the value $\min_{Q_-} ptw(Q_-)$ over all $Q_-$ of a given type can be updated online, as soon as we compute $ptw(Q_-)$. Indeed, for all the $Q_-$ of a same type, the first parameter $\tau_{--}$ will differ, but the four others are equal. So it is actually a minimum over all $\tau_{--}$. %and the parameters of $\tau_{--}$ are irrelevant. 
The same holds for the minimum over all $Q1_-$ and over all $Q2_-$. Hence, when we process quintuple $Q$, we have these minima at hand and the value $ptw(Q)$ is computable in polynomial time. This step can be performed in polynomial for each $Q$, so the overall running time is still $O^*(4^k)$.
\item Compute the treewidth of $G'$ using Theorem~\ref{th:tw}, and return $tw(G)=tw(G')-1$. This step takes polynomial running time. 
\end{enumerate}
Altogether, the algorithm takes $O^*(4^k)$ running time and space. This achieves the proof of the theorem. The algorithm can also be adapted to return, within the same time bounds, an optimal tree decomposition of the input graph.
\qed
\end{proof}

Note that the algorithm for pathwidth, described in Appendix~\ref{se:pw} is quite similar, with a slight difference in the definition of local pathwidth (Notation~\ref{no:bagsize}). Due to the fact that it only uses $introduce$ and $forget$ operations, the number of valid quintuples is $O^*(3^k)$, and so is the running time of the pathwidth algorithm.
\begin{theorem}
The \textsc{Pathwidth} problem can be solved in $O^*(3^k)$ time, where $k$ is the size of the minimum vertex cover of the input graph. 
\end{theorem}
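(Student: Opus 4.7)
My plan is to mirror the treewidth proof from Theorem~\ref{th:twvc}, replacing nice tree decompositions with nice path decompositions throughout. Since the paper already restricts attention to decompositions of $G'$ (input graph plus a universal vertex $univ$) whose root bag is $\{univ\}$, and since the transformation preserves pathwidth up to $+1$, it suffices to compute the pathwidth of $G'$ respecting a fixed vertex cover $C$ of size $k+1$. I first invoke Proposition~\ref{pr:VC} to get such a vertex cover in $O^*(1.28^k)$, then proceed in four stages analogous to the treewidth algorithm.

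The central observation is that nice path decompositions have no $join$ nodes. Consequently, every operation $\tau_-$ and $\tau_+$ in a valid quintuple $Q=(\tau_-,L^C,X^C,R^C,\tau_+)$ is either $introduce$ or $forget$, and the definitions simplify accordingly: in particular, the $join$ branches in Lemma~\ref{le:tvalidq} disappear, and the second bullet of Notation~\ref{no:tsets} reduces to the $introduce$ case only (so $XL^S(Q)$ is empty whenever $\tau_-$ is a $forget$, and similarly for $XR^S(Q)$). One defines the \emph{local pathwidth} $locpw(Q)$ with the same formula as $loctw$ (Notation~\ref{no:loctw}), the only arithmetic difference being the one referenced by the paper as a "slight difference" arising from the absence of join-induced interactions. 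The \emph{partial pathwidth} $ppw(Q)$ is then defined by the same recurrence as Notation~\ref{no:ptw}, but without the $join$ case; each step considers only one predecessor quintuple $Q_-$, obtained by either undoing an $introduce$ or undoing a $forget$ at $imin$.

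With these definitions in place, the analogues of Lemmata~\ref{le:timinimax},~\ref{le:tepsilon},~\ref{le:geptw} and of Theorem~\ref{th:tw} go through by exactly the same arguments, since those proofs never essentially used join nodes; in the places where they split on the type of $imin$ or the father of $imax$, one simply drops the join subcase. This yields the formula $pw(G) = \min_{Q_{\text{last}}} ppw(Q_{\text{last}})$ ranging over quintuples of the form $(\tau_-, C\setminus\{univ\}, \{univ\}, \emptyset, forget(univ))$, together with an accompanying tree-of-quintuples construction that produces a genuine path decomposition of $G$ of width $ppw(Q_{\text{last}})$.

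The running-time analysis is where the improvement from $4^k$ to $3^k$ appears. Valid three-partitions $(L^C,X^C,R^C)$ of $C$ can be enumerated in $O^*(3^k)$ by Lemma~\ref{le:valid}. Each valid quintuple is determined by such a three-partition together with two operations $\tau_-,\tau_+\in\{introduce(u),forget(u)\}$ parameterized by a single vertex of $C$, giving only a polynomial multiplicative overhead per partition and thus $O^*(3^k)$ quintuples total. We sort them by the precedence relation of Definition~\ref{de:prec} (ties broken by $|X^C|$), compute $locpw$ and $ppw$ in this order using the same online-minimum trick as in the treewidth algorithm (so each $ppw(Q)$ is computed in polynomial time from already-stored minima of the $ppw(Q_-)$ sharing the same $(L^C_-,X^C_-,R^C_-,\tau_+^-)$), and finally read off the pathwidth from the root-type quintuples. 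The main obstacle is essentially bookkeeping: verifying that the slight change in the definition of $locpw$ and the disappearance of join indeed make all earlier lemmata go through unchanged, and that the online-minimum bookkeeping remains correct with only two quintuple types to handle at $imin$. The total time and space are $O^*(3^k)$, completing the proof.
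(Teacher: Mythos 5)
Your overall architecture matches the paper's: restrict to $introduce$/$forget$ operations, enumerate the $O^*(3^k)$ valid quintuples, and run the same precedence-ordered dynamic program. But you have a genuine gap exactly at the point you wave away as "the slight difference referenced by the paper": the treatment of vertices $x \in S$ with $N_G(x) \subseteq X^C$. You cannot define $locpw$ "with the same formula as $loctw$." In the treewidth argument, such a vertex is handled by hanging a \emph{pending} bag $N[x]$ off the subpath $[imin,imax]$, which is why $\epsilon(Q)$ only competes in the max against $|XTR^S|+|XL^S|$ and $|XTR^S|+|XR^S|$, and why only vertices with $N(x)=X^C$ exactly matter. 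In a path decomposition there are no pending bags: if $N(x)$ is not contained in the bag $X^C_-$ preceding $imin$ nor in the bag $X^C_+$ following $imax$, then $x$ is forced into some bag of the subpath $[imin,imax]$, every bag of which already contains $X^C \cup XTR^S(Q)$. This forces a bag of size at least $|X^C|+|XTR^S(Q)|+1$, so the correct formula is $locpw(Q)=|X^C|+|XTR^S|+\max\{|XL^S|,|XR^S|,\epsilon(Q)\}-1$, with $\epsilon(Q)$ \emph{added on top of} $|XTR^S|$ and with $XF^S(Q)$ redefined to consist of those $x$ with $N(x)\subseteq X^C$ but $N(x)\not\subseteq X^C_-$ and $N(x)\not\subseteq X^C_+$ (the paper's Notation~\ref{no:XF} and Lemma~\ref{le:epsilon}). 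Your formula, taken literally, underestimates the width by up to one per such quintuple and also misclassifies which vertices of $S$ are "forced," so the resulting algorithm would not be correct. Establishing the lower bound $|X^C|+|XTR^S(Q)|+\epsilon(Q)$ requires a non-trivial case analysis on the types of $\tau_-$ and $\tau_+$ (when $\tau_+$ is an $introduce$, one has to argue about the bag $X_{imax+1}$ of the successor quintuple), which your proposal does not supply.

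The same issue propagates to the constructive direction of the analogue of Theorem~\ref{th:tw}: the construction does not "go through by exactly the same arguments" once you drop the pending nodes. The paper instead replaces each bag $X^C_j$ by $|XF^S(Q_j)|+2$ consecutive copies and inserts each vertex of $XF^S(Q_j)$ into exactly one interior copy (choosing for each such $x$ a locally minimal bag containing $N_G(x)$, so that $x$ is not also placed elsewhere), which is what certifies that the width achieved equals $\min_{Q_{\text{last}}} ppw(Q_{\text{last}})$. Everything else in your proposal — the $O^*(3^k)$ count of quintuples, the precedence ordering, the online-minimum bookkeeping — is fine and matches the paper; the missing piece is precisely the redefinition of $XF^S$, $\epsilon$, and $locpw$ and the two lemmas that justify them.
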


\section{An $O^*(3^k)$ algorithm for \textsc{treewidth} (sketch)}
%%%%%%%%%%%%%%%%%%%%%%%%%%%
We want to improve the running time of our algorithm for treewidth from $O^*(4^k)$ to $O^*(3^k)$. Due to space restrictions we only sketch here, rather informally, the main ideas. Full details are given in Appendix~\ref{se:tw3k}.

We need to cope with join quintuples $Q= (\tau_- = join(X^C;L1^C,L2^C),L^C,X^C,R^C,\tau_+)$ %where $\tau_- = join(X^C;L1^C,L2^C)$ 
because we cannot afford to store parameters $L1^C,L2^C$; we only want to recall that $\tau_-$ is of type $join$. (Note that our algorithm already does this kind of simplification when $\tau_+$ is of type $join$).

Recall that the partial treewidth $ptw(Q)$ of a valid quintuple $Q$ is defined (Notation~\ref{no:ptw}) as the minimum value $t$ such that there is a \emph{trimmed} nice tree decomposition of $G[C]$, with a node $i$ corresponding to valid quintuple $Q$, and such that all nodes below $i$ have valid quintuples of local treewidth at most $t$. (By \emph{trimmed} nice decomposition we mean that leaf bags might not be of size 1, see Appendix~\ref{se:tw3k} for a formal description.) The part of the tree decomposition rooted in $i$ is called a $Q$-rooted subtree decomposition. We introduce an integer parameter $d$ and define $ptw(d,Q)$ as above, but adding the constraint that, in the $Q$-rooted subtree decomposition, every path from a leaf to the root has at most $d$ nodes of type $join$. E.g., if $d = 0$, it means that the $Q$-rooted subtree must be a path. Observe that, for $d=k$, we have $ptw(k,Q) = ptw(Q)$, because any nice decomposition of $G[C]$ will have at most $k$ join nodes from a leaf to the root.

For ``simplified'' join quintuples $\tilde{Q} = (join,L^C,X^C,R^C,\tau_+)$ we let $ptw(d,\tilde{Q})$ be the minimum value of $ptw(d, join(X^C;L1^C,L2^C),X^C,R^C,\tau_+)$, over all possible partitions $(L1^C,L2^C)$ of $L^C$.

Our algorithm \textsc{Ptw} (Algorithm~\ref{al:Ptw}, Appendix~\ref{se:tw3k}) computes  in $O^*(3^k)$ time the values $pwd(d,\tilde{Q})$ for all simplified quintuples $\tilde{Q}$, from values $ptw(d-1,\dots)$. For that purpose, it first calls
a specific algorithm \textsc{JoinPtw}, which computes in $O^*(3^k)$ time all values $ptw(d,(join,\dots))$ for all simplified join quintuples. For all other quintuples, it simply runs (almost) the same algorithm as in Theorem~\ref{th:twvc}, and since we avoid joins this can be done within the required time bound of  $O^*(3^k)$.

We give some hints about the \textsc{JoinPtw} algorithm (Algorithm~\ref{al:JoinPtw}, Appendix~\ref{se:tw3k}), computing all values $ptw(d,(join,\dots))$ for simplified join quintuples. Let $\tilde{Q} = (join,L^C,X^C,R^C,\tau_+)$. To check that $ptw(d, \tilde{Q}) \leq t$, we need the existence of a two-partition $(L1^C, L2^C)$ of $L^C$ and of two (simplified) quintuples $Q1_- =(\tau1_{--},L1^C,X^C, C \setminus (X^C \cup  L1^C),join)$ and 
$Q1_- =(\tau2_{--}, L2^C,X^C, C \setminus (X^C \cup  L2^C),join)$ such that $ptw(d-1, Q1_-)\leq t$ and $ptw(d-1, Q2_-)\leq t$. Observe that here we make use of parameter $d$. Now comes into play the fast subset convolution (Theorem~\ref{th:convolution}).  We fix a set $X^C$. We define a boolean function $ptw\_atmost\_t(L1^C)$ over subsets $L1^C$ of $C \setminus X^C$, which is set to $1$ if $ptw(d-1,(\tau1_{--},L1^C,X^C, C \setminus (X^C \cup  L1^C),join)) \leq t$ for some $\tau1_{--}$, set to 0 otherwise. 
The subset convolution $(ptw\_atmost\_t \star ptw\_atmost\_t)$ is an integer function over subsets $L^C$ of $C \setminus X^C$ defined as:
$$(ptw\_atmost\_t \star ptw\_atmost\_t)(L^C) = \sum_{(L1^C,L2^C)} ptw\_atmost\_t(L1^C) \cdot ptw\_atmost\_t(L2^C)$$
over all partitions  $(L1^C,L2^C)$ of $L^C$.
Therefore, the existence of $Q1_-$ and $Q2_-$ is equivalent to the fact that $(ptw\_atmost\_t \star ptw\_atmost\_t)(L^C) \geq 1$. By Theorem~\ref{th:convolution}, this subset convolution, over all subsets $L^C$ of $C \setminus X^C$, can be computed in $O^*(2^{k - |X^C|})$. Hence, for a fixed $X^C$, all quintuples $\tilde{Q} = (join,L^C,X^C,R^C,\tau_+)$ can be processed within the same running time. This will make $O^*(3^k)$ time over all simplified join quintuples. Several technical points have been deliberately omitted, especially the fact that we cannot define $loctw(\tilde{Q})$ on a simplified join quintuple, because the set $XL^S$ (Notation~\ref{no:tsets}) depends on the parameters of the $join$. Full details are in Appendix~\ref{se:tw3k}.

%%%%%%%%%%%%%%%%%%%%%%%%%%%%%%%%%%%%%%%%%%%%%%%%%%%%%%%%%%%%%%%%%%%%%%%%%%%%%%%%%%%%%%%%%%%%%
\section{Concluding remarks}

We have shown that it is possible to obtain $O^*(3^k)$ time algorithms for  computing 
{\sc treewidth} and {\sc pathwidth} where parameter $k$ is the size of the {\it vertex cover} 
of the graph.  This puts {\it vertex cover} in the same class as parameter $n$ 
as both allows an $O^*(c^k)$ time algorithm for the considered problems. 
It is an interesting question whether an $O^*(c^k)$ time algorithm exists when using 
{\it feedback vertex set} of the graph as the parameter $k$.

\begin{small}

\bibliographystyle{plain}
\bibliography{LayoutVC}
 
\end{small}

\appendix
%%%%%%%%%%%%%%%%%%%%%%%%%%%%%%%%%

%%%%%%%%%%%%%%%%%%%%%%%%%%%%%%%%%%%%%%%%%%%%%%%%%%%%%%%%%%%%%%%%%%%%%%%%%%%%%%%%%%%%%%%%%%%%%%%%%%%%%%%%%%%%%%%%
\section{\textsc{Pathwidth} parameterized by vertex cover}\label{se:pw}

In the case of pathwidth, the following lemma holds even if $L^C = \emptyset$.
\begin{lemma}\label{le:psubpath}
Let $(L^C,X^C,R^C)$ be a three-partition of $C$ and let $(P,{\cal X})$ be a nice path decomposition of $G$ respecting it. The set of nodes of $P$ whose trace is $(L^C,X^C,R^C)$  induces a (connected) subpath of $P$, from a lower node $imin$ to an upper node $imax$. 
\end{lemma}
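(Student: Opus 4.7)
The plan is to exploit the linear structure of a path decomposition, which makes things much simpler than in the tree case and removes the need for the hypothesis $L^C \neq \emptyset$. I would orient the path $P$ from the leaf to the root, labelling its nodes $1, 2, \dots, N$, so that for each $k$ the set $V_k$ (union of bags in the subpath up to $k$) only grows as $k$ increases. The key monotonicity observation is that since a nice path decomposition has only $introduce$ and $forget$ operations, passing from $k$ to $k+1$ either moves a single vertex from $R_k$ into $X_{k+1}$ (introduce) or a single vertex from $X_k$ into $L_{k+1}$ (forget). In either case $L_k \subseteq L_{k+1}$ and $R_{k+1} \subseteq R_k$, so intersecting with $C$ gives $L_k^C \subseteq L_{k+1}^C$ and $R_{k+1}^C \subseteq R_k^C$ for every $k$.

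From this the lemma falls out essentially for free. Suppose $i \leq j$ are two nodes whose trace on $C$ equals $(L^C, X^C, R^C)$; pick any $k$ with $i \leq k \leq j$. By the monotonicity above,
\[
L^C = L_i^C \;\subseteq\; L_k^C \;\subseteq\; L_j^C = L^C, \qquad R^C = R_j^C \;\subseteq\; R_k^C \;\subseteq\; R_i^C = R^C,
\]
so $L_k^C = L^C$ and $R_k^C = R^C$, and hence $X_k^C = C \setminus (L_k^C \cup R_k^C) = X^C$. Thus every node between $i$ and $j$ also leaves the trace $(L^C, X^C, R^C)$, so the set of nodes with this trace is an interval of $P$, i.e.\ a connected subpath.

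To close the statement, I would let $imin$ and $imax$ be the smallest and largest indices (with respect to this orientation) of nodes with trace $(L^C, X^C, R^C)$; by the argument above every node between them has the same trace, and these are by definition the \emph{lower} and \emph{upper} endpoints of the subpath.

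There is essentially no obstacle here: the proof of Lemma~\ref{le:path} used a vertex $x \in L^C$ to argue that two trace-equal nodes must lie on a common root-to-leaf directed path in the tree, which forced the assumption $L^C \neq \emptyset$. In a path decomposition this topological issue disappears entirely, and the argument reduces to the monotone behaviour of $L_k^C$ and $R_k^C$ along the path, so the hypothesis $L^C \neq \emptyset$ can be dropped.
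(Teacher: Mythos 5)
Your proof is correct, and it takes a mildly but genuinely different route from the paper's. The paper disposes of this lemma in one line by re-using the proof of Lemma~\ref{le:path}: the only place that proof needed $L^C \neq \emptyset$ was the lowest-common-ancestor argument showing two trace-equal nodes are comparable, which is vacuous on a path; the ``interval'' part is then re-derived via the connectivity of the bags containing each vertex of $C$ (a vertex of $X_i \cap L^C$ would have to reappear in $X_{imin}$, etc.). You instead observe that in a nice path decomposition each step is an $introduce$ or a $forget$, so $L_k^C$ is nondecreasing and $R_k^C$ is nonincreasing along the path, and the interval property follows by sandwiching $L_k^C$ between $L_i^C$ and $L_j^C$ (and dually for $R_k^C$), with $X_k^C$ determined as the complement. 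This monotonicity argument is self-contained, shorter, and makes it transparent why the hypothesis $L^C \neq \emptyset$ is not needed; the paper's version has the advantage of reusing existing machinery verbatim and of not depending on niceness (its connectivity argument works for arbitrary path decompositions). The one small step you leave implicit --- that the vertex $u$ introduced at node $k+1$ indeed lies in $R_k$, i.e.\ has not appeared in any earlier bag --- follows from the connectivity condition of a path decomposition and is standard, so this is not a gap.
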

\begin{proof}
The arguments of Lemma~\ref{le:path} hold here, since the only place in that proof where we used the assumption $L^C \neq \emptyset$ was in the case when the tree decomposition had $join$ nodes.
\qed
\end{proof}

%In order to Òglue" a valid three-partition $(L^C,X^C,R^C)$ with a previous and a next one, into a nice path decomposition of $G[C]$, we need to control the operation right below and right above the subpath of nodes leaving this trace. Therefore we introduce the following notion of \emph{valid quintuples}. 

The definition of a valid quintuple is the same as for treewidth, but of course in this case we will not use $join$ nodes. Next result is a restriction of Lemma~\ref{le:tvalidq}:
%\begin{definition}\label{de:pvalidq}
%Let $(L^C, X^C,R^C)$ be a valid partition of $C$. Let $\tau_+$ and $\tau_-$ be operations of type $introduce(u)$ or $forget(u)$ for some vertex $u$. We say that 
%$(\tau_-,L^C,X^C,R^C,\tau_+)$ is a \emph{valid quintuple} if there is a nice path decomposition  $(P,{\cal X})$ of $G$ respecting $(L^C, X^C,R^C)$, with  
%$imin$ and $imax$ be the lower and upper nodes corresponding to this trace,
%such that $\tau_- = \tau_{imin}$ and $\tau_+ = \tau_{imax+1}$, where $imax+1$ is the father of $imax$. In the particular case when $imax$ is the root we assume for convenience that $\tau_{imax+1}$ is the $forget$ operation on the unique vertex of the root bag.
%
%We say that the nice path decomposition $(P,{\cal X})$  \emph{respects} the quintuple $(\tau_-,L^C,X^C,R^C,\tau_+)$.
%\end{definition}

%The following result characterizes all valid quintuples and will be used by our algorithm to enumerate all of them. 
\begin{lemma}\label{le:pvalidq}
A quintuple $(\tau_-,L^C,X^C,R^C,\tau_+)$ is valid if $(L^C, X^C,R^C)$ is a valid partition of $C$ and the following two conditions hold:
\begin{itemize}
\item $\tau_-$ is of type 
\begin{itemize}
\item $introduce(u)$ for some $u \in X^C$ and $u \cap L^C = \emptyset$, or
\item $forget(u)$ for some $u \in L^C$
\end{itemize}
\item $\tau_+$ is of type
\begin{itemize}
\item $introduce(v)$ with $v \in R^C$, or
\item $forget(v)$ with $v \in X^C$ and $N(v) \cap R^C = \emptyset$. %, or
%\item $forget(x)$, with $x \in S$ when $X^C = R^C = \emptyset$.
\end{itemize}
\end{itemize}
\end{lemma}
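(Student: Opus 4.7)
\textbf{Proof plan for Lemma~\ref{le:pvalidq}.}

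The plan is to establish the characterization in both directions, viewing this lemma as the path-decomposition analogue of Lemma~\ref{le:tvalidq}. The argument is essentially the restriction of the tree-decomposition proof obtained by deleting every clause involving $join$ operations, together with one small extension: we no longer need to assume $L^C \neq \emptyset$, because Lemma~\ref{le:psubpath} already guarantees the existence of a lower node $imin$ and an upper node $imax$ on the path regardless of whether $L^C$ is empty.

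For the forward direction, I assume the quintuple is valid and pick a nice path decomposition $(P,\mathcal{X})$ respecting it. Using Lemma~\ref{le:psubpath}, I extract the subpath of $P$ whose trace on $C$ is exactly $(L^C,X^C,R^C)$ and call its endpoints $imin$ and $imax$; then $\tau_-=\tau_{imin}$ and $\tau_+=\tau_{imax+1}$ are introduce or forget (there are no join nodes in a path decomposition). I then case-split on the type of $\tau_-$. If $\tau_-=introduce(u)$, then $u \notin X_{imin-1}$ but $u \in X_{imin}$; since the trace below $imin$ differs from $(L^C,X^C,R^C)$, the vertex $u$ must lie in $C$, and, being newly introduced, $u \in X^C$. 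Moreover, by definition of a path decomposition the bags containing $u$ form a contiguous subpath starting at $imin$, so no bag below $imin$ contains $u$; consequently, if $u$ had a neighbor $a \in L^C$ then $a$ (which appears only in bags strictly below $imin$) could never share a bag with $u$, contradicting the edge-covering axiom. Hence $N(u) \cap L^C = \emptyset$. The case $\tau_-=forget(u)$ is similar and forces $u \in L^C$ from the trace equation $X^C_{imin-1} \setminus X^C_{imin} = \{u\}\cap C$. The analysis of $\tau_+$ is symmetric, swapping the roles of $L^C$ and $R^C$ and using bag $X_{imax+1}$.

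For the backward direction, I construct a nice path decomposition respecting the quintuple by starting from the coarse three-bag decomposition $(L^C \cup S,\, X^C \cup S,\, R^C \cup S)$, which is a valid path decomposition of $G$ by the proof of Lemma~\ref{le:valid}, and then refining it. I first expand each of the three bags into a sequence of unit introduce/forget operations in an arbitrary order, obtaining a nice path decomposition whose subpath with trace $(L^C,X^C,R^C)$ is nonempty. I then adjust the operation entering that subpath from below so as to match $\tau_-$: if $\tau_-=introduce(u)$ with $u \in X^C$ and $N(u)\cap L^C=\emptyset$, I reorder the introduces so that $u$ is introduced last, right at $imin$; the condition $N(u)\cap L^C=\emptyset$ is exactly what guarantees that no edge incident to $u$ is required to be covered strictly below $imin$, so the refined decomposition remains valid. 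If $\tau_-=forget(u)$ with $u \in L^C$, I reorder the forgets in the bottom block so that $u$ is forgotten last, immediately before entering $imin$. A symmetric adjustment handles $\tau_+$: for $\tau_+=introduce(v)$ with $v \in R^C$, I pull $v$'s introduction to be the first operation above $imax$; for $\tau_+=forget(v)$ with $v \in X^C$ and $N(v)\cap R^C=\emptyset$, I pull $v$'s forgetting to be the first operation above $imax$, where the hypothesis on $N(v)$ ensures that no edge from $v$ needs coverage in bags above $imax$.

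The only nontrivial step I expect is bookkeeping during the refinement: one must check that after reordering the unit operations, the connectivity axiom (vertices occupy contiguous subpaths) and the edge-coverage axiom still hold. The two emptiness conditions $N(u)\cap L^C=\emptyset$ and $N(v)\cap R^C=\emptyset$ appearing in the lemma are precisely the conditions that make these verifications go through, which is why they appear in the statement. Since path decompositions admit no join operations, no further cases arise and the lemma follows from these two arguments.
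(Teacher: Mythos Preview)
Your proposal is correct and the forward direction is exactly the paper's argument (the paper simply says Lemma~\ref{le:pvalidq} is the restriction of Lemma~\ref{le:tvalidq} obtained by deleting the $join$ cases, and your case analysis on $\tau_-,\tau_+$ mirrors that proof verbatim).

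For the backward direction your construction deviates slightly from the paper's. The paper does not start from the three-bag decomposition of Lemma~\ref{le:valid} and reorder; instead it directly writes down the five-bag path
\[
L^C_- \cup X^C_-,\quad X^C_-,\quad X^C,\quad X^C_+,\quad X^C_+ \cup R^C_+
\]
(where $(L^C_-,X^C_-,R^C_-)$ and $(L^C_+,X^C_+,R^C_+)$ are the traces just before~$\tau_-$ and just after~$\tau_+$), adds $S$ to every bag, and then invokes Proposition~\ref{pr:nice}. Because $X^C_-$ and $X^C_+$ sit immediately adjacent to $X^C$, the refinement automatically places $\tau_-$ at $imin$ and $\tau_+$ at $imax{+}1$ with no reordering needed. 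Your approach reaches the same conclusion but trades that one-line construction for the bookkeeping you flag at the end; both are fine, the paper's is just slightly tidier.
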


Given a valid quintuple $Q$, sets $XTR^S(Q)$, $XL^S(Q)$ and $XR^S(Q)$ are defined like in Notation~\ref{no:tsets}.
%\begin{notation}\label{no:sets}
% Let $Q = (\tau_-,L^C,X^C,R^C,\tau_+)$ be a valid quintuple. 
%\begin{itemize}
%%\item Let $L^S = \{x \in S \mid N(x) \subseteq L_i\}$.
%%\item Let $R^S = \{x \in S \mid N(x) \subseteq R_i\}$.
%\item We denote $XTR^S(Q) = \{x \in S \mid N(x) \cap L^C \neq \emptyset \text{~and~} N(x) \cap R^C \neq \emptyset\}$.
%\item Suppose $\tau_- = introduce(u)$. Then we denote $XL^S(Q) = \{x \in S \mid N(x) \subseteq L^C \cup X^C \text{~and~} u \in N(x) \text{~and~} N(x) \cap L^C \neq \emptyset\}$.
%If $\tau_-$ is a $forget$ operation, then we let $XL^S(Q) = \emptyset$.
%\item Suppose that $\tau_+ = forget(v)$. Then we take $XR^S(Q) = \{x \in S \mid  N(x) \subseteq R^C \cup X^C \text{~and~} v \in N(x) \text{~and~} N(x) \cap R^C \neq \emptyset\}$.
%If $\tau_+$ is a $introduce$ operation, then $XR^S = \emptyset$.
%\end{itemize}
%\end{notation}
%
%
%\begin{figure}[!ht]
%\begin{center}
%\includegraphics[scale=.6]{fig/notation8-0}
%\caption{For the quintuple $Q = (\tau_-,L^C,X^C,R^C,\tau_+)$ where $\tau_-$ is
%of type $introduce(u)$ and $\tau_+$ is of type $forget(v)$ we get that 
%$x \in XL^S(Q), y \in XTR^S(Q)$, and $z \in XR^S(Q)$.}
%\label{fig:ex1} 
%\end{center}
%\end{figure}
%
Lemma~\ref{le:timinimax} also holds in this case.

%\begin{lemma}\label{le:iminimax}
%Let $Q = (\tau_-,L^C,X^C,R^C,\tau_+)$ be a valid quintuple and let ${\cal P} =(P, {\cal X})$ be a path decomposition respecting $Q$. Denote by $[imin,imax]$ the subpath of nodes whose trace on $C$ is $(L^C,X^C,R^C)$. Then
%\begin{itemize}
%\item For any $i$ in the subpath $[imin,imax]$, $X_i$ contains $X^C \cup XTR^S(Q)$.
%\item $X_{imin}$ contains $X^C \cup XTR^S(Q) \cup XL^S(Q)$.
%\item $X_{imax}$ contains $X^C \cup XTR^S(Q) \cup XR^S(Q)$.
%\end{itemize}
%\end{lemma}
%\begin{proof}
%Let $x \in XTR^S(Q)$. By definition of $XTR^S$, vertex $x$ has a neighbor $a \in L^C$ and a neighbor $b \in R^C$. Since $a \in L^C$, it only appears in the bags of $P$ strictly below $imin$. Since $x$ is adjacent to $a$, it must also appear on one of these bags. Since $b \in R^C$, vertex $b$ appears in no bag below $imax$ (included). Therefore, $x$ must appear in some bag which is not below $imax$. Consequently, $x$ appears in every bag of the $[imin,imax]$ subpath.
%
%If $\tau_- = \tau_{imin} = introduce(u)$, then every vertex $y \in XL^S(Q)$ must appear in some bag strictly below $imin$ (because it has a neighbor in $L^C$) and in some bag containing $u$ (because it sees $u$). This latter bag cannot be strictly below $imin$. Thus $y \in X_{imin}$ and $XL^S(Q)$ is contained in $X_{imin}$. 
%
%By symmetric arguments, if $XR^S(Q)$ is not empty, each $y \in XR^S(Q)$ appears in some bag below $imax$ and in some bag which is not below $imax$. Thus $XR^S(Q)$ is contained in $X_{imax}$. 
%\qed
%\end{proof}

Unlike in the case of treewidth, the situation is more complicated with the vertices of $S$ whose neighborhood is contained in $X^C$. 
If $N(x) \subseteq X^C_{imin-1} \subsetneq X^C$, 
then we should rather put vertex $x$ in bag $imin -1$ or before (as we shall
see, this will not increase the width of the decomposition). Symmetrically, if 
$N(x) \subseteq X^C_{imax+1} \subsetneq X^C$ then $x$ should be put in some bag
after $imax+1$. If none of these holds, we can create a bag in the subpath $[imin,
imax]$ containing only $X^C \cup XTR^S(Q) \cup \{x\}$. 
%As we shall see, such a
%bag may be "critical" for the path decomposition (i.e. of maximum size) only if
%$\tau_-$ is of type $introduce(u)$ and $\tau_+$ is of type $forget(v)$. 
%Indeed of the two operations are of type $introduce$, then bag $imax+1$ will be of size at least as large as $[X_i^C \cup XTR^S| +1$, 
%and if both operations are of type $forget$, then bag $imin - 1$ is at least as large as this quantity. 

\begin{notation}\label{no:XF}
Let $Q = (\tau_-,L^C,X^C,R^C,\tau_+)$ be a valid quintuple. Let $X^C_-$ (resp. $X^C_+$) correspond to $X^C$ before operation $\tau_-$ (resp. to $X^C$ after operation $\tau_+$).

Let $XF^S(Q)$ be the set  of vertices $x\in S$ such that
\begin{itemize}
\item  $N(x) \subseteq X^C$, and
\item if $X^C_- \subsetneq X_C$ then $N(x) \not\subseteq X^C_-$, and 
\item if $X^C_+ \subsetneq X_C$ then $N(x) \not\subseteq X^C_+$.
\end{itemize}
 
Let $\epsilon(Q)$ be set to $1$ if $XF^S(Q)$ is not empty, set to $0$ otherwise.
\end{notation}

%\begin{observation}
%Equivalently, a vertex $x$ of $S$ is in  $XF^S(Q)$ if $N(x) \subseteq X^C$
%and~:
%\begin{itemize}
%\item $\tau_- = forget(u)$ and $\tau_+ = introduce(v)$, or
%\item $\tau_- = forget(u)$ and $\tau_+ = forget(v)$ and $v \in N(x)$, or
%\item$\tau_- = introduce(u)$ and $\tau_+ = introduce(v)$ and $u \in N(x)$, or
%\item $\tau_- = introduce(u)$ and $\tau_+ = forget(v)$, and $u,v \in N(x)$.
%\end{itemize}
%\end{observation}

\begin{lemma}\label{le:epsilon}
Let ${\cal P}$ be a path decomposition respecting a quintuple $Q = (\tau_-,L^C,X^C,R^C,\tau_+)$. Then ${\cal P}$ has a bag of
size at least $|X^C| + |XTR^S(Q)| + \epsilon(Q)$.
\end{lemma}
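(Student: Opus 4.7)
The plan is to use Lemmas~\ref{le:psubpath} and~\ref{le:timinimax} to locate the subpath $[imin,imax]$ of bags whose trace on $C$ is $(L^C,X^C,R^C)$, each of which contains the disjoint sets $X^C \subseteq C$ and $XTR^S(Q) \subseteq S$. For $\epsilon(Q)=0$ any bag in this subpath already has the desired size $|X^C|+|XTR^S(Q)|$, so the only non-trivial case is $\epsilon(Q)=1$.

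For $\epsilon(Q) = 1$ I would fix $x \in XF^S(Q)$ and first observe that $x$ is disjoint from $X^C \cup XTR^S(Q)$: $x \in S$ lies outside $X^C \subseteq C$, and since $N(x) \subseteq X^C$ we have $N(x) \cap L^C = \emptyset$, ruling out $x \in XTR^S(Q)$. If the subpath of bags containing $x$ meets $[imin,imax]$, then any bag $X_i$ in the intersection contains the disjoint union $X^C \cup XTR^S(Q) \cup \{x\}$, of size $|X^C| + |XTR^S(Q)| + 1$, and the lemma is established.

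Otherwise the subpath of $x$, being contiguous, lies strictly on one side of $[imin,imax]$. Suppose first that $x$ appears only in bags strictly above $imax$. I would argue that $\tau_+$ cannot be a $forget(v)$: such a $\tau_+$ would give $X^C_+ = X^C \setminus \{v\} \subsetneq X^C$, and Notation~\ref{no:XF} applied to $x$ would force $v \in N(x)$; but $v$'s subpath ends at $imax$, while the subpath of $x$ starts strictly after $imax$, leaving no bag to witness the edge $xv$. Hence $\tau_+ = introduce(v)$ for some $v \in R^C$, and then the bag $X_{imax+1} = X_{imax} \cup \{v\}$ contains the disjoint union $X^C \cup XTR^S(Q) \cup \{v\}$, of size $|X^C| + |XTR^S(Q)| + 1$, since $v \in R^C$ is disjoint from $X^C$ by the partition and from $XTR^S(Q) \subseteq S$ as $v \in C$. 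The symmetric sub-case, with the subpath of $x$ lying entirely below $imin$, forces $\tau_- = forget(u)$ with $u \in L^C$ and uses $X_{imin-1} = X_{imin} \cup \{u\}$ in exactly the same way.

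The main obstacle is that $x$ cannot a priori be pinned into $[imin,imax]$: when $\tau_-$ is a forget or $\tau_+$ is an introduce, $x$ may legitimately be placed outside the subpath. The resolution is that the conditions on $X^C_-$ and $X^C_+$ in Notation~\ref{no:XF} are tailor-made so that the only boundary operations leaving $x$ free to escape are precisely those whose immediately adjacent bag is structurally one $C$-vertex larger than the bags in $[imin,imax]$, and that enlarged neighbour bag then serves as the required witness.
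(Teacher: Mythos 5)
Your proof is correct and follows essentially the same route as the paper's: the base case via Lemma~\ref{le:timinimax}, and for $\epsilon(Q)=1$ either a bag of $[imin,imax]$ containing $x$ (when $\tau_-$ is an introduce and $\tau_+$ a forget), or the neighbouring bag $X_{imax+1}$ (resp.\ $X_{imin-1}$) when $\tau_+$ is an introduce (resp.\ $\tau_-$ a forget). The only cosmetic differences are that you organize the cases by the position of the subpath of bags containing $x$ rather than by the types of $\tau_-,\tau_+$, and that you obtain $XTR^S(Q)\subseteq X_{imax+1}$ directly from $X_{imax+1}=X_{imax}\cup\{v\}$ instead of re-applying Lemma~\ref{le:timinimax} to the successor quintuple.
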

\begin{proof}
If $\epsilon(Q) = 0$ then the result comes directly from Lemma~\ref{le:timinimax}. Assume that $\epsilon(Q) = 1$. We distinguish three cases, depending on operations $\tau_-$ and $\tau_+$. 

\paragraph{Case 1: $\tau_- = introduce(u)$ and $\tau_+ = forget(v)$.}  Let $x$ be a vertex of $XF^S(Q)$. Note that $x$ is adjacent to $u$ in $G$ (otherwise $N(x) \subseteq X^C_- \subsetneq X^C$, contradicting $x \in XF^S(Q)$) and symmetrically $x$ is adjacent to $v$ (otherwise $N(x) \subseteq X^C_+ \subsetneq X^C$). Therefore $x$ must be in some bag above $imin$ and in some bag below $imax$. Hence there is some bag in the $[imin, imax]$ subpath containing $x$. By Lemma~\ref{le:timinimax}, that bag is of size at least $|X^C| + |XTR^S(Q)| + 1$.

\paragraph{Case 2: $\tau_+ = introduce(v)$.} We claim that the bag $X_{imax+1}$ contains 
$XTR^S(Q)$. Let $Q_+$ be the valid quintuple corresponding to bag $X_{imax+1}$. The quintuple $Q_+$ is of the type $(introduce(v), L^C,$ $X^C \cup \{v\}, R^C \setminus \{v\}, \tau_{++})$. 
Observe that each vertex $x$ of $XTR^S(Q)$ belongs to $XTR^S(Q_+)$ (if $x$ has a neighbor in $R^C \setminus \{v\}$) or to $XL^S(Q_+)$ (if the only neighbor of $x$ in $R^C$ is $v$). 
Applying Lemma~\ref{le:timinimax} to the valid quintuple $Q_+$ we deduce that $X_{imax+1}$ contains $X^C \cup \{v\} \cup XTR^S(Q)$. Recall that $XTR^S(Q) \subseteq XTR^S(Q_+) \cup XL^S(Q_+)$. Consequently, the size of the bag $X_{imax+1}$ is at least $|X^C| + |XTR^S(Q)| + 1$.

\paragraph{Case 3: $\tau_- = forget(u)$.} This case is perfectly symmetric to the previous one. Let $Q_-$ be the valid quintuple $(\tau_{--}, L^C \setminus \{u\}, X^C \cup\{u\}, R^C, forget(u))$ corresponding to bag $X_{imin-1}$. Then by symmetric arguments $XTR^S(Q) \subseteq XTR^S(Q_-) \cup XR^S(Q_-)$ and hence by Lemma~\ref{le:timinimax} the set $X^C \cup\{u\} \cup XTR^S(Q)$ is contained in bag $X_{imin-1}$.
\qed
\end{proof}

The definition of local pathwidth is slightly different from the local treewidth.
\begin{notation}\label{no:bagsize}
Let $Q = (\tau_-,L^C,X^C,R^C,\tau_+)$ be a valid quintuple. We define the \emph{local pathwidth} of $Q$ as
$$locpw(Q) =|X^C| + |XTR^S| + \max\{|XL^S|, |XR^S|, \epsilon(Q)\}-1.$$
\end{notation}

By Lemmata~\ref{le:timinimax} (which also holds for the pathwidth case) and~\ref{le:epsilon} we deduce:
\begin{corollary}\label{co:bagsize}
Any nice path decomposition of $G$ respecting a valid quintuple $Q$ is of width at least $locpw(Q)$.
\end{corollary}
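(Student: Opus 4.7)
The plan is to derive the corollary by combining the structural bag-containment properties established earlier with the extra lower bound coming from $\epsilon(Q)$. Let $(P,{\cal X})$ be any nice path decomposition of $G$ respecting the valid quintuple $Q=(\tau_-,L^C,X^C,R^C,\tau_+)$. By Lemma~\ref{le:psubpath}, the nodes of $P$ whose trace on $C$ is $(L^C,X^C,R^C)$ form a subpath with endpoints $imin$ and $imax$. I would then invoke Lemma~\ref{le:timinimax}, which in particular furnishes two concrete bags: $X_{imin}$, which contains $X^C \cup XTR^S(Q) \cup XL^S(Q)$, and $X_{imax}$, which contains $X^C \cup XTR^S(Q) \cup XR^S(Q)$. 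Since these are disjoint unions (the sets $X^C$, $XTR^S(Q)$, $XL^S(Q)$, $XR^S(Q)$ are pairwise disjoint by construction from Notation~\ref{no:tsets} and the fact that $S\cap C=\emptyset$), their sizes are exactly the sums of the cardinalities, so $|X_{imin}| \geq |X^C|+|XTR^S|+|XL^S|$ and $|X_{imax}| \geq |X^C|+|XTR^S|+|XR^S|$.

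Next I would appeal to Lemma~\ref{le:epsilon}, which asserts the existence of at least one bag in $(P,{\cal X})$ of size at least $|X^C| + |XTR^S| + \epsilon(Q)$. Combining the three lower bounds, the decomposition $(P,{\cal X})$ contains some bag of size at least
\[
|X^C| + |XTR^S| + \max\{|XL^S|,\, |XR^S|,\, \epsilon(Q)\}.
\]
Subtracting $1$ according to the definition of width yields exactly $locpw(Q)$, proving the corollary.

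I do not anticipate any real obstacle: the argument is a direct bookkeeping exercise once Lemmata~\ref{le:timinimax} and~\ref{le:epsilon} are in hand, and disjointness of the relevant vertex sets is immediate from their definitions (notably, $XTR^S$, $XL^S$, $XR^S$ are subsets of $S$ while $X^C \subseteq C$, and within $S$ they are separated by the conditions on neighborhoods intersecting $L^C$ and $R^C$). The only minor care needed is to note that the three size bounds apply to (possibly) different bags of the same path decomposition, which is harmless since the width is the maximum bag size minus one and therefore dominated by the largest of the three bounds.
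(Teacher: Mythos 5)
Your proof is correct and follows exactly the paper's route: the corollary is deduced by combining Lemma~\ref{le:timinimax} (bag containments at $imin$ and $imax$, with the pairwise disjointness of $X^C$, $XTR^S$, $XL^S$, $XR^S$ giving the additive size bounds) with Lemma~\ref{le:epsilon} (the $\epsilon(Q)$ term), and taking the maximum minus one. No issues.
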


%\begin{lemma}
%Let $\cal P$ be a decomposition of $G$. Its width is at least 
%$$\max_{(\tau_{-},L^C,X^C,R^C,\tau_{+}) \in {\cal P}} bagsize(\tau_{-},L^C,X^C,R^C,\tau_{+}).$$
%Conversely, there is a path decomposition of $G$ of this size. 
%\end{lemma}

\begin{notation}\label{no:ppw}
Given a valid quintuple $Q= (\tau_{-},L^C,X^C,R^C,\tau_{+})$, we define the quantity $ppw(Q)$ (like \emph{partial pathwidth}) as follows.

\begin{itemize}
\item If $Q = (introduce(u),\emptyset, \{u\}, C \setminus \{u\}, \tau_+)$ then
$$ppw(Q) = locpw(Q),$$

\item else if $\tau_{-} = introduce(u)$,
 
 $$ppw(Q) = \max \left\{locpw(\tau_{-},L^C,X^C,R^C,\tau_{+}), \min_{\text{valid quintuple~} Q_-} ppw(Q_-)\right\}$$

where the minimum is taken over all valid quintuples $Q_-$ of type $(\tau_{--},L^C,X^C \sm \{u\},R^C \cup \{u\}),$ $introduce(u))$.

\item else ($\tau_{-} = forget(u)$),

 $$ppw(Q) = \max \left\{locpw(\tau_{-},L^C,X^C,R^C,\tau_{+}), \min_{\text{valid quintuple~} Q_-} ppw(Q_-)\right\}$$ 

where the minimum is taken over all valid quintuples $Q_-$ of type $(\tau_{--},L^C\setminus\{u\},X^C \cup \{u\},R^C),$ $forget(u))$.
\end{itemize}
\end{notation}

\begin{lemma}\label{le:geppw}
Any path decomposition respecting a valid quintuple $Q$ is of width at least $ppw(Q)$.
\end{lemma}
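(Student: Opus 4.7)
The strategy mirrors exactly the proof of Lemma~\ref{le:geptw}, but in the simpler setting where nice path decompositions have only $introduce$ and $forget$ operations. I order the valid three-partitions $(L^C,X^C,R^C)$ of $C$ by (a linear extension of) the precedence relation of Definition~\ref{de:prec}, restricted to path decompositions, and I prove the claim by induction on $(L^C,X^C,R^C)$, for every valid quintuple $Q=(\tau_-,L^C,X^C,R^C,\tau_+)$ simultaneously. The inequality $\operatorname{width}({\cal P}) \geq locpw(Q)$ is already provided by Corollary~\ref{co:bagsize} and will supply one of the two lower bounds one always needs to combine.

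For the base case I take the minimal three-partition in the precedence order, namely $Q = (introduce(u),\emptyset,\{u\},C\setminus\{u\},\tau_+)$, where by Notation~\ref{no:ppw} we simply have $ppw(Q) = locpw(Q)$; then Corollary~\ref{co:bagsize} closes the base case directly.

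For the inductive step, let ${\cal P}=(P,{\cal X})$ be a nice path decomposition respecting $Q=(\tau_-,L^C,X^C,R^C,\tau_+)$, and let $imin$ be the lowest node of the subpath of $P$ whose trace on $C$ is $(L^C,X^C,R^C)$ (which is a subpath by Lemma~\ref{le:psubpath}). Consider the node $imin - 1$ immediately below $imin$ in $P$, and let $Q_-$ be the valid quintuple it induces. If $\tau_- = introduce(u)$ then $imin-1$ leaves trace $(L^C, X^C \setminus\{u\}, R^C\cup\{u\})$ on $C$, and the operation at $imin$ forces $\tau_+$ of $Q_-$ to be $introduce(u)$; thus $Q_-$ has exactly the form listed in the $introduce$ case of Notation~\ref{no:ppw}. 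Symmetrically, if $\tau_- = forget(u)$ then $Q_-$ has the form required by the $forget$ case of Notation~\ref{no:ppw}. In either case the three-partition of $Q_-$ strictly precedes that of $Q$, so the induction hypothesis yields $\operatorname{width}({\cal P}) \geq ppw(Q_-)$. Combined with Corollary~\ref{co:bagsize}, which gives $\operatorname{width}({\cal P}) \geq locpw(Q)$, we conclude
\[
\operatorname{width}({\cal P}) \geq \max\!\left(locpw(Q),\ ppw(Q_-)\right) \geq \max\!\left(locpw(Q),\ \min_{Q_-'} ppw(Q_-')\right) = ppw(Q),
\]
where the minimum runs over all valid quintuples of the shape prescribed by Notation~\ref{no:ppw} for the appropriate case of $\tau_-$. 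This closes the induction.

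The only potentially delicate step is verifying that the quintuple $Q_-$ induced by node $imin-1$ is itself a valid quintuple whose three-partition strictly precedes $(L^C,X^C,R^C)$ and whose $\tau_+$ component is forced to be $\tau_-$ (so that $Q_-$ lies in the set over which Notation~\ref{no:ppw} takes its minimum). When $\tau_-=forget(u)$ and $L^C=\{u\}$, the preceding node has trace $(\emptyset,\{u\},C\setminus\{u\})$, which is the base case of Notation~\ref{no:ppw}; this boundary situation must be noted explicitly but causes no problem, since the base case equation $ppw(Q_-)=locpw(Q_-)$ is exactly what the induction hypothesis provides there.
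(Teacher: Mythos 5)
Your proof is correct and follows essentially the same route as the paper's: induction on the precedence order of valid three-partitions, with the base case handled by Corollary~\ref{co:bagsize} and the definition of $ppw$, and the inductive step obtained by passing to the quintuple $Q_-$ at node $imin-1$ and combining the induction hypothesis with Corollary~\ref{co:bagsize}. The explicit check that $Q_-$ has the form required by Notation~\ref{no:ppw}, and the remark about the boundary case $L^C=\{u\}$, are welcome details that the paper leaves implicit.
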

\begin{proof}
Recall that the three-partitions $(L^C,X^C,R^C)$ of $C$ can be naturally ordered by the precedence relation (Definition~\ref{de:prec}). Here we restrict this relation to path decomposition, thus the successors of  $(L^C,X^C,R^C)$  are obtained by moving a vertex from $R^C$ to $X^C$ (as happens by operation $introduce$) or from $X^C$ to $L^C$ (as for $forget$).  We prove the lemma for every valid quintuple $(\tau_-,L^C,X^C,R^C,\tau_+)$, by induction (according to this order) on $(L^C,X^C,R^C)$. 
The minimal elements are of type $Q = (introduce(u),\emptyset, \{u\}, C \setminus \{u\}, \tau_+)$, and the result follows from Corollary~\ref{co:bagsize} and the definition of $ppw$ for this case. 

Now let ${\cal P}$ be a path decomposition respecting $(\tau_-,L^C,X^C,R^C,\tau_+)$, and let as before denote by $[imin,imax]$ the subpath of bags whose trace on $C$ is $(L^C,X^C,R^C)$. 
In particular, $imin \geq 2$. Let $(L^C_-,X^C_-,R^C_-)$ be the trace of node $imin-1$ on $C$ and let $Q_- = (\tau_{--},L^C_-,X^C_-,R^C_-,\tau_-)$ be the valid quintuple respected by $\cal P$. Note that $Q_-$ is one of the candidates for valid quintuples preceding $Q$, used in the $\min$ of the formula for $ppw(Q)$ (see Notation~\ref{no:ppw}). By induction hypothesis, the width of $\cal P$ is at least $ppw(Q_-)$. We conclude by Corollary~\ref{co:bagsize}.
\qed
\end{proof}

\begin{theorem}\label{th:pw}
The pathwidth of $G$ is
	$$pw(G) = \min_{Q_{\text{last}}} ppw(Q_{\text{last}})$$
over all valid quintuples $Q_{\text{last}}$ of the form $(\tau_-,C \setminus \{univ\}, \{univ\}, \emptyset, forget(univ))$.
\end{theorem}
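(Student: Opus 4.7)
The proof mirrors that of Theorem~\ref{th:tw}, but is simpler because nice path decompositions have no $join$ nodes. The plan is to prove both inequalities separately.

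First, I would establish $pw(G) \geq \min_{Q_{\text{last}}} ppw(Q_{\text{last}})$. Take any optimal nice path decomposition ${\cal P}$ of $G$. By our convention (end of Section~2) its root bag is $\{univ\}$, reached from the bag just below by a $forget(univ)$ operation. Hence the penultimate node has trace $(C \setminus \{univ\}, \{univ\}, \emptyset)$ on $C$, and together with the operation $\tau_-$ applied at the bottom of the subpath of nodes carrying this trace (cf. Lemma~\ref{le:psubpath}), it yields a valid quintuple $Q_{\text{last}}$ of the prescribed form. Applying Lemma~\ref{le:geppw} to $Q_{\text{last}}$ gives $pw(G) = \mathrm{width}({\cal P}) \geq ppw(Q_{\text{last}}) \geq \min_{Q_{\text{last}}} ppw(Q_{\text{last}})$.

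For the reverse inequality, let $t = \min_{Q_{\text{last}}} ppw(Q_{\text{last}})$ and unfold Notation~\ref{no:ppw} backwards: starting from a minimizing $Q_\ell = Q_{\text{last}}$ and always selecting a predecessor $Q_-$ attaining the minimum in the definition of $ppw$, I extract a sequence of valid quintuples $Q_1, Q_2, \ldots, Q_\ell$ linked by the precedence relation, with $locpw(Q_i) \leq t$ for every $i$ and with $Q_1$ being a base case $(introduce(u), \emptyset, \{u\}, C \setminus \{u\}, \tau_+)$. From this sequence I would build an explicit nice path decomposition ${\cal P}$ of $G$ in the spirit of the proof of Theorem~\ref{th:tw}: for each $Q_i = (\tau_{-i}, L^C_i, X^C_i, R^C_i, \tau_{+i})$, create three consecutive nodes $imin_i, imid_i, imax_i$, all initially carrying the bag $X^C_i$; add $XTR^S(Q_i)$ to all three bags, $XL^S(Q_i)$ to $imin_i$, and $XR^S(Q_i)$ to $imax_i$; finally, for each $x \in XF^S(Q_i)$ that has not yet been placed, insert a pending bag $N[x]$ between $imid_i$ and $imax_i$. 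Concatenate these subpaths in the order $Q_1, \ldots, Q_\ell$.

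The main obstacle, and the heart of the argument, is verifying that ${\cal P}$ is a valid path decomposition whose every bag has size at most $t+1$. The size bound is immediate from the definition of $locpw$ (each principal bag of the $Q_i$-subpath has size at most $locpw(Q_i)+1 \leq t+1$) together with $|N[x]| \leq |X^C_i|+1 \leq t+1$ for pending bags. That the main sequence of bags $X^C_i$ already forms a path decomposition of $G[C]$ is direct from the definition of the precedence relation (Definition~\ref{de:prec}). The delicate points concern vertices $x \in S$: one must verify (a) the bags of ${\cal P}$ containing $x$ form a contiguous subpath, and (b) every edge $xy \in E$ lies in some common bag. Both reduce to tracking, as $i$ grows, how $(L^C_i, X^C_i, R^C_i)$ evolves relative to $N(x)$: when $N(x)$ straddles $L^C_i$ and $R^C_i$ we have $x \in XTR^S(Q_i)$ (and this yields a contiguous range of indices since $L^C_i$ only grows and $R^C_i$ only shrinks along the sequence), membership in $XL^S$ (resp. $XR^S$) occurs precisely when a neighbor of $x$ is introduced (resp. forgotten) at the boundary, and the extra conditions $X^C_- \subsetneq X^C$ and $X^C_+ \subsetneq X^C$ built into Notation~\ref{no:XF} ensure that each $x$ with $N(x) \subseteq C$ is placed in exactly one pending bag---namely the $Q_i$ at which $X^C_i$ becomes a tight superset of $N(x)$. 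The arguments parallel those of Lemma~\ref{le:timinimax} and the analogous verification in Theorem~\ref{th:tw}, and are strictly simpler here thanks to the absence of $join$ operations. The resulting ${\cal P}$ certifies $pw(G) \leq t$, completing the proof. \qed
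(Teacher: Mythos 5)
Your overall strategy coincides with the paper's: the lower bound follows from Lemma~\ref{le:geppw} applied to the trace of the root, and the upper bound is obtained by unfolding Notation~\ref{no:ppw} into a chain $Q_1,\dots,Q_\ell$ of valid quintuples with $locpw(Q_i)\leq t$ and converting that chain into an explicit path decomposition. The first inequality and the treatment of vertices $x\in S$ whose neighborhood is not a clique in the interval graph (via $XTR^S$, $XL^S$, $XR^S$) are fine and match the paper.

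There is, however, a genuine gap in your handling of the vertices $x\in XF^S(Q_i)$. You ``insert a pending bag $N[x]$ between $imid_i$ and $imax_i$'' and bound its size by $|N[x]|\leq |X^C_i|+1$. In a tree decomposition such a bag can be hung as a leaf off $imid$, so it does not interfere with anything else; but in a \emph{path} decomposition every bag lies on the path, and a bag equal to $N[x]$ placed between $imid_i$ and $imax_i$ destroys the connectivity condition for every vertex of $X^C_i\setminus N(x)$ and of $XTR^S(Q_i)$, since those vertices occur in $imid_i$ and in $imax_i$ but not in the inserted bag. The fix, which is what the paper does, is to insert a full copy of $X^C_i\cup XTR^S(Q_i)$ augmented by the single vertex $x$ (the paper uses $|XF^S(Q_i)|+2$ copies of the bag for this purpose). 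Such a bag has size $|X^C_i|+|XTR^S(Q_i)|+1$, not $|X^C_i|+1$, and this is precisely why the local pathwidth of Notation~\ref{no:bagsize} is $|X^C|+|XTR^S|+\max\{|XL^S|,|XR^S|,\epsilon(Q)\}-1$, with $\epsilon(Q)$ added \emph{on top of} $|X^C|+|XTR^S|$, rather than competing inside the max as in $loctw$ (Notation~\ref{no:loctw}). Your size analysis implicitly uses the treewidth-style bound and would not certify width $t$ once the bags are corrected; with the paper's definition of $locpw$ the corrected construction does go through.
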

\begin{proof}
First note that $pw(G) \geq \min_{Q_{\text{last}}} ppw(Q_{\text{last}})$. Indeed, there exists a path decomposition $\cal P$ of minimum  width that has a root whose trace is $(C \setminus \{univ\}, \{univ\}, \emptyset)$. The inequality follows from Lemma~\ref{le:geppw}.

Conversely, let $Q_{\text{last}}=(\tau_-,C \setminus \{univ\}, \{univ\}, \emptyset, forget(univ))$ be a valid quintuple of minimum $ppw$, among all quintuples of this type; denote by $t$ this minimum value. By definition of $ppw$, there exists a sequence $Q_1, Q_2, \dots, Q_p$ of valid quintuples, the last being $Q_{\text{last}} = (\tau_-,C \setminus \{univ\}, \{univ\}, \emptyset, forget(univ))$, the first being $Q_1 = (introduce(u),\emptyset, \{u\}, C \setminus \{u\}, \tau_{+1})$, such that we go from $Q_j$ to $Q_{j+1}$ by an $introduce$ or a $forget$ operation. Moreover, by definition of $ppw$, all these $Q_j$s have a local pathwidth $locpw(Q_j)$ at most $t$.
We shall now construct a path decomposition of $G$ of width $t$. %This time we shall use the "classical" definition of path decompositions. 

Let $I=(X^C_1, \dots, X^C_p)$ be a linear arrangement of bags, where $Q_j =( \tau_{-j}, L^C_j, X^C_j, R^C_j, \tau_{+j})$. By construction of the sequence $Q_j$, we have that $I$ is a nice path-decomposition of $G[C]$. It remains to place the vertices of $S$ in order to transform it into a path decomposition of $G$. 
Firstly, replace in $I$ each bag $X^C_j$ by a sequence of $|XF^S(Q_j)|+2$ copies of it. 
Denote by $[jmin,jmax]$ the new subpath of  $|XF^S(Q_j)|+2$ nodes that map to bags equal to $X^C_j$. Now
\begin{itemize}
\item add $XTR^S(Q_j)$ to all bags in the subpath $[jmin,jmax]$;
\item add $XL^S(Q_j)$ to bag numbered $jmin$;
\item add $XR^S(Q_j)$ to bag numbered $jmax$;
\item add each vertex of $XF^S(Q_j)$ that does not appear in any $XF^S(Q_l)$ with $l<j$ in exactly one of the bags in the $[jmin+1,jmax-1]$ subpath.
\end{itemize}
Let $I'$ be this new linear arrangement of bags, we claim it represents a path decomposition of $G$, of width $t$. Note that the size of each bag in the $[jmin,jmax]$ subpath is at most $locpw(Q_j)$+1, by construction of $I'$ and by Lemmata~\ref{le:timinimax} and~\ref{le:epsilon}.

For any vertex $u \in C$, $u$ is in some bag of $I'$ and the bags containing $u$ form a subpath, because $I$ was a path decomposition of $G[C]$. For the same reason, every edge of $G$ with both endpoints in $C$ has its ends in a same bag of $I'$. It remains to show that similar conditions hold for vertices of $S$ and edges with an endpoint in $S$ and one in $C$. 

Let $x$ be a vertex of $S$. Let $HC$ be the interval graph induced by $I$ on vertex set $C$, i.e. two vertices are adjacent in $HC$ if and only if there is a bag of $I$ containing both of them. We shall distinguish two cases, depending on whether the neighborhood $N_G(x)$ of $x$ in $G$ induces a clique on $HC$ or not. 

If $x$ is in some set $XF^S(Q_j)$ for some $1 \leq j \leq p$, then its neighbors in $G$ induce a clique in $HC$. Conversely, if $N_G(x)$ forms a clique in $HC$, then by Helly's property this clique is in some bag $X^C_j$ (see Proposition~\ref{pr:hellyc}). Among all these bags, let $j$ be minimum index such that  
$X^C_j$ is a local minimum (in size), in the sequence $X^C_1, \dots, X^C_p$, among bags containing $N_G(x)$. By definition of sets $XF^S$ (Notation~\ref{no:XF}), we have $x \in XF^S(Q_j)$ and by construction of $I'$, $x$ will be put in exactly one bag of the $[jmin+1, jmax-1]$ subpath. This bag also contains all neighbors of $x$ in $G$. We claim that this is the unique bag of $I'$ containing $x$. Assume the contrary, so $x$ is in some set $XTR^S(Q_i)$ or $XL^S(Q_i)$ or $XR^S(Q_i)$ for some $Q_i$. By definition of these sets (Notation~\ref{no:tsets}), $x$ would have two neighbors $a$ and $b$ in $G$, such that $a$ and $b$ are non adjacent in $HC$ --- a contradiction.

The second case is when $N_G(x)$ is not a clique in $HC$. Among all neighbors of $x$ in $G$, let $a$ be the first vertex that is forgotten in the sequence $Q_j$, and let $b$ be the last vertex introduced in this sequence. Let $X_j$ be the last bag of $I$ containing $a$, and $X_{j'}$ the first containing $b$. In particular $j < j'$, otherwise $N_G(x)$ would induce a clique in $HC$. By Notation~\ref{no:tsets}, we have  $x \in XR^S(Q_j)$,  $x \in XL^S(Q_{j'})$ and  $x \in XTR^S(Q_r)$ for all $j < r < j'$. Moreover, $x$ is not in any other set $XTR^S$, $XL^S$ or $XR^S$ or $XF^S$ corresponding to quintuples of the sequence. Therefore the bags containing $x$ in $I'$ form exactly the subpath $[jmax,j'min]$ and they contain all neighbors of $x$ in $C$. 

We conclude that $I'$ is a path decomposition of $G$ of size $t$, and the equality of the theorem holds. 
\qed
\end{proof}

\begin{theorem}\label{th:pwvc}
The \textsc{Pathwidth} problem can be solved in time $O^*(3^k)$, where $k$ is the size of the minimum vertex cover of the input graph.
\end{theorem}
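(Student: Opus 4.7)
The plan is to mirror the proof of Theorem~\ref{th:twvc}, using the pathwidth machinery developed in this appendix (Lemma~\ref{le:pvalidq}, Notations~\ref{no:XF}, \ref{no:bagsize}, \ref{no:ppw} and Theorem~\ref{th:pw}) and exploiting the fact that nice path decompositions do not have \emph{join} nodes, which is what will bring the running time down from $O^*(4^k)$ to $O^*(3^k)$.

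First, compute a minimum vertex cover of $G$ in time $O^*(1.28^k)$ by Proposition~\ref{pr:VC}, then form $G'$ by adding a universal vertex $univ$, obtaining a vertex cover $C$ of $G'$ of size at most $k+1$. The pathwidth of $G$ equals $pw(G') - 1$, so it suffices to compute $pw(G')$. Second, enumerate all valid three-partitions $(L^C,X^C,R^C)$ of $C$ by iterating over the $3^{k+1}$ three-partitions and keeping those satisfying Lemma~\ref{le:valid}; this takes $O^*(3^k)$ time and produces at most $3^{k+1}$ valid partitions.

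Third, enumerate all valid quintuples using Lemma~\ref{le:pvalidq}. The crucial point is that, unlike in the treewidth case, $\tau_-$ and $\tau_+$ can only be of type $introduce(u)$ or $forget(u)$ for some single vertex $u$. Hence each valid partition yields only $O(k^2)$ valid quintuples, and the total number of valid quintuples is $O^*(3^k)$. Sort them according to a linear extension of the precedence relation (Definition~\ref{de:prec}) by ordering the underlying triples $(L^C,X^C,R^C)$ first by $|L^C|$ and then by $|X^C|$; this is doable in time $O^*(3^k)$.

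Fourth, process the quintuples in this order and compute by dynamic programming $locpw(Q)$ (Notation~\ref{no:bagsize}) and then $ppw(Q)$ (Notation~\ref{no:ppw}). As in the treewidth case, to make the recurrence efficient observe that in the minimum over valid quintuples $Q_-$ of a fixed type, only the first coordinate $\tau_{--}$ varies, so this minimum can be maintained online as the $ppw(Q_-)$ values become available; each individual quintuple is then processed in polynomial time, for a total of $O^*(3^k)$. Finally, apply Theorem~\ref{th:pw}: take the minimum of $ppw(Q_{\text{last}})$ over all quintuples of the form $(\tau_-, C \setminus \{univ\}, \{univ\}, \emptyset, forget(univ))$, and return this value minus one. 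Standard bookkeeping within the DP allows recovery of an optimal path decomposition within the same time bound. The main obstacle, essentially already dispatched by Lemma~\ref{le:pvalidq}, is ensuring the quintuple count stays at $O^*(3^k)$; the absence of $join$ operations in nice path decompositions is exactly what makes this possible.
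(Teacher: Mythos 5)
Your proposal is correct and follows essentially the same route as the paper's own proof: kernelize the search space via the universal vertex, enumerate the $O^*(3^k)$ valid partitions and quintuples using Lemma~\ref{le:pvalidq}, order them by the precedence relation, run the $locpw$/$ppw$ dynamic program, and conclude with Theorem~\ref{th:pw}. The observation that the absence of $join$ nodes keeps the quintuple count at $O^*(3^k)$ is exactly the point the paper relies on.
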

\begin{proof}
Given an arbitrary graph $G$, we first compute a minimum vertex cover in $O^*(1.28^k)$ (Proposition~\ref{pr:VC}). Then $G$ is transformed into a graph $G'$ by adding a universal vertex $univ$. Let $C$ be the vertex cover of $G'$ obtained by adding $univ$ to the minimum vertex cover of $G$ (hence $|C| = k+1$). The pathwidth of $G'$ is computed as follows.
\begin{enumerate}
\item Compute all valid partitions $(L^C,X^C,R^C)$, by enumerating all three-partitions of $C$ and keeping only the valid ones (Lemma~\ref{le:valid}). This can be done in time $O^*(3^k)$. The number of valid partitions is $O(3^{k+1})$.
\item Compute all valid quintuples using Lemma~\ref{le:pvalidq} and sort them using the precedence relation on the corresponding valid three-partitions. The number of valid quintuples is $O^*(3^k)$, their computation can be performed in time $O^*(3^k)$. The sorting can be done within the same running time, the triples $(L^C,X^C,R^C)$ being sorted by increasing size of $L^C$, and in case of tie-breaks by increasing size of $X^C$ (see Definition~\ref{de:prec} and following remarks).
\item For each valid quintuple $Q$, according to the ordering above, compute by dynamic programming $locpw(Q)$ (Notation~\ref{no:bagsize}) and then $ppw(Q)$ (Notation~\ref{no:ppw}). Since this step is polynomial for each $Q$, the overall running time is still $O^*(3^k)$.
\item Compute the pathwidth of $G'$ using Theorem~\ref{th:pw}, and return $pw(G)=pw(G')-1$. This step takes polynomial running time. 
\end{enumerate}
This achieves the proof of the theorem. The algorithm can also be adapted to return, within the same time bounds, an optimal path decomposition of the input graph.
\qed
\end{proof}

%%%%%%%%%%%%%%%%%%%%%%%%%%%%%
\section{Proofs of Section~\ref{se:tw4k}}\label{se:prtw}
%%%%%%%%%%%%%%%%%%%%%%%%%%%%%
\subsection{Proof of Lemma~\ref{le:tvalidq}}\label{ss:le:tvalidq}
\begin{proof}
%The proof is similar with the one of Lemma~\ref{le:pvalidq}. 
Assume that $Q=(\tau_-,L^C,X^C,R^C,\tau_+)$  is a valid quintuple and let $(T,{\cal X})$ be a nice tree-decomposition respecting it. Let $imin$ and $imax$ be the nodes of $T$ like in Lemma~\ref{le:path}. By definition of $introduce$ and $forget$ operations, if $\tau_-=forget(u)$ we must have $u \in L^C$, and if
$\tau_-=introduce(u)$ we must have $u \in X^C$. Also, if $\tau_+=introduce(v)$ it means that for $imax$, vertex $v$ was still in $R^C$, and if $\tau_+=forget(v)$ we must have $v \in X^C$. 
In the latter case, to be able to perform the $forget(v)$ operation, $v$ must have no neighbor in $R^C$, because bag $X_{imax+1}$ separates all vertices from bags below $imax+1$ 
from all vertices not appearing below $imax+1$ (Proposition~\ref{pr:sep} applied to $imax+1$). Symmetrically, if $\tau_- = forget(u)$, then $u$ has no neighbors in $R^C$ because $X^C$ separates $L^C$ from $R^C$, and $u \in L^C$.

If $imin$ is $join(X^C;L1^C,L2^C)$, then $(L1^C,X^C,R^C \cup L2^C)$ and $(L2^C,X^C,R^C \cup L1^C)$ are the traces of the two sons $j$ and $k$ of $imin$. Both three-partitions are thus valid. Moreover $L1^C$ (resp. $L2^C$) is not empty, otherwise $j$ (reps. $k$) would have the same trace as $imin$ -- a contradiction. If $imax+1$ is an $introduce$ of $forget$ node, the condition on $\tau_+$ follows by the same arguments as in Lemma~\ref{le:pvalidq}. If $\tau_+$ is a $join$ node, then it is of type $join(X^C; L^C,LR^C)$. Note that $LR^C \neq \emptyset$ (otherwise $imax+1$ would leave the same trace as $imax$). The three-partitions $(
L^C \cup LR^C, X^C, R^C \setminus LR^C)$ and  $(LR^C, X^C, R^C\setminus LR^C \cup L^C)$ are the respective traces of $imax+1$ and of the sibling of $imax$, so they are valid.

Conversely, let $(L^C,X^C,R^C)$, $\tau_-$ and $\tau_+$ satisfy the conditions of the lemma, we construct a nice tree decomposition respecting $Q$. We start with a unique node $i$ with bag $X^C$. 
If $\tau_-$ is an $introduce$ or $forget$ node, let $(L^C_-,X^C_-,R^C_-)$  correspond to $(L^C,X^C,R^C)$ before operation $\tau_-$. Add to node $i$ a son $j$ with bag $X^C_-$ and to $j$ a son $j'$ with bag $L^C_- \cup X^C_-$. If $\tau_-= join(X^C; L1^C, L2^C)$ we add to $i$ two sons $j$ and $k$ with bags $X^C$ and add to $j$ (resp. $k$) a son $j'$ with bag $X^C \cup L1^C$ (resp. a son $k'$ with bag $X^C \cup L2^C$). If $\tau_+$ is an $introduce$ or $forget$ node, let $(L^C_+,X^C_+,R^C_+)$  correspond to $(L^C,X^C,R^C)$ after operation $\tau_+$. Add to $i$ a father 
$l$ with bag $X^C_+$ and to $l$ a father $l'$ with bag $X^C_+ \cup R^C_+$. If $\tau_+ = join(X^C; L^C,LR^C)$ then add to $i$ a father $l$ and a sibling $r$ with bag $X^C$. We add a father $l'$ of $l$, with bag $X^C \cup R^C \setminus LR^C$, and to $r$ a son $r'$ with bag $X^C \cup LR^C$. Observe that at this stage we have a tree-decomposition of $G[C]$. Add $S$ to each bag of the tree decomposition, we obtain a tree decomposition of $G$. By refining it into a nice one (Proposition~\ref{pr:nice}), we obtain a nice tree decomposition of $G$. The original node $i$ plays both the role of $imin$ and $imax$ in the new tree decomposition, and $Q$ is the valid quintuple for it.
\qed
\end{proof} 

\subsection{Proof of Theorem~\ref{th:tw}}\label{ss:th:tw}

\begin{proof}
For the sake of readability, we briefly recall the construction of $(T, {\cal X})$.

We have started from the tree decomposition $(T^C,{\cal X}^C)$ of $G[C]$, which is a nice tree decomposition of $G[C]$, except for the leaf nodes whose bags are not necessarily of size one.
The tree $T$ is obtained from $T^C$ by replacing each node $i$ with a path of three nodes, denoted $imin, imid$ and $imax$ (from the bottom towards the top). 
f%Let $Q_i =( \tau_{-i}, L^C_i, X^C_i, R^C_i, \tau_{+i})$ be the quintuple associated to node $i$ in $T^C$. 
Initially, we associate to the three nodes $imin, imid, imax$ the same bag $X^C_i$. Now, for each $i$,
\begin{enumerate}
\item add $XTR^S(Q_i)$ to all bags in the subpath $[imin,imax]$ of $T$;
\item add $XL^S(Q_i)$ to bag number $imin$;
\item add $XR^S(Q_i)$ to bag number $imax$;
\item For each vertex $x \in XF^S(Q_i)$, which has not yet been added to some bag of $T$, create a new node of $T$ adjacent only to $imid$ and associate to this node the bag $N[x]$. These nodes are called \emph{pending nodes}.
\end{enumerate}

It remains show that $(T, {\cal X})$ is a tree decomposition of $G$, more precisely that, for every vertex $x \in S$, the bags containing $x$ form a connected subtree of $T$, and the edges of $G$ incident to $x$ are covered by some bag.

Let $HC$ be the graph on vertex set $C$, where two vertices are adjacent if and only if they belong to a same bag of $(T^C,{\cal X}^C)$ (see also Proposition~\ref{pr:hellyc}).
We distinguish two types of vertices of $S$: vertices $x$ whose neighborhood $N_G(x)$ induces a clique in $HC$, and vertices whose neighborhood in $G$ does not induce a clique in $HC$.

\noindent
\textbf{Claim 1.} A vertex $x \in S$ is of the first type, i.e. $N_G(x)$ induces a clique in $HC$, if and only if $x \in XF^S(Q_i)$ for one of the selected quintuples $Q_i$.

By definition of $XF^S(Q_i)$, if $x$ belongs to this set then $N_G(x) \subseteq X^C_i$ and thus $N_G(x)$ forms a clique in $HC$. Conversely, if $N_G(x)$ is a clique in $HC$, by Proposition~\ref{pr:hellyc} there is some bag $X^C_i$ containing $N_G(x)$. Therefore $x \in XF^S(Q_i)$.

\noindent
\textbf{Claim 2.} A vertex $x \in S$ is of the second type, i.e. $N_G(x)$ does not induce a clique in $HC$, if and only if there is a selected quintuple $Q_i$ such that $x \in XTR^S(Q_i)$, $x \in XR^S(Q_i)$ or $x \in XL^S(Q_i)$.

Assume there is a selected quintuple such that $x \in XTR^S(Q_i)$. Then $x$ has two neighbors $u,v$ in $G$ such that $u \in L^C_i$ and $v \in R^C_i$. Note that $(T^C, {\cal X}^C)$ is a tree decomposition of $HC$, so by Proposition~\ref{pr:sep}, $X^C_i$ separates, in graph $HC$, the vertices $u \in L^C_i$ and $v \in R^C_i$. In particular, $N_G(x)$ is not a clique in $HC$. In the case when $x \in XL^S(Q_i)$, we have two possibilities. If $\tau-_i = join(X^C_i; L1^C_i, L2^C_i)$, then it means that $x$ has, in graph $G$, two neighbors $u \in L1^C_i$ and $v \in L2^C_i$. Again $X^C_i$ separates these two vertices in $HC$ thus $N_G(x)$ is not a clique in $HC$. If $\tau-_i = introduce(u)$, then $u$ is a neighbor of $x$ in $G$, and $x$ also has another neighbor $v \in L^C_i$. Let $j$ be the unique son of $i$ in $T^C$, in particular $X^C_j$ does not contain $u$. Note that $x \in XTR^S(Q_j)$ because $v \in L^C_j$ and $u \in R^C_j$, so again $N_G(x)$ is not a clique in $HC$. The case $x \in XL^S(Q_i)$ is symmetrical to 
this last case.

Conversely, assume that $N_G(x)$ is not a clique in $HC$ and let $u,v \in N_G(x)$ be two vertices, non-adjacent in $HC$. Thus, in the tree decomposition $(T^C, {\cal X}^C)$ of $G[C]$, there is a bag $i$ separating, in $T^C$, the bags containing $u$ from the bags containing $v$. Let $Q_i$ be the corresponding valid quintuple. Either $u \in L^C_i$ and $v \in R^C_i$ (or symmetrical), in which case $x$  belongs to $XTR^S(Q_i)$, or $i$ is a $join(X^C_i; L1^C_i, L2^C_i)$ and $u,v$ belong to different parts $L1^C_i$ and $L2^C_i$. In the latter case, if $x$ has neighbors in $R^C_i$ then $x$ belongs to $XTR^S(Q_i)$, otherwise it belongs to $XL^C(Q_i)$. We conclude that $x \in XTR^S(Q_i)$ or $x \in XL^S(Q_i)$.

\medskip
The two claims ensure that every vertex $x \in S$ appears in some bags of the tree decomposition $(T, {\cal X})$. Moreover, if $x$ is in the second case, then $x$ appears in a unique, pending bag. Hence the bags containing $x$ form a (trivial) subtree in $T$, and since this unique bag contains $N_G[x]$ it covers all edges incident to $x$ in $G$.

It remains to prove that for every vertex $x$ of the second type, the bags of $(T, {\cal X})$ form a subtree of $T$ and cover all edges incident to $x$ in $G$.
Assume $x$ appears in the bags of two nodes $a$ and $b$ of $T$, and let $c$ be a node on the $a,b$-path of the decomposition tree. We must prove that $x$ is in the bag of $c$. Clearly, $a$ and $b$ are not pending nodes. Suppose first that $a$ and $b$ have $c$ as a common ancestor. Then $c = kmin$ for some node $k$ of $T^C$. Also
$a$ (resp. $b$) come from some node $i$ (resp. $j$) of $T^C$. Let $Q_i =  (\tau_{-i}, L^C_i, X^C_i, R^C_i, \tau_{+i})$, $Q_j =  (\tau_{-j}, L^C_j, X^C_j, R^C_j, \tau_{+j})$ and $Q_k =  (\tau_{-k}=join(X^C_k;L1^C_k, L2^C_k), L^C_k, X^C_k, R^C_k, \tau_{+k})$ be the valid quintuples corresponding to nodes $i$, $j$ and $k$. Since $x$ has been added to node $a$, $x$ is in $XTR^S(Q_i)$, $XL^S(Q_i)$ or $XR^S(Q_i)$. In the two first cases, it means that $x$ has a neighbor in $L^C_i$. Note that $L^C_i \subseteq L1^C_k$. In the third case, it means that $\tau_{+i}$ is of type $forget(u)$ and $u$ is a neighbor of $x$. Again, $u \in L1^C_k$. Therefore $x$ has a neighbor in $L1^C_k$. For symmetrical reasons, $x$ has a neighbor in $L2^C_k$. Hence, $x$ is in $XTR^S(Q_k)$ or
$XL^S(Q_k)$, and consequently in the bag of node $c=kmin$ of the tree decomposition $(T, {\cal X})$. 

We consider now w.l.o.g. that $b$ is an ancestor of $a$. With the same notations as above,
$x$ has a neighbor in $L^C_i$, or $a=imax$, $\tau_{i+} = forget(u)$ and $u$ is a neighbor of $x$. Since $c$ is an ancestor of $a$, we have $L^C_i \subseteq L^C_k$, hence $x$ has a neighbor in $L^C_k$. Since $x$ is in the bag of node $b$, we have one of the following
\begin{itemize}
\item $x \in XTR^S(Q_j)$, thus $x$ has a neighbor in $R^C_j$,
\item $b = jmin$ and $x \in XL^S(Q_j)$, 
thus $\tau_{-j} = join(L^C_i;L1^C_j,L2^C_j)$ and $x$ has neighbors in both sets $L1^C_j,L2^C_j$.
\end{itemize}
Note that the case $x \in XR^S(Q_j)$ is not possible, because $x$ has a neighbor in $L^C_i \subseteq L^C_j$. Note that $R^C_j \subseteq R^C_k$ and, if we are in the second case, $L2^C_i \subseteq R^C_k$ (assuming that $k$ is a left descendent of $j$). Therefore, we must have $x \in XTR^S(Q_k)$, and $x$ is in the bag of node $c$.

We have shown that for each $x \in S$ of the second type, the bags of $(T, {\cal X})$ containing it form a connected subtree of $T$. Let us prove that for each edge incident to $x$ in $G$, both ends are in some bag of ${\cal X}$. We first study the placement of $N_G(x)$ in the bags of $(T^C, {\cal X}^C)$. Let $r(x)$ be the lowest node of $T^C$ such that every vertex of $N_G(x)$ appears in the subtree of $T^C$ rooted in $r(x)$. Equivalently, $r(x)$ is the lowest node of $T^C$ such that the corresponding quintuple $Q_{r(x)} =(\tau_{-r(x)}, L^C_{r(x)}, X^C_{r(x)}, R^C_{r(x)}, \tau_{+r(x)})$ satisfies $R^C_{r(x)} \cap N_G(x) = \emptyset$. Note that $r(x)$ is unique. We prove that $x \in XL^S(Q_{r(x)})$. If $r(x)$ has a unique son $j$, then $r(x)$ must be of type $introduce(u)$ for some $u \in N_G(x)$ (otherwise we could replace $r(x)$ by $j$, contradicting its definition). Also $x$ must have some neighbor in $L^C_{r(x)}$ (otherwise $N_G(x) \subseteq   X^C_{r(x)}$, contradicting the fact that $x$ is of the second 
type). Hence $x \in  XL^S(Q_{r(x)})$ by Notation~\ref{no:tsets}. Assume now that $r(x)$ is a $join$ node, so $\tau_{-r(x)} = join(X^C_{r(x)}; L1^C_{r(x)}, L2^C_{r(x)})$. Then $N_G(x)$ intersects both $L1^C_{r(x)}$ and $L2^C_{r(x)}$, otherwise we could replace $r(x)$ by one of its sons. Again,  $x \in  XL^S(Q_{r(x)})$ by Notation~\ref{no:tsets} applied to $join$ nodes. Now the fact that $x \in  XL^S(Q_{r(x)})$ implies that $x$ belongs to bag $r(x)min$ in the decomposition $(T, {\cal X})$.

Let now $\{l_1(x), \dots, l_p(x)\}$ be the set of lowest nodes $l$ of $T^C$ such that the quintuple $Q_{l}= (\tau_{-l}, L^l, X^l, R^l, \tau_{+l})$  satisfies $\tau_{+l} = forget(u)$, for some $u \in N_G(x)$. We prove that $x \in XR^S(Q_l)$. Indeed $x$ has no neighbors in $L^l$ (otherwise $l$ would not have been a lowest node with the required property) and $x$ must have some neighbor in $R^l$ (otherwise $N_G(x) \subseteq X^l$, contradicting the fact that $x$ is of the second type). Hence, by Notation~\ref{no:tsets}, $x$ belongs to $XR^S(Q_l)$, and by our construction it also belongs to the bag of node $lmax$ in $(T, {\cal X})$. 

Let $T(x)$ be the minimal subtree of $T$ spanning the nodes $r(x)min, l_1(x)max, \dots, l_p(x)max$. We have proven above that the bags of $(T, {\cal X})$ containing $x$ form a subtree of $T$. Therefore $x$ belongs to all these all bags of $(T, {\cal X})$, of the subtree $T(x)$. It remains to argue that these bags include all vertices of $N_G(x)$. Let $u \in N_G(x)$ (we may assume that $u \neq univ$, because the property is trivial for vertex $univ$). 
In the tree decomposition $(T^C,{\cal X}^C)$, clearly $u$ appears in some bag of the subtree rooted in $r(x)$ (because $u \not\in R^C_{r(x)}$). Let $j$ be the node with operation $forget((u)$. There must be some node $l_q(x)$ in the subtree of $T^C$ rooted in $j$. Also note that $l_q(x)$ is in the subtree of $T^C$ rooted in $r(x)$. Hence $u$ is in some bag of $(T^C, {\cal X}^C)$, on the path from $l_q(x)$ to $r(x)$ in $T^C$. Thus $u$ belongs to some bag of  $(T, {\cal X})$, on the path from $l_q(x)max$ to $r(x)min$ of $T$. Hence $u$ belongs to a bag of $T(x)$, which also contains vertex $x$. We deduce that all edges incident to $x$ in $G$ are covered by some bag of $(T, {\cal X})$.

This concludes the proof of the theorem. %\ioan{No, it remains to prove that all edges incident to $x$ in $G$ are covered...}
\qed
\end{proof}

%%%%%%%%%%%%%%%%%%%%%%%%%%%%
\section{An $O^*(3^k)$ algorithm for \textsc{treewidth}}\label{se:tw3k}
%%%%%%%%%%%%%%%%%%%%%%%%%%%%
% SUBSET CONVOLUTION 

Let us have a deeper look at the partial treewidth parameter $ptw(Q)$. Let $(T^C, {\mathcal X}^C)$ be a nice tree decomposition of $G[C]$ (again the root only contains the universal vertex). The trace of each node $i$ is some valid triple $(L_i^C, X_i^C, R_i^C)$. We \emph{trim} this tree decomposition, removing all nodes $i$ such that $L_i^C = \emptyset$, except if the parent of $i$ is a $forget$ node. We call such a tree decomposition a \emph{trimmed} tree decomposition of $G[C]$. At this stage, to each node $i$ of the trimmed tree decomposition corresponds a valid quintuple $Q_i$ (for the leaf nodes, the corresponding valid quintuples are degenerated). Let $Q_i$ be a valid quintuple corresponding to some node $i$ of a trimmed tree decomposition $(T^C, {\mathcal X}^C)$. We call \emph{$Q_i$-rooted subtree decomposition} the part of $(T^C, {\mathcal X}^C)$ induced by the subtree rooted in $i$. 

Note that an equivalent definition of $ptw(Q)$ is the following: $ptw(Q)$ is the minimum value $t$ such that there exists a $Q$-rooted subtree decomposition whose nodes correspond to valid quintuples $Q'$ of local treewidth $loctw(Q')$ at most $t$. Let us add a constraint on the structure of the $Q$-rooted subtree decomposition, more precisely on the ``depth'' of this subtree in terms of  $join$ nodes.

\begin{notation}
Let $d \in \{0,\dots, k\}$ and let $Q$ be a valid quintuple. We denote by $ptw(d,Q)$ the minimum integer $t$ such that there exists a $Q$-rooted subtree decomposition satisfying the following conditions:
\begin{itemize}
\item For any valid quintuple $Q'$ corresponding to a node of the $Q$-rooted subtree decomposition, $loctw(Q') \leq t$.
\item On any path from the root of the subtree to a leaf, there are at most $d$ nodes of type $join$. 
\end{itemize}
\end{notation}

In particular, $ptw(0,Q)$ is the minimum value that can be obtained using tree decomposition without any $join$ node in the subtree of $Q$. Clearly, $ptw(Q) = ptw(k,Q)$, since any trimmed tree decomposition contains at most $k$ (actually even at most $k-1$) join nodes from the root to a leaf. 

The principle of our algorithm is to compute, for each $d$ from $0$ to $k$, the values $ptw(d,Q)$ using the previously computed values. 
In order to gain in time complexity from $O^*(4^k)$ to $O^*(3^k)$, we can not afford to consider each quintuple of the type $$(join(X^C ; L1^C, L2^C), L^C, X^C, R^C, \tau_+).$$ (Note that, as before, if the quintuple $Q$ has a $\tau_+$ is of type $join$, the parameters of the join are irrelevant for the value of $ptw(d,Q)$). Instead of computing $$ptw(d, (join(X^C ; L1^C, L2^C), L^C, X^C, R^C, \tau_+))$$ we shall simply store a unique value $$ptw(d, (join, L^C, X^C, R^C, \tau_+)),$$ which is defined as
	$$\min ptw(d, (join(X^C ; L1^C, L2^C), L^C, X^C, R^C, \tau_+)),$$
over all non-trivial partitions $(L1^C, L2^C)$ of $L^C$ (moreover there must exist valid quintuples of the form $(\tau1_-, L1^C, X^C, R^C \cup L2^C, join)$ and $(\tau2_-, L2^C, X^C, R^C \cup L1^C, join)$).

All these values $ptw(d,(join,\dots))$ will be computed from values $ptw(d-1,\dots)$, by subset convolution, using a specific algorithm \textsc{JoinPtw} (Algorithm~\ref{al:JoinPtw}). Assume that one call of Algorithm  \textsc{JoinPtw} runs in time $O^*(3^k)$. Then Algorithm \textsc{Ptw} (Algorithm~\ref{al:Ptw}) correctly computes all values $ptw(d,\dots)$, in  $O^*(3^k)$ time. Eventually, using $d=k$, we obtain the treewidth of graph $G$ within the same time complexity. 

\begin{algorithm}
%\DontPrintSemicolon % Some LaTeX compilers require you to use \dontprintsemicolon instead
\KwIn{$d$ and, if $d \geq 1$, all values $ptw(d-1,\dots)$}
\KwOut{all values $ptw(d,\dots)$}
\eIf{$d=0$}{
	set all values $ptw(0,(join,\dots))$ to $\infty$\;
}
{
	use Algorithm~\textsc{JoinPtw} to compute all values $ptw(d,(join,\dots))$ from $ptw(d-1,\dots)$\;
}

\ForEach{valid quintuple $Q=(\tau_-,L^C,X^C,R^C,\tau+)$ s.t. $\tau_-\neq join$, by increasing order on $(L^C,X^C,R^C)$}
{
         \If{$Q$ is degenerate}
         {$ptw(d,Q) = loctw(Q)$\;}
	\If{$\tau_- = introduce(u)$}
	 {$ptw(d,Q) = \max(loctw(Q), \min_{\text{valid quintuple~} Q_-} ptw(Q_-))$ over all valid quintuples $Q_-$ of type $(\tau_{--},L^C,X^C \sm \{u\},R^C \cup \{u\}),introduce(u))$\;
	}
	\If{$\tau_- = forget(u)$}
	{$ptw(d,Q) = \max(loctw(Q), \min_{\text{valid quintuple~} Q_-} ptw(Q_-))$ over all valid quintuples $Q_-$ of type $(\tau_{--},L^C\setminus\{u\},X^C \cup \{u\},R^C),forget(u))$\;
	}
}
\caption{\textsc{Ptw} computes $ptw(d,\dots)$}
\label{al:Ptw}
\end{algorithm}

It remains to describe Algorithm \textsc{JoinPtw}. The key algorithmic tool is the following result on subset convolution~\cite{BHKK07}:

\begin{theorem}[~\cite{BHKK07}]\label{th:convolution}
Let $U$ be a universe of size $s$ and $f,g$ be two functions from $2^U$ to $\mathbb{N}$. Assume these functions are stored as tables. The function $f \star g : 2^U \rightarrow \mathbb{N}$ defined as
$$f \star g (X) = \sum_{Y \subseteq X} f(Y) \cdot g(X \setminus Y)$$
is called the \emph{subset convolution} of $f$ and $g$. 

If the maximum values of $f, g$ and $f \star g$ are at most $2^s$, then $f \star g$ can be computed in time $O^*(2^s)$. 
\end{theorem}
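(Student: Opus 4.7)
The plan is to reduce the subset convolution to a collection of easier \emph{union (or covering) products} indexed by cardinality, and to compute each union product in $O^*(2^s)$ time via the fast zeta/M\"obius transform.

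First I would stratify $f$ and $g$ by cardinality: for each $i \in \{0, \dots, s\}$, set $f_i(Y) = f(Y)$ if $|Y| = i$ and $f_i(Y) = 0$ otherwise, and similarly define $g_j$. For two functions $h_1, h_2 : 2^U \to \mathbb{N}$, let the \emph{union product} be $(h_1 \boxplus h_2)(X) := \sum_{Y \cup Z = X} h_1(Y)\, h_2(Z)$, with no disjointness constraint imposed. The key observation is that whenever $|Y| = i$, $|Z| = j$, $Y \cup Z = X$ and $|X| = i + j$, the identity $|Y \cup Z| = |Y| + |Z| - |Y \cap Z|$ forces $Y \cap Z = \emptyset$. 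Consequently,
$$(f \star g)(X) \;=\; \sum_{i + j = |X|} (f_i \boxplus g_j)(X),$$
so it suffices to evaluate $f_i \boxplus g_j$ for every pair $(i,j)$ with $i + j \leq s$.

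Next I would compute each union product in $O^*(2^s)$ time via the zeta transform. The \emph{zeta transform} $\hat h(X) := \sum_{Y \subseteq X} h(Y)$ and its inverse (M\"obius transform) can each be computed from a table representation of $h$ in $O(s \cdot 2^s)$ arithmetic operations, by a standard coordinatewise dynamic program that processes one element of $U$ at a time. Expanding
$$\hat h_1(X)\, \hat h_2(X) \;=\; \sum_{Y,\, Z \subseteq X} h_1(Y)\, h_2(Z) \;=\; \sum_{W \subseteq X}\; \sum_{Y \cup Z = W} h_1(Y)\, h_2(Z) \;=\; \widehat{h_1 \boxplus h_2}(X)$$
shows that $h_1 \boxplus h_2$ is recovered by pointwise multiplying the two zeta transforms and then applying the inverse zeta. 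Looping over all $O(s^2)$ pairs $(i,j)$ and summing the appropriate strata at every $X$ then yields $f \star g$ in a total of $O(s^3 \cdot 2^s) = O^*(2^s)$ operations.

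The main conceptual step is the stratification trick: without splitting $f$ and $g$ by cardinality, the raw union product would include overlapping pairs we do not want, and there is no obvious transform that directly enforces the disjointness $Y \cap Z = \emptyset$. Once stratification is in place, the only remaining technical point is to verify that intermediate integer magnitudes stay polynomially bounded in bit length (the zeta-transformed values $\hat f_i(X)$ are at most $2^{2s}$, pointwise products at most $2^{4s}$, and the final union-product values at most $2^{O(s)}$), so that each arithmetic operation runs in polynomial time; this is precisely the role of the hypothesis that the values of $f$, $g$ and $f \star g$ are bounded by $2^s$.
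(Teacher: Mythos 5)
Your proof is correct. The paper does not prove this statement at all---it imports it verbatim from the cited reference [BHKK07]---so there is no internal argument to compare against; your reconstruction (cardinality stratification, reduction of the disjoint convolution to covering products via the identity $|Y\cup Z|=|Y|+|Z|-|Y\cap Z|$, and evaluation of each covering product by pointwise multiplication of fast zeta transforms followed by M\"obius inversion) is precisely the standard argument behind the cited result, and your closing remark on the $2^{O(s)}$ bound on intermediate values correctly accounts for the role of the hypothesis that $f$, $g$ and $f\star g$ are bounded by $2^s$.
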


Recall that we aim to compute all quantities of the form 
$$ptw(d, (join, L^C, X^C, R^C, \tau_+))$$
in $O^*(3^k)$ time. For this purpose, We shall fix a set $X^C \subseteq C$ and use Theorem~\ref{th:convolution} to compute the corresponding quantities in time $O^*(2^{k - |X^C|})$, using only 
values of the type $ptw(d-1,\dots)$.

The family of functions defined above will be used in the subset convolution:
\begin{notation}
Let $X^C \subseteq C$. Let $d,l$ be two integers $0 \leq d,l \leq k$ and let $t$ be a positive integer. We define the function 
$$ptw\_atmost\_t_{d,X^C}^{t,l} : 2^{C \setminus X^C} \rightarrow \{0,1\}$$
as follows:
$$ptw\_atmost\_t_{d,X^C}^{t,l}(L^C) = \left\{
\begin{array}{ l l }
1 & \mbox{if there is a valid quintuple~$Q=(\tau_{-},L^C,X^C,R^C,join)$}\\
    & \mbox{~s.t. $ptw(d,Q)\leq t$ and}\\
    & \mbox{ $|\{x \in S \mid N(x) \cap L^C \neq \emptyset \mbox{~and~} N(x) \cap R^C = \emptyset \}| = l$}\\
0 & \mbox{otherwise} 
\end{array}\right.$$ %\ioan{do you really think that the formula is better than full text? }
\end{notation}

\begin{lemma}\label{le:joinptw}
Let $\tQ=(join,L^C,X^C,R^C,\tau_+)$ correspond to a simplified valid triple (without the parameters of the $join$ operation). Sets $XTR^S(\tQ)$, $XR^S(\tQ)$ and the value $\epsilon(\tQ)$ are defined like in Notation~\ref{no:tsets} and Lemma~\ref{le:tepsilon}\footnote{These values are uniquely defined, but observe that we cannot use the same notation to define the set $XL^S(\tQ)$}. Let $l$ be the size of the set $\{x \in S \mid N(x) \cap L^C \neq \emptyset \mbox{~and~} N(x) \cap R^C = \emptyset\}$. 

Then $ptw(d,join,L^C,X^C,R^C,\tau_+)$ is the minimum integer $t$ such that there exists integers $l_1,l_2$, $0 \leq l_1, l_2 \leq k$ and a two-partition $(L1^C, L2^C)$ of $L^C$ satisfying:
\begin{enumerate}
\item $|X^C|+\max\left\{|XTR^S(\tQ)|+l-l_1-l_2, |XTR^S(\tQ)|+|XR^S(\tQ)|,\epsilon(\tQ)\right\} \leq t$.
\item $ptw\_atmost\_t_{d-1,X^C}^{t,l_1}(L1^C) = 1$ and $ptw\_atmost\_t_{d-1,X^C}^{t,l_2}(L2^C) = 1$.
\end{enumerate}
\end{lemma}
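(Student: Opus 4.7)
The proof will establish both inequalities between $ptw(d, \tilde{Q})$ and the claimed minimum. The central algebraic observation is an identity relating $|XL^S(Q)|$ for a concrete join quintuple $Q$ to the pure-side counts $l_1, l_2$ at its two children.

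For the forward direction, suppose $ptw(d, \tilde{Q}) \leq t$. By the definition of $ptw$ on simplified join quintuples, there is a two-partition $(L1^C, L2^C)$ of $L^C$ such that the non-simplified $Q = (join(X^C; L1^C, L2^C), L^C, X^C, R^C, \tau_+)$ satisfies $ptw(d, Q) \leq t$. The witnessing $Q$-rooted subtree decomposition has a root of $join$ type whose two children correspond to simplified quintuples $Q_i = (\tau_{i,--}, Li^C, X^C, R^C \cup L_{3-i}^C, join)$, each satisfying $ptw(d-1, Q_i) \leq t$ since every root-to-leaf path in a child's subtree has one fewer $join$ node than in the full subtree. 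Define $l_i = |\{x \in S : N(x) \cap Li^C \neq \emptyset,\ N(x) \cap (R^C \cup L_{3-i}^C) = \emptyset\}|$. Then condition 2 holds by the very definition of $ptw\_atmost\_t_{d-1,X^C}^{t,l_i}$.

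The verification of condition 1 rests on the identity $|XL^S(Q)| = l - l_1 - l_2$. Indeed, any vertex $x \in S$ counted by $l$ (i.e.\ with $N(x) \cap L^C \neq \emptyset$ and $N(x) \cap R^C = \emptyset$) partitions into three classes: those whose neighbors in $L^C$ lie in $L1^C$ only (counted by $l_1$), in $L2^C$ only (counted by $l_2$), or meet both sides (which is exactly $XL^S(Q)$ by the $join$ case of Notation~\ref{no:tsets}). The sets $XTR^S$, $XR^S$ and the flag $\epsilon$ depend only on $L^C, X^C, R^C, \tau_+$, so they agree on $Q$ and $\tilde{Q}$. Plugging into Notation~\ref{no:loctw} and using $loctw(Q) \leq ptw(d,Q) \leq t$ yields condition 1.

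For the reverse direction, given $l_1, l_2$ and a partition $(L1^C, L2^C)$ of $L^C$ satisfying both conditions, condition 2 yields quintuples $Q_i = (\tau_{i,--}, Li^C, X^C, R^C \cup L_{3-i}^C, join)$ with $ptw(d-1, Q_i) \leq t$ and the prescribed pure-side counts. Pick a $Q_i$-rooted subtree decomposition for each, and glue them under a new root of type $join(X^C; L1^C, L2^C)$, producing a $Q$-rooted subtree decomposition for $Q = (join(X^C; L1^C, L2^C), L^C, X^C, R^C, \tau_+)$ in which every root-to-leaf path uses at most $d$ $join$ nodes (the new root plus at most $d-1$ below). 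The new root has local treewidth controlled by condition 1, again via $|XL^S(Q)| = l - l_1 - l_2$, while every deeper node already satisfies the local-treewidth bound by the choice of the $Q_i$-rooted decompositions. Hence $ptw(d, Q) \leq t$, so $ptw(d, \tilde{Q}) \leq t$ by taking the minimum over partitions.

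The main technical point is the identity $|XL^S(Q)| = l - l_1 - l_2$ together with the invariance of $XTR^S, XR^S, \epsilon$ under the choice of $(L1^C, L2^C)$; once this is pinned down, the statement is essentially a bookkeeping translation of the $join$ clause of Notation~\ref{no:ptw} into a form suitable for subset convolution.
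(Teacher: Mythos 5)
Your proof is correct and follows essentially the same route as the paper's: extract the two child quintuples of the join root for the forward direction, glue two witnessing subtree decompositions under a fresh join root for the converse, with the identity $|XL^S(Q)| = l - l_1 - l_2$ doing the accounting in both directions. You actually justify that identity (via the three-way partition of the vertices counted by $l$) more explicitly than the paper does, which merely asserts it.
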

\begin{proof}
If $ptw(d,join,L^C,X^C,R^C,\tau_+)\leq t$, there exists a valid quintuple $Q=(join(L1,L2),L^C,$ $X^C,R^C,\tau_+)$ such that $ptw(d,Q)\leq t$. Fix a $Q$-rooted subtree decomposition realizing $ptw(d,Q)$ and let $Q1_-=(\tau1_{--},L1^C,X^C,R^C \cup L2^C, join)$ and $Q2_-=(\tau2_{--},L2^C,X^C,R^C \cup L1^C, join)$ be the valid quintuples corresponding to the sons of the root of the subtree. Let $l_1$ (resp. $l_2$) be the number of vertices of $S$ having neighbors in $L1^C$ but with no neighbors in $R^C \cup L2^C)$ (resp. having neighbors in $L2^C$ but not in $R^C \cup L1^C$). We point out that $|XL^S(Q)| = l - l_1 - l_2$. Since $loctw(Q)\leq t$, this implies the first property. We also observe that $ptw(d-1,Q1_-) \leq t$ and $ptw(d-1,Q2_-) \leq t$, since the $Q1_-$ and the $Q2_-$-rooted subtree decompositions have at most $d-1$ join nodes on any path from their root to a leaf (one less that the $Q$-rooted subtree decomposition). This implies the second property of the theorem.

Conversely, consider $\tQ=(join,L^C,X^C,R^C,\tau_+)$, $t,l,l_1,l_2$, $L1^C$ and $L2^C$ satisfying the conditions of the theorem. We prove that $ptw(d,\tQ) \leq t$. Since $ptw\_atmost\_t_{d-1,X^C}^{t,l_1}(L1^C) = 1$ and $ptw\_atmost\_t_{d-1,X^C}^{t,l_2}(L2^C) = 1$ there exist valid quintuples $Q1_-=(\tau1_{--},L1^C,X^C,R^C \cup L2^C, join)$ and $Q2_-(\tau2_{--},L2^C,X^C,R^C \cup L1^C, join)$ such that
 $ptw(d-1,Q1_-) \leq t$ and $ptw(d-1,Q2_-) \leq t$. Let $Q = (join(L1^C, L2^C),L^C,X^C,R^C,\tau_+)$. Note that $Q$ is a valid quintuple and $|XL^S(Q)| = l-l_1-l_2$. By the first proprety, we have that
 $loctw(Q)\leq t$. Denote by $T_1$ a $Q_1$-rooted subtree decomposition such that each part from the root to a leaf contains at most $d-1$ $join$ nodes and each valid triple of the subtree has a local treewidth at most $t$. Choose a similar $Q_2$-rooted subtree decomposition denoted $T_2$. Let now $T$ be any $Q$-rooted tree decomposition, and let $i_1$ and $i_2$ be the sons of the root. Replace the subtree rooted in $i_1$ (resp. in $i_2$) by $T_1$ (resp $T_2$). In this new $Q$-rooted subtree decomposition, all paths from the root to a leaf contain at most $d$ nodes of type $join$, and all valid quintuples have local treewidth at most $t$. This certifies that $ptw(d,Q) \leq t$ and consequently $ptw(d,\tQ) \leq t$.
\qed
\end{proof}

Algorithm~\ref{al:JoinPtw}, \textsc{JoinPtw}, simply applies Lemma~\ref{le:joinptw} and the observation that the second condition of this lemma can be expressed by a convolution:
\begin{lemma}\label{le:joinptwConv}
The second condition of Lemma~\ref{le:joinptw} is equivalent to:
$$(ptw\_atmost\_t_{d-1,X^C}^{t,l_1} \star ptw\_atmost\_t_{d-1,X^C}^{t,l_2} )(L^C) \geq 1.$$
\end{lemma}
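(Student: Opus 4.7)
The plan is to unfold the definition of subset convolution and observe that, since the two factors are boolean, the sum reduces to the existence of a witnessing partition. First I would apply Theorem~\ref{th:convolution} with $U = C \setminus X^C$ to expand
\[
(ptw\_atmost\_t_{d-1,X^C}^{t,l_1} \star ptw\_atmost\_t_{d-1,X^C}^{t,l_2})(L^C) = \sum_{L1^C \subseteq L^C} ptw\_atmost\_t_{d-1,X^C}^{t,l_1}(L1^C)\cdot ptw\_atmost\_t_{d-1,X^C}^{t,l_2}(L^C\setminus L1^C).
\]
By the very definition of $ptw\_atmost\_t$, each factor belongs to $\{0,1\}$, hence every summand is $0$ or $1$, and the sum is at least $1$ if and only if there exists a subset $L1^C \subseteq L^C$ for which both factors evaluate to $1$ simultaneously. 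Setting $L2^C = L^C \setminus L1^C$ turns such a witnessing subset into a two-partition of $L^C$, which is exactly the content of the second condition of Lemma~\ref{le:joinptw}.

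The remaining point to check is that the partition $(L1^C,L2^C)$ provided by the convolution is truly non-trivial, as required by the underlying definition of a join quintuple (cf. Lemma~\ref{le:tvalidq}). For that I would note that by the definition of $ptw\_atmost\_t_{d-1,X^C}^{t,\cdot}$, its value at a subset $Y \subseteq C \setminus X^C$ is $1$ only if there exists a valid quintuple whose $L^C$-coordinate equals $Y$ and whose $\tau_+$ is $join$. Since by Lemma~\ref{le:tvalidq} every valid quintuple has $L^C \neq \emptyset$, and since no degenerate quintuple has $\tau_+$ of type $join$, we have $ptw\_atmost\_t_{d-1,X^C}^{t,\cdot}(\emptyset) = 0$. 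Consequently, the summands corresponding to $L1^C = \emptyset$ and $L1^C = L^C$ automatically vanish from the convolution sum, so any surviving witness $L1^C$ yields a genuine non-trivial two-partition.

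Putting these two observations together gives the equivalence claimed in the lemma. The only ``obstacle'', if any, is the boundary check that the trivial partitions are filtered out by the boolean function itself; once this is noticed, the proof is a direct unfolding of the subset convolution definition. I would therefore present it as a short verification consisting of the expansion above followed by the boundary argument.
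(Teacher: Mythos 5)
Your proof is correct and follows essentially the same route as the paper: expand the convolution as a sum of boolean products over subsets of $L^C$ and observe that the sum is at least $1$ exactly when some subset $Y$ makes both factors equal to $1$, which yields the witnessing two-partition $(Y, L^C \setminus Y)$. Your additional boundary check that $ptw\_atmost\_t_{d-1,X^C}^{t,\cdot}(\emptyset)=0$ (so the witnessing partition is automatically non-trivial) is a valid observation that the paper leaves implicit, and it does not change the substance of the argument.
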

\begin{proof}
``$\Rightarrow$:'' Assume the second condition of Lemma~\ref{le:joinptw} is satisfied. Then there is a two-partition $(L1^C, L2^C)$ of $L^C$ such that $ptw\_atmost\_t_{d-1,X^C}^{t,l_1}(L1^C) = 1$ and  $ptw\_atmost\_t_{d-1,X^C}^{t,l_2}(L2^C) = 1$. Recall that $$(ptw\_atmost\_t_{d-1,X^C}^{t,l_1} \star ptw\_atmost\_t_{d-1,X^C}^{t,l_2} )(L^C)$$
is defined (see \ref{th:convolution}) as 
$$\sum_{Y \subseteq L^C} ptw\_atmost\_t_{d-1,X^C}^{t,l_1}(Y) \cdot ptw\_atmost\_t_{d-1,X^C}^{t,l_2}(L^C \setminus Y).$$
Hence, in the sum over all $Y \subseteq L^C$, when $Y = L1^C$, both factors are equal to one. Therefore the sum is at least $1$.

`$\Leftarrow$:'' Conversely, in the sum above all terms are either 0 or 1. If the sum is at least 1, it means there exists $Y \subseteq L^C$ such that $ptw\_atmost\_t_{d-1,X^C}^{t,l_1}(Y) =1$ and 
$ptw\_atmost\_t_{d-1,X^C}^{t,l_2}(L^C \setminus Y) = 1$. Thus by taking $(L1^C, L2^C) = (Y, L^C \setminus Y)$ we have the two-partition of $L^S$ required by Lemma~\ref{le:joinptw}.
\qed
\end{proof}

\begin{algorithm}
%\DontPrintSemicolon % Some LaTeX compilers require you to use \dontprintsemicolon instead
\KwIn{$d$ and, if $d \geq 1$, all values $ptw(d-1,\dots)$}
\KwOut{all values $ptw(d,(join,\dots)$}

\ForEach{$X^C \subseteq C$}
{
	\ForEach{$t,l$ with $0 \leq t \leq n$ and $0 \leq l \leq k$}
	{
		compute all functions $ptw\_atmost\_t_{d-1,X^C}^{t,l}: 2^{C \setminus X^C} \rightarrow \{0,1\}$\;		
	}
	\ForEach{$l_1,l_2$ with $0 \leq l_1,l_2 \leq l$}
	{
		compute, using Theorem~\ref{th:convolution}, all convolutions $ptw\_atmost\_t_{d-1,X^C}^{t,l_1} \star ptw\_atmost\_t_{d-1,X^C}^{t,l_2}$\;
	}
	\ForEach{$L^C$, $R^C$, $\tau+$ such that $\tQ = (join, L^C, X^C, R^C, \tau_+)$ corresponds to valid tuples}
	{
		compute $XTR^S(\tQ), XR^S(\tQ),\epsilon(\tQ)$ using Notation~\ref{no:tsets} and Lemma~\ref{le:tepsilon}\;
		let $l$ be the size of $\{x \in S \mid x$ sees $L^C$ but does not see $C \setminus (L^C \cup X^C)  \}$\;
		find the minimum $t$ such that there exist $l, l_1, l_2$ satisfying (cf. Lemmata~\ref{le:joinptw} and~\ref{le:joinptwConv}) \\
		\ \ \ (1) $|X^C|+\max\left\{|XTR^S(\tQ)|+l-l_1-l_2, |XTR^S(\tQ)|+|XR^S(\tQ)|,\epsilon(\tQ)\right\} \leq t$, and \\
		\ \ \ (2) $(ptw\_atmost\_t_{d-1,X^C}^{t,l_1} \star ptw\_atmost\_t_{d-1,X^C}^{t,l_2} )(L^C) \geq 1$\;
		$ptw(d,(join, L^C, X^C, R^C, \tau_+)) := t$\;
	}
}
\caption{\textsc{JoinPtw} computes $ptw(d,(join,\dots)$}
\label{al:JoinPtw}
\end{algorithm}

Algorithm \textsc{JoinPtw} (Algorithm~\ref{al:JoinPtw}) is correct by Lemma~\ref{le:joinptw}. Its running time is given by the computations of lines 4-5. For each set $X^C \subseteq C$, we compute $O(k^2n)$ subset convolutions as functions from $2^{C \setminus X^C}$ to ${\mathbb N}$. Each of these computations takes, by Theorem~\ref{th:convolution}, $O^*(2^{k-|X^C|})$ time. Recall that 
$$\sum_{|X^C| \subseteq C} 2^{k - |X^C|} = \sum_{i=0}^k \binom{k}{i} 2^{k-i} = 3^k.$$ 
hence the whole running time of the algorithm is of order $O^*(3^k)$.

Altogether, we obtain:
\begin{theorem}\label{th:twvc3k}
The \textsc{Treewidth} problem can be solved in $O^*(3^k)$ time, where $k$ is the size of the minimum vertex cover of the input graph. 
\end{theorem}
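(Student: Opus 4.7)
The plan is to tie together the algorithms \textsc{Ptw} and \textsc{JoinPtw} with Theorem~\ref{th:tw}. First, compute a minimum vertex cover in $O^*(1.28^k)$ by Proposition~\ref{pr:VC}, add a universal vertex to obtain $G'$ with vertex cover $C$ of size $k+1$, and enumerate all valid triples and all valid quintuples (using Lemmata~\ref{le:valid} and~\ref{le:tvalidq}), sorted along a linear extension of the precedence order of Definition~\ref{de:prec}. As in Theorem~\ref{th:twvc}, we never store the partition parameters of a quintuple whose $\tau_+$ is a $join$; the novelty is that we also drop the partition parameters when $\tau_-$ is a $join$, keeping only a flag that the operation is of type $join$. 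With this simplification, the number of stored quintuples is $O^*(3^k)$, since the parameter space is captured by the triples $(L^C,X^C,R^C)$ together with boundedly many flags for $\tau_-$ and $\tau_+$.

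Next I would prove by induction on $d$ from $0$ to $k$ that Algorithm \textsc{Ptw} (invoking \textsc{JoinPtw} as a subroutine) correctly computes all values $ptw(d, Q)$. The base case $d=0$ is the easiest: the $Q$-rooted subtree decomposition is forced to be join-free, so all join quintuples get $\infty$ and the remaining cases are handled by the $introduce/forget$/degenerate clauses in exactly the same way as in Notation~\ref{no:ptw}. For the inductive step, correctness of the non-$join$ branches follows directly from the recurrences of Notation~\ref{no:ptw}, which remain valid when parameterized by $d$ because removing the root of a $Q$-rooted subtree decomposition does not create additional join nodes on any root-to-leaf path. For the $join$-branch, correctness is exactly the content of Lemma~\ref{le:joinptw}, combined with Lemma~\ref{le:joinptwConv} which reformulates the existence of a valid $(L1^C,L2^C)$ partition as a subset convolution being at least one.

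For the running time, the non-$join$ work inside \textsc{Ptw} is polynomial per quintuple, hence $O^*(3^k)$ overall since there are $O^*(3^k)$ stored quintuples (as observed in the paper, the minima over $Q_-$ required by Notation~\ref{no:ptw} can be updated on-line while scanning quintuples in the precedence order, so no extra factor is needed). The work inside \textsc{JoinPtw} is dominated by the convolutions: for each $X^C \subseteq C$, each choice of $t \in \{0,\dots,n\}$ and each pair $(l_1,l_2) \in \{0,\dots,k\}^2$, we invoke Theorem~\ref{th:convolution} on functions over $2^{C\setminus X^C}$, costing $O^*(2^{k-|X^C|})$. Summing over $X^C$ yields $\sum_{i=0}^{k}\binom{k}{i} 2^{k-i} = 3^k$, and the extra polynomial factors coming from $t$, $l$, $l_1$, $l_2$ and from reading off each $ptw(d,(join,L^C,X^C,R^C,\tau_+))$ are absorbed in the $O^*$. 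So each call to \textsc{JoinPtw} costs $O^*(3^k)$, and \textsc{Ptw} makes $k+1$ such calls, staying within $O^*(3^k)$.

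Finally, applying Theorem~\ref{th:tw} with the values $ptw(Q_{\text{last}}) = ptw(k, Q_{\text{last}})$ computed for all quintuples of the form $(\tau_-, C\setminus\{univ\}, \{univ\},\emptyset, forget(univ))$ yields $tw(G') = \min_{Q_{\text{last}}} ptw(Q_{\text{last}})$, and we return $tw(G) = tw(G')-1$. The main obstacle is verifying that storing only the $join$ flag in $\tau_-$ (rather than the full partition $(L1^C,L2^C)$) does not lose information: this is exactly handled by defining $ptw(d, (join,\dots))$ as the minimum over all valid partitions, and by the reformulation of the partition-existence condition as a subset convolution in Lemmata~\ref{le:joinptw} and~\ref{le:joinptwConv}. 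Correctness and the total bound $O^*(3^k)$ follow; the same bookkeeping can reconstruct an optimal tree decomposition by recording, alongside each $ptw(d, Q)$, the preceding quintuples (and, for $join$ quintuples, the witnessing $L1^C$ recovered from the convolution) that realise the minimum.
\qed
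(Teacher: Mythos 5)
Your proposal is correct and follows essentially the same route as the paper: introduce the join-depth parameter $d$ with $ptw(Q)=ptw(k,Q)$, compute the simplified join quintuples via the subset convolution of Lemmata~\ref{le:joinptw} and~\ref{le:joinptwConv} inside \textsc{JoinPtw}, handle the remaining quintuples as in Theorem~\ref{th:twvc}, and bound the total cost by $\sum_{i=0}^{k}\binom{k}{i}2^{k-i}=3^k$ before concluding with Theorem~\ref{th:tw}. The induction on $d$, the treatment of the dropped join parameters via the minimum over partitions $(L1^C,L2^C)$, and the running-time accounting all match the paper's argument.
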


%%%%%%%%%%%%%%%%%%%%%%%%%%%%%%%%%%%%%%%%%%%%%%%%%%%%%%%%%%%%%%%%%%%%%%%%%%%%%%%%
\section{\textsc{Pathwidth} parameterized by the vertex cover of the complement}\label{se:pwcomp}

Let $C$ be a vertex cover of $\overline{G}$ and let $S = V \setminus C$ be an independent set in  $\overline{G}$ 
and thus a clique in $G$. Denote $k' = |C|$. Recall that, by Proposition~\ref{pr:nice}, for any tree or path decomposition of $G$, there will be some bag of the decomposition containing $S$.
Boadlender et. al.~\cite{BodlaenderFKKT06}  proved the following. 

\begin{proposition}[\cite{BodlaenderFKKT06}]
There is an algorithm taking as input a graph $G=(V,E)$ and a clique $S$ of $G$ and computing the treewidth of $G$ in $O^*(2^{|V \setminus S|})$ time.
\end{proposition}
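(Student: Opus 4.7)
The plan is to reduce the problem to the Held--Karp--style dynamic programming for \textsc{Treewidth} (which runs in $O^*(2^n)$ time in general) and argue that, thanks to the clique $S$, one can restrict the DP states to subsets of $V \setminus S$.

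First I would use the elimination-ordering formulation: $tw(G)$ equals the minimum, over linear orderings $\pi$ of $V$, of the maximum, over $v \in V$, of the number of neighbors of $v$ that come after $v$ in $\pi$ in the fill-in graph of $\pi$. Equivalently, each tree decomposition $(T,\mathcal{X})$, once rooted, induces an elimination ordering of the same width by a post-order traversal that lists, at each node, only the vertices not present in the parent's bag; and conversely. The key observation is that $S$, being a clique in $G$, is contained in some bag of every tree decomposition of $G$ (Proposition~\ref{pr:hellyc}). Rooting an optimal decomposition at such a bag and applying the post-order construction yields an optimal elimination ordering in which all vertices of $S$ appear last. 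Hence
$$tw(G) = \min_{\pi \text{ with } S \text{ at the end}} \max_{v \in V} \bigl|\{u \in N^{\text{fill}}_\pi(v) : u >_\pi v\}\bigr|.$$

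Next I would set up the BFKKT dynamic programming. For $W \subseteq V$, let $Q(W)$ be the minimum, over elimination orderings of $W$, of the maximum fill-in degree incurred when eliminating a vertex of $W$ (where the fill-in degree of the last-eliminated $v \in W$ is measured as $r(v,W) - 1$, the number of vertices of $V \setminus (W \setminus \{v\})$ that are connected to $v$ through $W$ in $G$). This satisfies the recurrence
$$Q(W) = \min_{v \in W} \max\bigl(Q(W \setminus \{v\}),\ r(v,W) - 1\bigr),$$
with each $r(v,W)$ computable in polynomial time from $G$, $v$ and $W$. By the observation above, only subsets $W \subseteq V \setminus S$ need to be stored; there are $2^{|V \setminus S|}$ such subsets, each processed in polynomial time. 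Once all values $Q(V \setminus S)$ and similar are known, the last $|S|$ elimination steps (in which $S$ is peeled off a clique) contribute fill-in degrees that are explicit functions of the residual clique sizes, so $tw(G)$ is obtained by a polynomial-time post-processing of the table.

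The main obstacle is justifying rigorously that an "$S$-last" ordering is without loss of generality, i.e.\ that no optimum is lost by forbidding any vertex of $V \setminus S$ from being eliminated after some vertex of $S$. The cleanest route is the tree-decomposition argument above (Helly's property delivers the $S$-bag for free), rather than a direct swap argument on orderings, since tracking how fill-in interacts with $S$ when two vertices are exchanged inside an ordering is notationally fragile. A secondary technical point is verifying that $r(v,W)$ is indeed computable in polynomial time, which is standard via a BFS in $G$ from $v$ restricted to $W$.
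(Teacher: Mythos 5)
The paper does not prove this proposition at all; it imports it verbatim from \cite{BodlaenderFKKT06}, and your reconstruction is precisely the argument of that reference: Helly's property forces the clique $S$ into one bag, so some optimal elimination ordering ends with $S$, and the Held--Karp-style recurrence $Q(W)=\min_{v\in W}\max(Q(W\setminus\{v\}),\,r(v,W)-1)$ then only needs the $2^{|V\setminus S|}$ states $W\subseteq V\setminus S$, with the tail contributing $|S|-1$. The argument is correct (and is the same ``clique trick'' the paper itself re-derives for pathwidth in Lemmata~\ref{le:wcomp} and~\ref{le:glueing}), so nothing further is needed.
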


Consequently, the treewidth of $G$ can be computed in time $O^*(2^{k'})$. 

The previous lemma can be adapted, with similar techniques, to the pathwidth case.

\begin{lemma}\label{le:wcomp}
Let $G=(V,E)$ be a graph and $W \subset V$ a vertex subset. Let us denote by $pw(G; W)$  the minimum width, over all path decompositions of $G$ having $W$ as an end bag.
 
There is an $O^*(2^{|C|})$ algorithm computing the values $pw(G[N[L]]; N(L))$, for all subsets $L$ of $C$. 
\end{lemma}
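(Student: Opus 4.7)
The plan is to solve this by dynamic programming on subsets $L\subseteq C$, processed in order of increasing cardinality, computing $f(L):=pw(G[N[L]];N(L))$ at each step. The base case is $f(\emptyset)=-1$ (empty path decomposition). Since there are $2^{|C|}$ subsets and each $f(L)$ will be computed in polynomial time from the previously computed values, the total running time will be $O^*(2^{|C|})$ as claimed.

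For the recursion, I would consider any path decomposition $\mathcal{P}$ of $G[N[L]]$ with end bag $N(L)$ and let $u\in L$ be the last vertex of $L$ to be forgotten in $\mathcal{P}$. Let $B^-$ be the bag just before $u$ is forgotten and $B^+=B^-\setminus\{u\}$ the next bag. Since every other vertex of $L$ has already been forgotten by that point, $B^-\subseteq\{u\}\cup N(L)$. The key structural claim I would establish is that $B^+=N(L)$: the inclusion $B^+\subseteq N(L)$ is immediate, while the reverse inclusion follows from $N(L\setminus\{u\})\setminus\{u\}\subseteq B^+$ (every such vertex lies in the end bag, hence is never forgotten, and must have been introduced before one of its $(L\setminus\{u\})$-neighbours is forgotten inside the prefix) together with $N(u)\cap N(L)\subseteq B^+$ (the $N(L)$-neighbours of $u$ must share a bag with $u$, namely $B^-$), the union of these two sets being exactly $N(L)$. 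This forces $|B^-|=|N(L)|+1$, so the width of $\mathcal{P}$ is at least $|N(L)|$. Moreover, restricting the prefix of $\mathcal{P}$ to the vertex set $N[L\setminus\{u\}]$ yields a valid path decomposition of $G[N[L\setminus\{u\}]]$ ending exactly with $N(L\setminus\{u\})$, of width no more than that of $\mathcal{P}$, giving $f(L\setminus\{u\})\le f(L)$; minimising over $u$ provides the lower bound $f(L)\ge\min_{u\in L}\max\{f(L\setminus\{u\}),|N(L)|\}$. For the matching upper bound, I would start from an optimal decomposition of $G[N[L\setminus\{u\}]]$ ending with $N(L\setminus\{u\})$, introduce $u$ (only when $u\notin N(L\setminus\{u\})$), then introduce the new neighbours $N(u)\setminus N[L\setminus\{u\}]$ one by one, forget $u$, and terminate; the largest bag created during the extension is $N(L)\cup\{u\}$, of size $|N(L)|+1$. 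Combining the two directions yields the tight recursion
\[
f(L)=\min_{u\in L}\max\bigl\{\,f(L\setminus\{u\}),\,|N(L)|\,\bigr\}.
\]

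The main obstacle will be the careful bookkeeping in the restriction step: when vertices outside $N[L\setminus\{u\}]$ are removed from every bag of the prefix, one must check that the result is still a valid path decomposition of $G[N[L\setminus\{u\}]]$ covering all of its edges, and that its final bag reduces exactly to $N(L\setminus\{u\})$, neither a strict subset nor a strict superset. The cleanest way to handle this is a short case split on whether $u$ has a neighbour in $L\setminus\{u\}$, because in the positive case $u$ itself belonged to the previous end bag $N(L\setminus\{u\})$ and is removed by the restriction, while in the negative case $u$ was not present at all in the previous decomposition. Once the structural recursion is in place, the algorithm simply fills a table indexed by $L\subseteq C$ in order of increasing $|L|$; each entry is evaluated in polynomial time by trying each of the at most $|L|$ choices of $u$ and consulting previously computed values, giving a total running time of $O^*(2^{|C|})$.
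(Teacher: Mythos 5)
Your algorithm is the same as the paper's: a dynamic program over the subsets $L\subseteq C$ in order of increasing size, driven by the recursion $pw(G[N[L]];N(L))=\max\bigl\{|N(L)|,\;\min_{u\in L}pw(G[N[L\setminus\{u\}]];N(L\setminus\{u\}))\bigr\}$ (your $\min_u\max\{\cdot,\cdot\}$ form is equivalent, since $|N(L)|$ does not depend on $u$), with the same $O^*(2^{|C|})$ bookkeeping. Where you genuinely differ is in how the recursion is justified. The paper passes to the vertex-separation characterization of pathwidth, writing $pw(G[N[L]];N(L))$ as a minimum of $\max_i |N(\{v_i,\dots,v_n\})|$ over layouts ending with the vertices of $L$, and reads the recursion off that formula using the known equivalence of pathwidth and vertex separation; you instead prove the recursion directly by surgery on path decompositions (the ``last forgotten vertex of $L$'' argument for the lower bound, and an explicit introduce-then-forget extension for the upper bound). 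Your route is more self-contained, at the price of more case analysis; the paper's is shorter but leans on the layout machinery. One detail of yours needs tightening: if several vertices of $L$ share the same last bag $B^-$ as $u$, then $B^-\cap N[L\setminus\{u\}]$ can be a \emph{proper} superset of $N(L\setminus\{u\})$, so the truncated-and-restricted prefix need not end with exactly $N(L\setminus\{u\})$ as you claim. Since you have already established $N(L\setminus\{u\})\subseteq B^-$, appending one final bag equal to $N(L\setminus\{u\})$ repairs this without increasing the width; with that fix (and writing $|B^-|\ge|N(L)|+1$ rather than equality) your argument is correct.
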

\begin{proof}
We use an alternative definition for pathwidth, expressed as a vertex layout problem, see e.g.~\cite{BodlaenderFKKT12}. 
A layout is a total ordering ${\cal L} = (v_1,\dots, v_n)$ of the vertices of $G$. The vertex separation of a layout is 
$$vs({\cal L}) = \max_{1 \leq i \leq n} |N(\{v_i, v_{i+1}, \dots, v_n\})|.$$
 
Consider a path decomposition $(P,{\cal X})$ of $G$. We can associate a layout ${\cal L}_{(P,{\cal X})}$ as follows~: for any two vertices $u$ and $v$, if the lowest bag containing $u$ is strictly before the lowest bag containing $v$, then $u$ must appear before $v$ in the layout. It is known~\cite{Kinnersley92} that $vs({\cal L}_{(P,{\cal X})}) \leq width((P,{\cal X}))$. 
Conversely, given a layout ${\cal L}= (v_1,\dots, v_n)$, we can construct a path decomposition $(P,{\cal X})_{\cal L}$ with $n$ bags, bag $i$ being $N(\{v_i, v_{i+1}, \dots, v_n\}) \cup \{v_i\}$. Clearly,
the width of this path decomposition equals $vs({\cal L})$. In particular, $\pw(G) = \min_{{\cal L} \in {\cal Q}} vs({\cal L})$~\cite{Kinnersley92} where ${\cal Q}$ is the set of all layouts.

Consider now parameter $\pw(G[N[L]]; N(L))$, for a subset $L$ of $V \setminus C$. With similar arguments, 
$$\pw(G[N[L]]; N(L)) = \min_{{\cal L}=(v_1,\dots, v_n)} \left\{\max_{n - |L| +1 \leq i \leq n} N(\{v_i, v_{i+1}, \dots, v_n\})\right\}$$
where the minimum is taken over all layouts ${\cal L}=(v_1,\dots, v_n)$ such that $L = \{v_{n-|L|+1}, \dots, v_n\}$. In full words, the minimum is taken over all layouts ending with the vertices 
of $L$.

For $L = \emptyset$, we take $\pw(G[\emptyset]; \emptyset) = 0$. For each $L$ non empty, we have (see e.g.~\cite{BodlaenderFKKT12}) that
	$$\pw(G[N[L]]; N(L)) = \max \left\{|N(L)|, \min_{u \in L} \left\{\pw(G[N[L']] ; N(L'))  \mid L' = L \setminus \{u\}\right\}\right\}.$$
Intuitively, for computing an optimal layout ending with $L$, we must try which vertex $u \in L$ occupies the position $v_{n -|L|+1}	$, then we simply take the best ordering on $L' = L \setminus \{u\}$. Using the above equation, if the values $\pw(G[N[L']] ; N(L'))$ are known for all subsets $L'$ smaller that $L$, then $\pw(G[N[L]]; N(L))$ is computable in polynomial time. Thus we can compute all values $\pw(G[N[L]]; N(L))$ by dynamic programming over all sets $L \subseteq C$ by increasing size, in time $O^*(2^{|C|})$.
\qed
\end{proof}

\begin{lemma}\label{le:glueing}
Let $G=(V,E)$ be a graph, $C$ be a vertex cover of $\overline{G}$ and $S = V \setminus C$. Then
$$\pw(G)= \min_{L \subseteq C } \max \left\{pw(G[N[L]];  N(L)]), pw(G[N[R]]; N(R)), |S \cup N(L)|-1|\right\}$$
where $R = C \setminus N[L]$.
\end{lemma}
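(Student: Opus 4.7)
The plan is to prove both inequalities directly: the upper bound by a gluing construction, and the lower bound by a restriction argument on an optimal path decomposition. Two preliminary structural observations, stemming from the definition $R = C \setminus N[L]$, will be used throughout. First, $C$ partitions disjointly as $L \sqcup (N(L) \cap C) \sqcup R$, so $S \cup N(L) = S \cup (N(L) \cap C) = V \setminus (L \cup R)$. Second, since $R \cap N(L) = \emptyset$ we also have $L \cap N(R) = \emptyset$, whence $N(R) \cap C \subseteq C \setminus (L \cup R) = N(L) \cap C$, giving the asymmetric inclusion $N(R) \subseteq S \cup N(L)$.

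For the upper bound, fix $L \subseteq C$ and construct a path decomposition of $G$ by concatenating three pieces: an optimal path decomposition $\mathcal{P}_L$ of $G[N[L]]$ with end bag $N(L)$, a single middle bag equal to $S \cup N(L)$, and an optimal path decomposition $\mathcal{P}_R$ of $G[N[R]]$ reversed so that its end bag $N(R)$ sits right after the middle bag. Validity follows from the observations above: $V = N[L] \cup S \cup N[R]$ so every vertex appears; both $N(L)$ and $N(R)$ lie in the middle bag, so vertex intervals remain contiguous across the two seams; every edge of $G$ is covered (edges inside $N[L]$ or $N[R]$ by $\mathcal{P}_L$ or $\mathcal{P}_R$, and for any $S$-vertex $s \notin N(L) \cup N(R)$ all neighbors of $s$ lie in $S \cup (N(L) \cap C)$, which is the middle bag). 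The width of the constructed decomposition is the claimed maximum.

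For the lower bound, take an optimal path decomposition $\mathcal{P}$ of $G$. Since $S$ is a clique in $G$, by Proposition~\ref{pr:hellyc} some bag $b_m$ contains $S$. Define $L \subseteq C$ as the set of vertices of $C$ whose interval in $\mathcal{P}$ lies strictly to the left of $b_m$, and let $R = C \setminus N[L]$. One shows $N(L) \subseteq b_m$: the $S$-part is in $S \subseteq b_m$; any $v \in N(L) \cap C$ has a neighbor in $L$ strictly left of $b_m$, and since $v \notin L$ its interval must extend to include $b_m$. A symmetric argument yields $N(R) \subseteq b_m$. This already gives $|S \cup N(L)|-1 \leq |b_m|-1 \leq \pw(G)$. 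The prefix of $\mathcal{P}$ up to and including $b_m$, restricted to $N[L]$, is a path decomposition of $G[N[L]]$ (edges internal to $N(L)$ are covered by $b_m$ itself, edges touching $L$ are covered strictly left of $b_m$) whose last bag equals $b_m \cap N[L] = N(L)$ since $L \cap b_m = \emptyset$, of width at most $\pw(G)$. Symmetrically, the suffix of $\mathcal{P}$ from $b_m$ onwards, restricted to $N[R]$ and augmented at the start with a new bag equal to $N(R) \subseteq b_m$ (which does not increase the width), gives a path decomposition of $G[N[R]]$ with end bag $N(R)$ and width at most $\pw(G)$.

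The main obstacle is explaining why a single middle bag $S \cup N(L)$ suffices rather than two (one mediating towards $N(L)$, another towards $N(R)$): this is precisely the asymmetric inclusion $N(R) \subseteq S \cup N(L)$, forced by the specific choice $R = C \setminus N[L]$. A secondary technical point is verifying that the prefix and suffix restrictions really yield valid path decompositions of $G[N[L]]$ and $G[N[R]]$ with exact end bags $N(L)$ and $N(R)$, which reduces to the containments $L \cap b_m = \emptyset$ and $N(L), N(R) \subseteq b_m$ established above.
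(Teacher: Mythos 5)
Your proof is correct and follows essentially the same route as the paper: the upper bound by concatenating the two partial decompositions around a single middle bag $S \cup N(L)$, justified by the asymmetric inclusion $N(R) \subseteq S \cup N(L)$, and the lower bound by splitting an optimal path decomposition at a bag $b_m \supseteq S$ and taking $L$ to be the vertices of $C$ forgotten strictly before $b_m$. One cosmetic point: the inclusion $N(R) \subseteq b_m$ is not really obtained by a ``symmetric argument'' (since $R$ is not defined as the set of vertices lying strictly to the right of $b_m$), but it follows immediately from your own preliminary observation $N(R) \subseteq S \cup N(L) \subseteq b_m$, so nothing is lost.
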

\begin{proof}
Let $L$ be a subset of $C$ realizing the minimum of the right-hand side member of the above equation. Denote by $p$ this minimum. 
Take a path decomposition $(P_L,{\cal X}_L)$ of $G[N[L]]$, 
of width at most $p$, having $N(L)$ as highest end-bag. Also take a path decomposition 
$(P_R,{\cal X}_R)$ of $G[N[R]]$, of width at most $p$, having $N(R)$ as lowest end-bag. Observe that $N(R) \subseteq S \cup N(L)$. Therefore, by adding a bag $S \cup N(L)$ between the root $N(L)$ of $(P_L,{\cal X}_L)$ and the leaf $N(R)$ of $(P_R,{\cal X}_R)$, we obtain a path decomposition $(P, {\cal X})$ of $G$, of width $p$.

Conversely, let $(P,{\cal X})$ be an optimal rooted path decomposition of $G$ and let $p$ be its width. 
Let $X_i$ be a bag containing $S$. Let $L=L_i$ be the set of vertices 
of $G$ which only appear in bags strictly below node $i$. Observe that $X_i$ contains $S \cup N(L_i)$, in particular $|S \cup N(L_i)| -1 \leq p$. 
We construct a path decomposition of $G[N[L]]$ by deleting all nodes strictly above $i$ and by removing, from the remaining bags, all vertices that do not belong to $N[L_i]$. Observe that 
all vertices of $N(L_i)$ are still in the bag $X_i$. This
implies that $\pw(G[N[L_i]]; N(L_i)) \leq p$.

Let now $R = C \setminus N[L_i]$. We start again from the path decomposition $(P,{\cal X})$ and we construct a path decomposition of $G[N[R]]$ with $N(R)$ as a leaf bag.
Since $R = C \setminus N[L]$, we have that $N(R) \cap C \subseteq N(L)$, in particular $N(R) \subseteq X_i$. 
Remove from $(P,{\cal X})$ all bags strictly below $i$ and add below $i$ a new bag $N(R)$. From the remaining bags, remove all vertices that are not in $G[N[R]]$. We obtain a path decomposition of $G[N[R]]$ whose leaf bag is $N(R)$, thus $\pw(G[N[R]]; N(R)) \leq p$.
\qed
\end{proof}

\begin{theorem}\label{th:vcc}
The \textsc{Pathwidth} problem can be solved in time $O^*(2^{k'})$, where $k'$ is the size of the minimum vertex cover of the complement of the input graph.
\end{theorem}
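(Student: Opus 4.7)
The plan is to combine the algorithmic content of Lemma~\ref{le:wcomp} with the structural decomposition formula of Lemma~\ref{le:glueing}. First I would invoke Proposition~\ref{pr:VC} on the complement graph $\overline{G}$ to obtain a minimum vertex cover $C$ of $\overline{G}$ in time $O^*(1.28^{k'})$; this also fixes the clique $S = V \setminus C$ in $G$ that is used by both preceding lemmata.

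Next, I would apply Lemma~\ref{le:wcomp} to build, in time $O^*(2^{k'})$, a table that stores the value $pw(G[N[L]]; N(L))$ for every subset $L \subseteq C$. With this table in hand, Lemma~\ref{le:glueing} reduces the \textsc{Pathwidth} computation to the identity
$$\pw(G) \;=\; \min_{L \subseteq C}\ \max\left\{\,pw(G[N[L]]; N(L)),\; pw(G[N[R]]; N(R)),\; |S \cup N(L)| - 1\,\right\},$$
with $R = C \setminus N[L]$. Because $N[L] \supseteq L$ but $R$ is defined by removing vertices of $C$, one always has $R \subseteq C$ as well, so both terms $pw(G[N[L]]; N(L))$ and $pw(G[N[R]]; N(R))$ can be retrieved from the same precomputed table in constant time. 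The last term $|S \cup N(L)| - 1$ is a polynomial-time computation per $L$. Iterating over the $2^{k'}$ choices of $L \subseteq C$ and returning the minimum of the maxima therefore yields $\pw(G)$ in total time $O^*(2^{k'})$.

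The main point to verify carefully is not the running time, which is immediate once the two lemmata are combined, but the fact that the table produced by Lemma~\ref{le:wcomp} really does index precisely the sets needed by Lemma~\ref{le:glueing}: both the left-end set $L$ and the right-end set $R$ of the gluing formula must be subsets of $C$, and this is exactly the regime in which Lemma~\ref{le:wcomp} is stated. I would also briefly note that the procedure can be turned into a constructive algorithm by recording, during the dynamic programming of Lemma~\ref{le:wcomp}, the vertex $u \in L$ that realizes the minimum, thereby reconstructing optimal path decompositions of $G[N[L^*]]$ and $G[N[R^*]]$ for the optimizer $L^*$, and then gluing them via a central bag $S \cup N(L^*)$ as in the proof of Lemma~\ref{le:glueing}.
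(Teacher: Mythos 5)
Your proposal is correct and follows essentially the same route as the paper: precompute the table of values $pw(G[N[L]]; N(L))$ for all $L \subseteq C$ via Lemma~\ref{le:wcomp}, then minimize the gluing formula of Lemma~\ref{le:glueing} over all $L \subseteq C$, observing that $R = C \setminus N[L] \subseteq C$ so both lookups hit the same table. The only additions beyond the paper's proof are the explicit invocation of Proposition~\ref{pr:VC} to find $C$ and the remark on constructibility, both of which are harmless.
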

\begin{proof}
We compute, using Lemma~\ref{le:wcomp}, all values $\pw(G[N[L]]; N(L))$, for each $L \subseteq C$. These values are stored in a table of size $O^*(2^{k'})$ under the label $L$. Then, for each such $L$, we let $R = C\setminus N[R]$ and we take the maximum of $\pw(G[N[L]]; N(L))$, $\pw(G[N[R]]; N(R))$ and $|S \cup N(L)| - 1$. By Lemma~\ref{le:glueing}, the minimum of all these values is the pathwidth of $G$.
\qed
\end{proof}

\end{document}